\numberwithin{equation}{section}
\definecolor{LorColor}{RGB}{230, 74, 129}
\definecolor{PaColor}{RGB}{255,0,0}
\definecolor{NiColor}{RGB}{77,77,255}
\definecolor{ClaColor}{RGB}{0,0,0}
\definecolor{NiColoRed}{RGB}{255,77,77}
\definecolor{NiCitation}{RGB}{0,181,26}
\definecolor{AlColor}{RGB}{77,0,154}
\newcommand{\clacomment}[1]{\begingroup\color{ClaColor}#1\endgroup}
\newtheoremstyle{TheoremStyle}% <name>
{3pt}% <Space above>
{3pt}% <Space below>
{\slshape}% <Body font>
{}% <Indent amount>
{\bf}%{\itshape}% <Theorem head font>
{:}% <Punctuation after theorem head>
{.5em}% <Space after theorem head>
{}% <Theorem head spec (can be left empty, meaning 'normal')>
\theoremstyle{TheoremStyle}
\newtheorem{theorem}{Theorem}[section]
\newtheorem{corollary}[theorem]{Corollary}
\newtheorem{proposition}[theorem]{Proposition}
\newtheorem{definition}[theorem]{Definition}
\newtheorem{remark}[theorem]{Remark}
\newtheorem{example}[theorem]{Example} %[theorem]{Example}
\newcommand{\fish}{\begin{tikzpicture}[thick,scale=1.2]
		\draw[red] (0,0) edge [out=30,in=-30] node[above] {} (0,.35);
		\draw (0,0) edge [out=150,in=210] node[above] {} (0,.35);
\end{tikzpicture}}
\newcommand{\fiammifero}{\begin{tikzpicture}[thick,scale=1.2]
		\draw (0,0) -- (0,0.3);
		\filldraw (0,0.3)circle (1pt);
\end{tikzpicture}}
\newcommand{\propagatore}{\begin{tikzpicture}[thick,scale=1.2]
		\draw (0,0) -- (0,0.3);
\end{tikzpicture}}
\newcommand{\fiammiferoCC}{\begin{tikzpicture}[thick,scale=1.2]
		\draw[red] (0,0) -- (0,0.3);
		\filldraw[red] (0,0.3)circle (1pt);
\end{tikzpicture}}
\newcommand{\propagatoreCC}{\begin{tikzpicture}[thick,scale=1.2]
		\draw[red] (0,0) -- (0,0.3);
\end{tikzpicture}}
\title{An Algebraic and Microlocal Approach to the Stochastic Non-linear Schr\"odinger Equation}
\author{
 Alberto Bonicelli\thanks{AB:
			Dipartimento di Fisica,
		Universit\`a degli Studi di Pavia \& INFN, Sezione di Pavia, 
		Via Bassi 6,
		I-27100 Pavia,
		Italia;
		\mbox{alberto.bonicelli01@universitadipavia.it}
	}
	\and
	Claudio Dappiaggi\footnote{corresponding author}\;\;\thanks{CD:
		Dipartimento di Fisica,
		Universit\`a degli Studi di Pavia \& INFN, Sezione di Pavia, 
		Via Bassi 6,
		I-27100 Pavia,
		Italia;
		\mbox{claudio.dappiaggi@unipv.it}}
\and
Paolo Rinaldi \thanks{PR: Institute for Applied Mathematics, Universit\"at Bonn, 
	Endenicher Allee 60,
	D-53115 Bonn,
	Germany;
	\mbox{rinaldi@iam.uni-bonn.de}
}}
\date{\today}
\begin{document}
\maketitle
\begin{abstract}
	In a recent work \cite{Dappiaggi:2020gge}, it has been developed a novel framework aimed at studying at a perturbative level a large class of non-linear, scalar, real, stochastic PDEs and inspired by the algebraic approach to quantum field theory. The main advantage is the possibility of computing the expectation value and the correlation functions of the underlying solutions accounting for renormalization intrinsically and without resorting to any specific regularization scheme. In this work we prove that it is possible to extend the range of applicability of this framework to cover also the stochastic non-linear Schr\"odinger equation in which randomness is codified by an additive, Gaussian, complex white noise.
\end{abstract}

\paragraph*{Keywords:}
\small{Stochastic Non-linear Schr\"odinger Equation, Algebraic Quantum Field Theory, Renormalization}
\paragraph*{MSC 2020:} 81T05, 60H17.

\tableofcontents

\section{Introduction}\label{Sec: introduction}

Stochastic partial differential equations (SPDEs) play a prominent r\^{o}le in modern analysis, probability theory and mathematical physics due to their effectiveness in modeling different phenomena, ranging from turbulence to interface dynamics. At a structural level, several progresses in the study of non linear problems have been made in the past few years, thanks to Hairer's work on the theory of regularity structures \cite{Hairer14, Hairer15} and to the theory of paracontrolled distributions \cite{Gubinelli} based on Bony paradifferential calculus \cite{Bony}.

Without entering into the details of these approaches, we stress that a common hurdle in all of them is the necessity of applying a renormalization scheme to cope with ill-defined product of distributions. In all these instances the common approach consists of introducing a suitable $\epsilon$-regularization scheme which makes manifest the pathological divergences in the limit $\epsilon\to 0^+$, see {\it e.g.} \cite{Hairer14}. This strategy is very much inspired by the standard approach to a similar class of problems which appears in theoretical physics and, more precisely, in quantum field theory on Minkowski spacetime, studied in momentum space. 

Despite relying on a specific renormalization scheme, all these approaches have been tremendously effective in developing the solution theory of non linear stochastic partial differential equations in presence of an additive white noise. At the same time not much attention has been devoted to computing explicitly the expectation value and the correlations of the underlying solutions, features which are of paramount relevance in the applications, especially when inspired by physics.

In view of these comments, in a recent paper, \cite{Dappiaggi:2020gge}, it has been developed a novel framework to analyze scalar, non linear SPDEs in presence of an additive white noise. The inspiration as well as the starting point for such work comes from the algebraic approach to quantum field theory, see \cite{BFDY15, Rejzner:2016hdj} for reviews. In a few words this is a specific setup which separates on one side observables, collecting them in a suitable unital $*$-algebra encoding specific structural properties ranging from dynamics, to causality and the canonical commutation relations. On the other side, one finds states, that is normalized, positive linear functionals on the underlying algebra, which allow to recover via the GNS theorem the standard probabilistic interpretation of quantum theories. 

Without entering into more details, far from the scope of this work, we stress that this approach, developed to be effective both in coordinate and in momentum space, has the key advantage of allowing an analysis of interacting theories within the realm of perturbation theory \cite{Brunetti:2009qc}. 
Most notably, renormalization plays an ubiquitous r\^{o}le and, following an approach \`a la Epstein-Glaser, this is codified intrinsically within this framework without resorting to any specific \textit{ad hoc} regularization scheme \cite{Brunetti-Fredenhagen-00}.

From a technical viewpoint the main ingredients in this successful approach are a combination of the algebraic structures at the heart of the perturbative series together with the microlocal properties both of the propagators ruling the linear part of the underlying equations of motion and of the two-point correlation function of the chosen state. 

In \cite{Dappiaggi:2020gge} it has been observed that the algebraic approach could as well be adapted also to analyze stochastic, scalar semi-linear SPDEs such as
\begin{align}\label{Eq: Generic equation}
E\widehat{\Phi}=\xi+F[\widehat\Phi].
\end{align}
Here $\widehat{\Phi}$ must be interpreted as a random distribution on the underlying manifold $M$, $\xi$ denotes the standard  Gaussian, real white noise centered at $0$ whose covariance is $\mathbb{E}[\xi(x)\xi(y)]=\delta(x-y)$. 
Furthermore $F:\mathbb{R}\to\mathbb{R}$ is a non-linear potential which can be considered for simplicity of polynomial type, while $E$ is a linear operator either of elliptic or of parabolic type -- see \cite{Dappiaggi:2020gge} for more details and comments.

Following \cite{Brunetti:2009qc} and inspired by the so-called functional formalism \cite{CDDR20, DDR20, Fredenhagen:2014lda}, we consider a  specific class of distributions with values in polynomial functionals over $C^\infty(M)$. 
While we refer a reader interested in more details to \cite{Dappiaggi:2020gge}, we sketch briefly the key aspects of this approach. 
The main ingredients are two distinguished elements
\begin{align*}
	\Phi(f;\varphi)=\int_M \varphi(x) f(x)\mu(x),\quad\boldsymbol{1}(f;\varphi)=\int_M f(x)\mu(x)
\end{align*}
where $\mu$ is a strictly positive density over $M$, $\varphi\in C^\infty(M)$ while $f\in C^\infty_0(M)$. These two functionals are employed as generators of a commutative algebra $\mathcal{A}$ whose composition is the pointwise product. The main rationale at the heart of \cite{Dappiaggi:2020gge} and inspired by the algebraic approach \cite{Brunetti:2009qc} is the following: The stochastic behavior codified by the white noise $\xi$ can be encoded in $\mathcal{A}$ by deforming its product setting for all  $\tau_1,\tau_2\in\mathcal{A}$
\begin{align}\label{Eq: formal delta-product}
	(\tau_1\cdot_Q \tau_2)(f;\varphi)
	&=\sum_{k\geq 0}\frac{1}{k!}[(\delta_{\mathrm{Diag}_2}\otimes Q^{\otimes k})\cdot(\tau_1^{(k)}\otimes \tau_2^{(k)})](f\otimes1_{1+2k};\varphi)
	=\sum_{k\geq 0}t_k(f\otimes1_{1+2k};\varphi)\,,
\end{align}
where $\tau_i^{(k)}$, $i=1,2$ indicate the $k$-th functional derivatives, while $Q=G\circ G^*$. Here $G$ ({\em resp.} $G^*$) is a fundamental solution associated to $E$ ({\em resp.} $E^*$, the formal adjoint of $E$), while $\circ$ indicates the composition of distributions. By a careful analysis of the singular structure of $G$ and in turn of $Q$ one can infer that the distributions $t_k$ are well-defined on $M^{2k+2}$ \clacomment{up to the total diagonal $\mathrm{Diag}_{2k+2}$ of $M^{2k+2}$, that is, $t_k(\cdot\,;\varphi)\in\mathcal{D}'(M^{2k+2}\setminus\mathrm{Diag}_{2k+2})$}.

Yet, adapting to the case in hand the results of \cite{Brunetti-Fredenhagen-00}, in \cite{Dappiaggi:2020gge} it has been proven that it is possible to extend $t_k$ to $\hat{t}_k\in\mathcal{D}^\prime(M^{2k+2})$. This renormalization procedure, when existent, might not be unique, but the ambiguities have been classified, giving ultimately a mathematical precise meaning to Equation \eqref{Eq: formal delta-product}. 
For a further applications of these techniques to the analysis of \textit{a priori} ill-defined product of distributions see \cite{DRS21}.

As a consequence one constructs a deformed algebra $\mathcal{A}_{\cdot Q}$ whose elements encompasses at the algebraic level the information brought by the white noise.
Without entering into many details, which are left to \cite{Dappiaggi:2020gge} for an interested reader, we limit ourselves at focusing our attention on a specific, yet instructive example. More precisely we highlight that, if one evaluates at the configuration $\varphi=0$ the product of two generators $\Phi$ of $\mathcal{A}_{\cdot Q}$, one obtains
$$\left(\Phi\cdot_Q\Phi\right)(f;0)=\widehat{Q\delta_{\mathrm{Diag}_2}}(f),$$
where $f\in C^\infty_0(M)$ and where $=\widehat{Q\delta_{\mathrm{Diag}_2}}$ indicates a renormalized version of the otherwise ill-defined composition between the operator $Q$  and $\delta_{\mathrm{Diag}_2}$. The latter is the standard bi-distribution lying in $\mathcal{D}^\prime(M\times M)$ such that $\delta_{Diag_2}(h)=\int_M\mu(x)\,h(x,x)$ for every $h\in C^\infty_0(M\times M)$.

% If we introduce the the stochastic convolution $\widehat{\varphi}=G\circledast\xi$, whose action on $f\in\mathcal{D}(\mathbb{R}^{d+1}),\varphi\in\mathcal{E}(\mathbb{R}^{d+1})$ is $(G\circledast u)(f;\eta,\eta^\prime)\doteq u(G\circledast f;\eta,\eta^\prime),$

A direct inspection shows that we have obtained the expectation value
$\mathbb{E}(\widehat{\varphi}^2(x))$ of the random field $\widehat{\varphi}^2(x)$, where $\widehat{\varphi}\vcentcolon=G\ast\xi$ is the so-called \emph{stochastic convolution}, which is a solution of Equation \ref{Eq: Generic equation} when $F=0$ and for vanishing initial conditions.  With a similar procedure one can realize that $\mathcal{A}_{\cdot Q}$ encompasses the renormalized expectation values of all finite products of the underlying random field. 

Without entering into further unnecessary details, we stress that an additional extension of the deformation procedure outlined above allows also to identify another algebra whose elements encompass at the algebraic level the information on the correlations between the underlying random fields. All these data have been applied in \cite{Dappiaggi:2020gge} to the $\Phi^3_d$ model. This has been considered as a prototypical case of a nonlinear SPDE and it has been analyzed at a perturbative level constructing both the solutions and their two-point correlations. Most notably renormalization and its associated freedoms have been intrinsically encoded yielding order by order in perturbation theory a renormalized equations which takes them automatically into account. It is worth stressing that, although an analysis of the convergence of the perturbative series is still not within our grasp, a notable advantage of the approach introduced in \cite{Dappiaggi:2020gge} lies in its applicability to a vast class of interactions, including all polynomial ones. On the contrary the existence and uniqueness results for an SPDE based either on the theory of regularity structures or on that of paracontrolled distributions require that the underlying model lies in the so-called subcritical regime. Without entering in the technical details, it entails that it is finite the number of distributions which need to be renormalized. This requirement does not apply to the approach used in this paper, hence allowing an investigation of specific models which cannot otherwise be discussed.

In view of the novelty of \cite{Dappiaggi:2020gge}, several questions are left open and goal of this paper is to address and solve a specific one. Most notably the class of SPDEs considered in the above reference contains only scalar distributions and consistently a real Gaussian white noise. Yet, there exists several models not falling within this class and a notable one, at the heart of this paper, goes under the name of stochastic nonlinear Schr\"odinger equation. 

The reasons to consider this model are manifold. From a physical viewpoint, such specific equation is used in the analysis of several relevant physical phenomena ranging from Bose-Einstein condensates to type II superconductivity when coupled to an external magnetic field \cite{De Dominicis, Stoof, Sasik}. From a mathematical perspective, such class of equations is particularly relevant for its distinguished structural properties and it has been studied by several authors, see {\it e.g.} \cite{Bourgain, Hoshino-pre, Hoshino} for a list of those references which have been of inspiration to this work.

It is worth stressing that in many instances the attention has been given to the existence and uniqueness of the solutions rather than in their explicit construction or in the characterization of the mean and of the correlation functions. It is therefore natural to wonder whether the algebraic approach introduced in \cite{Dappiaggi:2020gge} can be adapted also to this scenario. A close investigation of the system in hand, see Section \ref{Sec: SNLS} for the relevant definitions, unveils that this is not a straightforward transition and a close investigation is necessary. The main reason can be ascribed to the presence of an additive {\em complex} Gaussian white noise, see Equations \eqref{Eq: complex white noise} and \eqref{Eq: complex white noise - 2} which entails that the only non vanishing correlations are those between the white noise and its complex conjugate.
This property has severe consequences, most notably the necessity of modifying significantly the algebraic structure at the heart of \cite{Dappiaggi:2020gge}. 

\clacomment{A further remarkable difference with respect to the cases considered in \cite{Dappiaggi:2020gge} has to be ascribed to the singular structure of the fundamental solution of the Schr\"odinger operator. Without entering here into the technical details, we observe that especially the Epstein-Glaser renormalization procedure bears the consequences of this feature, since this calls for a thorough study of the scaling degree of all relevant integral kernels with respect to the submanifold  $\Lambda^{2k+2}_t=\{(\widehat{t}_{2k+2},\widehat{x}_{2k+2})\,\vert\,t_1=\ldots=t_{2k+2}\}\subset M^{2k+2}$ rather than with respect to the total diagonal $\mathrm{Diag}_{2k+2}$ of $M^{2k+2}$, as in \cite{Dappiaggi:2020gge}.

As we shall discuss in Section \ref{Sec: graph}, the distinguished r\^{o}le of this richer singular structure emerges in the analysis of the subcritical regime which occurs only if the space dimension is $d=1$, contrary to what occurs in many parabolic models as highlighted in \cite{Dappiaggi:2020gge}}. 

The goal of this work will be to reformulate the algebraic approach to SPDEs for the case on a nonlinear Schr\"odinger equation and for simplicity of the exposition we shall focus our attention only to the case of $\mathbb{R}^{d+1}$ as an underlying manifold. Although the extension to more general curved backgrounds \cite{Dappiaggi:2020gge, RS21} is possible with a few minor modifications, we feel that this might lead us astray from the main goal of showing the versatility of the algebraic approach and thus we consider only the scenario which is more of interest in the concrete applications. 

In the next subsection we discuss in detail the specific model that we study in this paper, but, prior to that, we conclude the introduction with a short synopsis of this work. In Chapter \ref{Sec: Preliminaries} we introduce the main algebraic and analytic ingredients necessary in this paper. In particular we discuss the notion of functional-valued distributions, adapting it to the case of an underlying complex valued partial differential equation. In addition we show how to construct a commutative algebra of functionals using the pointwise product and we discuss the microlocal properties of its elements. We stress that we refer to \cite{Hormander-I-03} for the basic notions concerning the wavefront set and the connected operations between distributions, while we rely on \cite[App. B]{Dappiaggi:2020gge} for a concise summary of the scaling degree of a distribution and of its main properties. In Section \ref{Sec: expectations} we prove the existence of $\mathcal{A}^{\mathbb{C}}_{\cdot Q}$ a deformation of the algebra identified in the previous analysis and we highlight how it can codify the information of a complex white noise. Renormalization is a necessary tool in this construction and it plays a distinguished r\^{o}le in Theorem \ref{Thm: deformation map cdot} which is one of the main results of this work. In Section \ref{Sec: correlations} we extend the analysis first identifying a non-local algebra constructed out of $\mathcal{A}^{\mathbb{C}}_{\cdot Q}$ and then deforming its product so to be able to compute multi-local correlation functions of the underlying random field. At last, in Section \ref{Sec: perturbative analysis} we focus our attention on the stochastic nonlinear Schr\"odinger equation. First we show how to construct at a perturbative level the solutions using the functional formalism. In particular we show that, at each order in perturbation theory, the expectation value of the solution vanishes and we compute up to first order the two-point correlation function. As last step, we employ a diagrammatic argument to discuss under which constraint on the dimension of the underlying spacetime, the perturbative analysis to all orders has to cope with a finite number of divergences, to be tamed by means of renormalization. In this respect we greatly improve a similar procedure proposed in \cite[Sec. 6.3]{Dappiaggi:2020gge}. 

\subsection{The Stochastic Nonlinear Schr\"odinger Equation}\label{Sec: SNLS}

In this short subsection, we introduce the main object of our investigation, namely the {\em stochastic nonlinear Schr\"odinger equation} on $\mathbb{R}^{d+1}$, $d\geq 1$.  In addition, throughout this paper, we assume that the reader is familiar with the basic concepts of microlocal analysis, see {\it e.g.} \cite[Ch. 8]{Hormander-I-03}, as well as with the notion of scaling degree, see in particular \cite{Brunetti-Fredenhagen-00} and \cite[App. B]{Dappiaggi:2020gge}.

More precisely, with a slight abuse of notation, by $\mathcal{D}^\prime(\mathbb{R}^d)$ we denote the collection of all random distributions and, inspired by \cite{Hoshino} we are interested in $\psi\in\mathcal{D}^\prime(\mathbb{R}^d)$ such that
\begin{equation}\label{Eq: Stochastic NLS}
	i\partial_t\psi = -\Delta\psi +\lambda|\psi|^{2\kappa}\psi+\xi,
\end{equation}
where $\lambda\in\mathbb{R}$, $\Delta$ is the Laplace operator on the Euclidean space $\mathbb{R}^d$, while $t$ plays the r\^{o}le of the time coordinate along $\mathbb{R}$. The parameter $\kappa\in\mathbb{N}$ controls the nonlinear behaviour of the equation and it is here left arbitrary since, in our approach, it is not necessary to fix a specific value, although a reader should bear in mind that, in almost all concrete models, {\it e.g.} Bose-Einstein condensates or the Ginzburg-Landau theory of type II superconductors, $\kappa=1$. At the same time we remark that the framework that we develop allows to consider a more general class of potentials in Equation \eqref{Eq: Stochastic NLS}, but we refrain from moving in this direction so to keep a closer contact with models concretely used in physical applications. For this reason, unless stated otherwise, henceforth we set $\kappa=1$ in Equation \eqref{Eq: Stochastic NLS}. The stochastic character of this equation is codified in $\xi$, which is a complex, additive Gaussian random distribution, fully characterized by its mean and covariance:
\begin{gather}\label{Eq: complex white noise}
	\mathbb{E}[\xi(f)]=\mathbb{E}[\overline{\xi}(f)]=0,\\
	\mathbb{E}(\xi(f)\xi(h))=\mathbb{E}(\overline{\xi}(f)\overline{\xi}(h))=0,\quad\mathbb{E}(\overline{\xi}(f)\xi(h))=(\overline{f},h)_{L^2},\label{Eq: complex white noise - 2}
\end{gather}
where $f,h\in\mathcal{D}(\mathbb{R}^{d+1})$, while $(,)_{L^2}$ represents the standard inner product in $L^2(\mathbb{R}^{d+1})$. The symbol $\mathbb{E}$ stands for the expectation value.

\begin{remark}
	We stress that we have chosen to work with a Gaussian white noise centered at $0$ only for convenience and without loss of generality. If necessary and mutatis mutandis we can consider a shifted white noise, i.e. $\xi$ is such that the covariance is left unchanged from Equation \eqref{Eq: complex white noise} while
	$$\mathbb{E}[\xi(f)]=\varphi(f)\quad\textrm{and}\quad\mathbb{E}[\overline{\xi}(f)]=\overline{\varphi}(f),$$
	where $\varphi\in\mathcal{E}(\mathbb{R}^{d+1})$, while $\varphi(f)\doteq\int\limits_{\mathbb{R}^{d+1}}dx\,\varphi(x)f(x)$.
\end{remark}

To conclude the section, we observe that the linear part of Equation \eqref{Eq: Stochastic NLS} is ruled by the Schr\"odinger operator $L\doteq i\partial_t+\Delta$, which is a formally self-adjoint operator. For later convenience we introduce the fundamental solution $G\in\mathcal{D}^\prime(\mathbb{R}^{d+1}\times\mathbb{R}^{d+1})$ whose integral kernel reads
\begin{equation}\label{Eq: Integral Kernel}
	G(x,y)=\frac{\Theta(t-t^\prime)}{(4\pi i (t-t^\prime))^{\frac{d}{2}}}e^{-\frac{|\underline{x}-\underline{y}|^2}{4i(t-t^\prime)}},
\end{equation}
where $x=(t,\underline{x})$, $y=(t^\prime,\underline{y})$ while $\Theta$ is the Heaviside function.
\begin{remark}\label{Rem: la vita e' dura}
	Observe that, in comparison to the cases considered in \cite{Dappiaggi:2020gge}, $L$ is not a microhypoelliptic operator. We can estimate the singular structure of $G$ using the following standard microlocal techniques, see \cite{Hormander-I-03}. Since $G$ is a fundamental solution of the Schr\"odinger operator $L=i\partial_t+\Delta$, it holds that
	$$\mathrm{WF}(\delta_{{Diag}_2})\subseteq\mathrm{WF}(G)\subseteq\mathrm{WF}(\delta_{{Diag}_2})\cup\left(\mathrm{Char}(L\otimes\mathbb{I})\cap\mathrm{Char}(\mathbb{I}\otimes L)\right),$$
	where 
	\begin{equation}\label{Eq: Delta Diag_2}
		\delta_{Diag_2}\in\mathcal{D}^\prime(\mathbb{R}^{2(d+1)})\quad|\quad \mathcal{D}(\mathbb{R}^{2(d+1)})\ni f(x_1,x_2)\mapsto \delta_{Diag_2}(f)\doteq\int_{\mathbb{R}^{d+1}}dx_1 f(x_1,x_1)\,,
	\end{equation}
and where $\mathrm{Char}$ denotes the characteristic set of the operator. If we combine this information with the smoothness of Equation \eqref{Eq: Integral Kernel} when $t\neq t^\prime$, we can conclude that
\begin{equation}\label{Eq: form of the WF}
\mathrm{WF}(G)\subseteq\{(t,\underline{x},t,\underline{x},\omega,k,-\omega,-k)\in T^*(\mathbb{R}^{2(d+1)}\setminus\{0\})\}\cup\{(t,\underline{x},t,\underline{y},\omega,0,-\omega,0)\in T^*(\mathbb{R}^{2(d+1)}\setminus\{0\})\}.
\end{equation}
Observe that the first set in Equation \eqref{Eq: form of the WF} is the wavefront set of $\delta_{{Diag}_2}$, while the second one accounts for the contribution of the characteristic set of $L$. It is noteworthy that, since $G(x,y)$ in Equation \eqref{Eq: Integral Kernel} is manifestly singular as $x\neq y$ and $t=t^\prime$, this second contribution to Equation \eqref{Eq: form of the WF} cannot be empty. 

This entails that, being the wavefront set a conical subset, there exist only three possible outcomes: {\em 1)} $\omega>0$, {\em 2)} $\omega<0$ or {\em 3)} $  \omega\in\mathbb{R}\setminus\{0\}$. Alas, to the best of our knowledge, an exact evaluation of the wave front set of $G$ is not present in the literature and it is highly elusive to a direct calculation. From our viewpoint the reason lies in the interplay between the Heaviside function and the oscillatory kernel $$K(x,y)=\frac{e^{-\frac{|\underline{x}-\underline{y}|^2}{4i(t-t^\prime)}}}{(4\pi i (t-t^\prime))^{\frac{d}{2}}}.$$
As a matter of fact, since $\widehat{K}(\omega,k)=\delta(\omega-|k|^2)$ and since $(L\otimes\mathbb{I})K=(\mathbb{I}\otimes L)K=0$, using \cite[Thm. IX.44]{Reed-Simon75}, it descends that 
$$\mathrm{WF}(K)=\{(t,\underline{x},t,\underline{y},\omega,0,-\omega,0\in T^*(\mathbb{R}^{2d+1})\setminus\{0\},\;|\;\omega >0)\}.$$
Here we have used the convention that, for all $\phi\in L^1(\mathbb{R}^d)$, $\widehat{\phi}(k)=\int\limits_{\mathbb{R}^d}dx\,e^{ik\cdot x}\phi(x)$. Hence one cannot use H\"ormander criterion for multiplication of distributions to define $G$ as a product between the Heaviside distribution and $K$. This is the main reason which makes the evaluation of the wavefront set of $G$ rather elusive.

In view of the estimate in Equation \eqref{Eq: form of the WF}, we will consider the worst case scenario, namely the third option above, where $\omega\in\mathbb{R}\setminus\{0\}$. For the sake of the analytical constructions required in this paper, this assumption is not restrictive because we are interested in working with products of distributions of the form $G\cdot G^*$, where $G^*$ is the fundamental solution of the formal adjoint of $L$, namely $i\partial_t+\Delta$. The definition of such product requires renormalization and this distinguished feature is independent from the form of the second component of the right hand side of Equation \eqref{Eq: form of the WF}.
\end{remark}

\begin{remark}[{\bf Notation}]\label{Rem: Cut-Off on G}
	In our analysis, we shall be forced to introduce a cut-off to avoid infrared divergences, namely we consider a real valued function $\chi\in\mathcal{D}(\mathbb{R}^{d+1})$ and we define 
	$$G_\chi\doteq G\cdot(1\otimes \chi).$$
	We stress that this modification does not change the microlocal structure of the underlying distribution, {\it i.e.} $\mathrm{WF}(G_\chi)=\mathrm{WF}(G)$. Therefore, since the cut-off is ubiquitous in our work and in order to avoid the continuous use of the subscript $\chi$, with a slight abuse of notation we will employ only the symbol $G$, leaving $\chi$ understood.
\end{remark}

\section{Analytic and Algebraic Preliminaries}\label{Sec: Preliminaries}

In this section we introduce the key analytic and algebraic tools which are used in this work, also fixing notation and conventions. As mentioned in the introduction, our main goal is to extend and to adapt the algebraic and microlocal approach to a perturbative analysis of stochastic partial differential equations (SPDEs), so to be able to discuss specific models encoding complex random distributions. In particular our main target is the stochastic nonlinear Schr\"odinger equation as in Equation \eqref{Eq: Stochastic NLS} and, for this reason we shall adapt all the following definitions and structures to this case, commenting when necessary on the extension to more general scenarios.

\noindent The whole algebraic and microlocal programme is based on the concept of functional-valued distribution which is here spelt out. We recall that with $\mathcal{E}(\mathbb{R}^d)$ ({\em resp.} $\mathcal{D}(\mathbb{R}^d)$) we indicate the space of smooth ({\em resp.} smooth and compactly supported) complex valued functions on $\mathbb{R}^d$ endowed with their standard locally convex topology.

\begin{remark}\label{Rem: Key difference}
	As mentioned in the introduction, due to the presence of a complex white noise in Equation \eqref{Eq: Stochastic NLS}, we cannot apply slavishly the framework devised in \cite{Dappiaggi:2020gge} since it does not allow to account for the defining properties listed in Equation \eqref{Eq: complex white noise}. Heuristically, one way to bypass this hurdle consists of adopting a different viewpoint which is inspired by the analysis of complex valued quantum fields such as the Dirac spinors, see {\it e.g.} \cite{Araki:1971id}. More precisely, using the notation and nomenclature of Section \ref{Sec: SNLS}, we shall consider $\psi$ and $\bar{\psi}$ as two a priori independent fields, imposing only at the end the constraint that they are related by complex conjugation. As it will become clear in the following, this shift of perspective has the net advantage of allowing a simpler construction of the algebra deformations, see in particular Theorem \ref{Thm: deformation map cdot} avoiding any potential ordering problem between the field and its complex conjugate. This might arise if we do not keep them as independent.
\end{remark}

\begin{definition}\label{Def: Functionals}
	Let $d,m\in\mathbb{N}$. We call {\bf functional-valued distribution}  $u\in\mathcal{D}^\prime(\mathbb{R}^{d+1};\mathsf{Fun}_{\mathbb{C}})$ a map 
	\begin{equation}\label{Eq: Functional valued distribution}
		u:\mathcal{D}(\mathbb{R}^{d+1})\times\mathcal{E}(\mathbb{R}^{d+1})\times\mathcal{E}(\mathbb{R}^{d+1})\to\mathbb{C}\,,\quad (f,\eta,\eta^\prime)\mapsto u(f;\eta,\eta^\prime)\,,
	\end{equation}
which is linear in the first entry and continuous in the locally convex topology of $\mathcal{D}(\mathbb{R}^{d+1})\times\mathcal{E}(\mathbb{R}^{d+1})\times\mathcal{E}(\mathbb{R}^{d+1})$. 
In addition, we indicate the $(k,k^\prime)$-th order functional derivative as the distribution $u^{(k,k^\prime)}\in \mathcal{D}^\prime(\underbrace{\mathbb{R}^{d+1}\times\dots\times\mathbb{R}^{d+1}}_{k+k^\prime+1};\mathrm{Fun}_\mathbb{C})$ such that
\begin{gather}\label{Eq: Functional derivative}
	u^{(k,k^\prime)}(f\otimes\eta_1\otimes\dots\otimes\eta_k\otimes\eta^\prime_1\otimes\dots\otimes\eta^\prime_{k^\prime};\eta,\eta^\prime)\vcentcolon=\notag\\
	\vcentcolon=\left.\frac{\partial^{k+k^\prime}}{\partial s_1\dots\partial s_k\partial s^\prime_1\dots\partial s^\prime_{k^\prime}}u(f;s_1\eta_1+\dots+s_k\eta_k+\eta,s^\prime_1\eta^\prime_1\dots+s^\prime_{k^\prime}\eta_{k^\prime}+\eta^\prime))\right|_{s_1=\dots=s_k=s^\prime_1=\dots=s^\prime_{k^\prime}=0}\,.
\end{gather}
We say that a functional-valued distribution is {\bf polynomial} if $\exists\,(\bar{k},\bar{k}^\prime)\in\mathbb{N}_0\times\mathbb{N}_0$ with $\mathbb{N}_0=\mathbb{N}\cup\{0\}$ such that $u^{(k,k^\prime)}=0$ whenever at least one of these conditions holds true: $k\geq\bar{k}$ or $k^\prime\geq\bar{k}^\prime$. 
The collection of all these functionals is denoted by $\mathcal{D}^\prime(\mathbb{R}^{d+1};\mathsf{Pol}_{\mathbb{C}})$.
\end{definition}

Observe that the subscript $\mathbb{C}$ in $\mathcal{D}^\prime(\mathbb{R}^{d+1};\mathsf{Fun}_{\mathbb{C}})$ and in $\mathcal{D}^\prime(\mathbb{R}^{d+1};\mathsf{Pol}_{\mathbb{C}})$ is here introduced in contrast to the notation of \cite{Dappiaggi:2020gge} to highlight that, since we consider an SPDE with an additive complex white noise, we need to work with a different notion of functional-valued distribution. More precisely Equation \eqref{Eq: Functional valued distribution} codifies that, in addition to the test-function $f$, we need two independent configurations, $\eta,\eta^\prime$ in order to build a functional-valued distribution. The goal is to encode in this framework the information that we want to consider a priori as independent the random distribution $\psi$ and $\bar{\psi}$ in Equation \eqref{Eq: Stochastic NLS}. Their mutual relation via complex conjugation is codified only at the end of our analysis.

\noindent An immediate structural consequence of Definition \ref{Def: Functionals} can be encoded in the following corollary, whose proof is immediate and, therefore we omit it.

\begin{corollary}\label{Cor: Pointwise Algebra Structure}
	 The collection of polynomial functional-valued distributions can be endowed with the structure of a commutative $\mathbb{C}$-algebra such that for all $u,v\in\mathcal{D}^\prime(\mathbb{R}^{d+1};\mathsf{Pol}_{\mathbb{C}})$, and for all $f\in\mathcal{D}(\mathbb{R}^{d+1})$ and $\eta,\eta^\prime\in\mathcal{E}(\mathbb{R}^{d+1})$, 
	 \begin{equation}\label{Eq: Pointwise Product}
	 (uv)(f;\eta,\eta^\prime)=u(f;\eta,\eta^\prime)v(f;\eta,\eta^\prime)\,.
	 \end{equation}
\end{corollary}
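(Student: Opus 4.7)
This is a routine algebraic check and the plan is to verify in turn that $uv$ still belongs to $\mathcal{D}^\prime(\mathbb{R}^{d+1};\mathsf{Pol}_{\mathbb{C}})$ and then that the algebra axioms hold, with all the substance concentrated in the first point.

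The joint continuity of $uv$ in $(f,\eta,\eta^\prime)$ is immediate from the joint continuity of $u$ and $v$ together with the continuity of complex multiplication. The real task is the preservation of the polynomial character, and for this I would apply the elementary Leibniz rule to the parameter derivatives $\partial_{s_i},\partial_{s^\prime_j}$ appearing in \eqref{Eq: Functional derivative}: expanding $u(f;\varphi,\varphi^\prime)v(f;\varphi,\varphi^\prime)$ with $\varphi=\sum_i s_i\eta_i+\eta$ and $\varphi^\prime=\sum_j s^\prime_j\eta^\prime_j+\eta^\prime$, and differentiating at the origin, yields, schematically,
\begin{equation*}
(uv)^{(k,k^\prime)}=\sum_{\substack{a+b=k\\ c+d=k^\prime}}\binom{k}{a}\binom{k^\prime}{c}\,u^{(a,c)}\cdot v^{(b,d)},
\end{equation*}
where the $\eta_i$ and $\eta^\prime_j$ slots are distributed over the two factors in the usual combinatorial way. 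If $u^{(a,c)}$ vanishes whenever $a\geq\bar k_u$ or $c\geq\bar k^\prime_u$, and $v^{(b,d)}$ vanishes whenever $b\geq\bar k_v$ or $d\geq\bar k^\prime_v$, then every summand is zero as soon as $k\geq\bar k_u+\bar k_v-1$ or $k^\prime\geq\bar k^\prime_u+\bar k^\prime_v-1$, so $uv$ is polynomial with an explicit bidegree bound.

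The remaining axioms are completely straightforward: commutativity, associativity, distributivity over addition, and compatibility with the $\mathbb{C}$-scalar action all reduce pointwise in $(f,\eta,\eta^\prime)$ to the corresponding identities on $\mathbb{C}$, and each is a one-line verification. There is accordingly no genuine obstacle here; the only step deserving a moment of attention is the Leibniz-type identity displayed above, whose derivation is nothing more than a combinatorial bookkeeping over partitions of the multi-indices labelling the derivative slots. This is why the authors describe the proof as immediate and omit it.
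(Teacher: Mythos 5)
The paper gives no proof at all --- it declares the corollary immediate and omits it --- and your verification is precisely the routine check the authors had in mind; in particular your Leibniz bound $\bar k_{uv}=\bar k_u+\bar k_v-1$ (and likewise for the primed index) correctly accounts for the ``or'' in the polynomiality condition of Definition \ref{Def: Functionals}. The one point your write-up glosses over, and which a careful proof should acknowledge, is that the product as literally written in Equation \eqref{Eq: Pointwise Product} is quadratic rather than linear in the test function $f$, so taken at face value $uv$ would violate the linearity requirement of Definition \ref{Def: Functionals}; comparing with Example \ref{Ex: Basic Functionals}, where $\overline{\Phi}^p\Phi^q(f;\eta,\eta^\prime)=\int dx\,f(x)\eta^p(x)(\eta^\prime)^q(x)$ is \emph{not} the product of the evaluations $\Phi(f;\eta,\eta^\prime)$ and $\overline{\Phi}(f;\eta,\eta^\prime)$, one sees that the product is really meant pointwise only in the configurations $\eta,\eta^\prime$ (i.e.\ on the $\varphi$-dependent integral kernels), with the pairing against $f$ kept linear --- a defect of the statement inherited from the paper rather than of your argument, and one that does not affect your degree count, which only involves derivatives in the configuration slots.
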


A close inspection of Equation \eqref{Eq: Functional derivative} suggests the possibility of introducing a related notion of {\em directional derivative} of a functional $u\in\mathcal{D}^\prime(\mathbb{R}^{d+1};\mathsf{Fun}_{\mathbb{C}})$ by taking an arbitrary but fixed $\zeta\in\mathcal{E}(\mathbb{R}^{d+1})$ and setting for all $(f,\eta)\in\mathcal{D}(\mathbb{R}^{d+1})$,
\begin{equation}\label{Eq: Directional derivative}
\delta_\zeta u(f;\eta,\eta^\prime)\doteq u^{(1,0)}(f\otimes\zeta;\eta,\eta^\prime)\quad\textrm{and}\quad\overline{\delta}_\zeta u(f;\eta,\eta^\prime)\doteq u^{(0,1)}(f\otimes\zeta;\eta,\eta^\prime)\,.
\end{equation}

In order to make Definition \ref{Def: Functionals} more concrete, we list a few basic examples of polynomial functional-valued distributions, which shall play a key r\^{o}le in our investigation, particularly in the construction of the algebraic structures at the heart of our approach.

\begin{example}\label{Ex: Basic Functionals}
For any $f\in\mathcal{D}(\mathbb{R}^{d+1})$ and $\eta,\eta^\prime\in\mathcal{E}(\mathbb{R}^{d+1})$, we call
\begin{align*}
\mathbf{1}(f;\eta,\eta^\prime)\doteq\int_{\mathbb{R}^{d+1}} dx\,f(x),\qquad\Phi(f;\eta,\eta^\prime)=\int_{\mathbb{R}^{d+1}} dx\,f(x)\eta(x),\qquad\overline{\Phi}(f;\eta,\eta^\prime)=\int_{\mathbb{R}^{d+1}} dx\,f(x)\eta^\prime(x)\,,
\end{align*}
where $dx$ is the standard Lesbegue measure on $\mathbb{R}^{d+1}$. Notice that $\Phi$ and $\overline{\Phi}$ are two independently defined functionals and a priori they are not related by complex conjugation as the symbols might suggest. This difference originates from our desire to follow an approach similar to the one often used in quantum field theory, see {\it e.g.} \cite{Araki:1971id}, when analyzing the quantization of spinors. In this case it is convenient to consider a field and its complex conjugate as a priori independent building blocks, since this makes easier the implementation of the canonical anticommutation relations. Here we wish to follow a similar rationale and we decided to keep the symbols $\Phi$ and $\overline{\Phi}$ as a memento that, at the end of the analysis, one has to restore their mutual relation codified by complex conjugation. 
 
In addition, we can construct composite functionals, namely for any $p,q\in\mathbb{N}$, $f\in\mathcal{D}(\mathbb{R}^{d+1})$ and $\eta,\eta^\prime\in\mathcal{E}(\mathbb{R}^{d+1})$, we call 
\begin{align}\label{Eq: generic functional product}
\overline{\Phi}^p\Phi^q(f,\eta,\eta^\prime)=(\Phi^q\overline{\Phi}^p)(f,\eta,\eta^\prime)=\int_{\mathbb{R}^{d+1}} dx\,f(x)\eta^p(x)(\eta^\prime)^q(x)\,.
\end{align}
Observe that, for convenience, we adopt the notation $|\Phi|^{2k}\equiv\Phi^k\overline{\Phi}^k=\overline{\Phi}^k\Phi^k$.

One can readily infer that, for all $p,q\in\mathbb{N}$, $\mathbf{1},\Phi,\overline{\Phi},\overline{\Phi}^p\Phi^q\in\mathcal{D}^\prime(\mathbb{R}^{d+1};\mathsf{Pol}_{\mathbb{C}})$. 
On the one hand all functional derivatives of $\mathbf{1}$ vanish, while, on the other hand 
\begin{align*}
\Phi^{(1,0)}(f\otimes\eta_1;\eta,\eta^\prime)=\overline{\Phi}^{(0,1)}(f\otimes\eta_1;\eta,\eta^\prime)=\int_{\mathbb{R}^{d+1}}dx\,f(x)\eta_1(x)\,,
\end{align*}
whereas $\Phi^{(k,k^\prime)}=0$ for all $k^\prime\neq 0$ or if $k\geq 1$. 
A similar conclusion can be drawn for $\overline{\Phi}$ and for $\overline{\Phi}^p\Phi^q$.
\end{example}

In the spirit of a perturbative analysis of Equation \eqref{Eq: Stochastic NLS} the next step consists of encoding in the polynomial functionals the information that the linear part of the dynamics is ruled by the Schr\"odinger operator $L=i\partial_t+\Delta$. 
To this end, let $u\in\mathcal{D}^\prime(\mathbb{R}^{d+1};\mathsf{Fun}_{\mathbb{C}})$ and let $G$ be the fundamental solution of $L$ whose integral kernel is as per Equation \eqref{Eq: Integral Kernel}. 
Then, for all $\eta,\eta^\prime\in\mathcal{E}(\mathbb{R}^{d+1})$ and for all $f\in\mathcal{D}(\mathbb{R}^{d+1})$, 
\begin{equation}\label{Eq: Action of G}
	(G\circledast u)(f;\eta,\eta^\prime)\doteq u(G\circledast f;\eta,\eta^\prime),
\end{equation}
where, in view of Remark \ref{Rem: la vita e' dura} and of \cite[Th. 8.2.12]{Hormander-I-03} $G\circledast f\in C^\infty(\mathbb{R}^{d+1})$ is such that, for all $h\in\mathcal{D}(\mathbb{R}^{d+1})$, $(G\circledast f)(h)\doteq G(f\otimes h)$. 

We can now collect all the ingredients introduced, building a distinguished commutative algebra. 
We proceed in steps, adapting to the case in hand the procedure outlined in \cite{Dappiaggi:2020gge}. 
As a starting point, we introduce 
\begin{equation}\label{Eq: Pointwise algebra - A_0}
	\mathcal{A}^{\mathbb{C}}_0\doteq\mathcal{E}[\mathbf{1},\Phi,\overline{\Phi}],
\end{equation}
that is the polynomial ring on $\mathcal{E}(\mathbb{R}^{d+1})$ whose generators are the functionals $\mathbf{1},\Phi,\overline{\Phi}$ defined in Example \ref{Ex: Basic Functionals}.
The algebra product is the pointwise one introduced in Corollary \ref{Cor: Pointwise Algebra Structure}. 
Subsequently we encode the action of the fundamental solution of the Schr\"odinger operator as
\begin{subequations}
	\begin{equation}\label{Eq: Pointwise algebra - GA_0}
		G\circledast\mathcal{A}^{\mathbb{C}}_0\doteq\{u\in\mathcal{D}^\prime(\mathbb{R}^{d+1};\mathsf{Fun}_{\mathbb{C}})\;|\;u=G\circledast v,\;\textrm{with}\; v\in\mathcal{A}^{\mathbb{C}}_0\},
	\end{equation}
	\begin{equation}\label{Eq: Pointwise algebra - ccGA_0}
		\overline{G}\circledast\mathcal{A}^{\mathbb{C}}_0\doteq\{u\in\mathcal{D}^\prime(\mathbb{R}^{d+1};\mathsf{Fun}_{\mathbb{C}})\;|\;u=\overline{G}\circledast v,\;\textrm{with}\; v\in\mathcal{A}^{\mathbb{C}}_0\},
	\end{equation}
\end{subequations}
where the action of $G$ is defined in Equation \eqref{Eq: Action of G}. 
Here $\overline{G}$ stands for the complex conjugate of the fundamental solution $G$ as in Equation \eqref{Eq: Integral Kernel}. 
Its action on a function is defined in complete analogy with Equation \eqref{Eq: Action of G}.
In order to account for the possibility of applying to our functionals more than once the fundamental solution $G$, as well as $\overline{G}$, we proceed inductively defining, for every $j\geq 1$
\begin{equation}\label{Eq: Pointwise algebra - A_j}
	\mathcal{A}^{\mathbb{C}}_j=\mathcal{E}[\mathcal{A}^{\mathbb{C}}_{j-1}\cup G\circledast\mathcal{A}^{\mathbb{C}}_{j-1}\cup \overline{G}\circledast\mathcal{A}^{\mathbb{C}}_{j-1}].
\end{equation}  
Since $\mathcal{A}^{\mathbb{C}}_{j-1}\subset\mathcal{A}^{\mathbb{C}}_{j}$ for all $j\geq 1$, the following definition is natural.

\begin{definition}\label{Def: Pointwise Algebra}
	Let $\mathcal{A}^{\mathbb{C}}_j$, $j\geq 0$, be the rings as per Equation \eqref{Eq: Pointwise algebra - A_0} and \eqref{Eq: Pointwise algebra - A_j}. 
	Then we call $\mathcal{A}^{\mathbb{C}}$ the unital, commutative $\mathbb{C}$-algebra obtained as the direct limit 
\begin{align*}	
	\mathcal{A}^{\mathbb{C}}\doteq\varinjlim\mathcal{A}^{\mathbb{C}}_j\,.
\end{align*}
\end{definition}

\begin{remark}\label{moduli2}
	For later convenience, it is important to realize that $\mathcal{A}^{\mathbb{C}}$ ({\em resp.} $\mathcal{A}^{\mathbb{C}}_j$, $j\geq 0$) can be regarded as a graded algebra over $\mathcal{E}(\mathbb{R}^{d+1})$, {\it i.e.}
	\begin{equation}
		\mathcal{A}^{\mathbb{C}}=\bigoplus_{m,l,m',l'\in\mathbb{N}_0}\mathcal{M}_{m,m',l,l'}, \hspace{2cm} \mathcal{A}^{\mathbb{C}}_j=\bigoplus_{m,l,m',l'\in\mathbb{N}_0}\mathcal{M}_{m,m',l,l'}^j\,.
	\end{equation}
	Here $\mathcal{M}_{m,m',l,l'}$ is the $\mathcal{E}(\mathbb{R}^{d+1})$-module generated by those elements in which the fundamental solutions $G$ and $\overline{G}$ act $l$ and $l'$ times, respectively while the overall polynomial degree in $\Phi$ is $m$ and the one in $\overline{\Phi}$ is $m'$. 
	The components of the decomposition of $\mathcal{A}^{\mathbb{C}}_{j}$ are defined as $\mathcal{M}_{m,m',l,l'}^j=\mathcal{M}_{m,m',l,l'}\cap\mathcal{A}^{\mathbb{C}}_j$.
	In the following a relevant r\^{o}le is played by the polynomial degree of an element lying in $\mathcal{A}^{\mathbb{C}}$, ignoring the occurrence of $G$ and $\overline{G}$.
 Therefore we introduce
	\begin{equation}
		\mathcal{M}_{m,m'}\doteq\bigoplus\limits_{\substack{l,l'\in\mathbb{N}_0 \\ p\leq m\\q\leq m'}}\mathcal{M}_{p,q,l,l'},\hspace{2cm} \mathcal{M}^j_{m,m'}\doteq\bigoplus\limits_{\substack{l,l'\in\mathbb{N}_0 \\ p\leq m\\q\leq m'}}\mathcal{M}_{p,q,l,l'}^j\,.
	\end{equation}
	Since it holds that $\mathcal{M}_k\subseteq\mathcal{M}_{k+1}$ for all $k\geq0$, the direct limit is well defined and it holds:
	\begin{equation}
		\mathcal{A}^{\mathbb{C}}=\lim_{m,m'\rightarrow\infty}\mathcal{M}_{m,m'}\,.
	\end{equation}
\end{remark}

\noindent To conclude the section, we establish an estimate on the wavefront set of the derivatives of the functionals lying in $\mathcal{A}^{\mathbb{C}}$, see Definition \ref{Def: Pointwise Algebra}. 
To this end, we fix the necessary preliminary notation, namely, for any $k\in\mathbb{N}$ we set 
\begin{equation}\label{Eq: Delta Diag}
	\delta_{Diag_k}\in\mathcal{D}^\prime(\mathbb{R}^{(d+1)k})\quad|\quad \mathcal{D}(\mathbb{R}^{(d+1)k})\ni f(x_1,\cdots,x_k)\mapsto \delta_{Diag_k}(f)\doteq\int_{\mathbb{R}^{d+1}}dx_1 f(x_1,\cdots,x_1)\,,
\end{equation}
where $x_1,\cdots,x_k\in\mathbb{R}^{d+1}$.

\begin{definition}\label{Def: Cm set}
	For any but fixed $m\in\mathbb{N}$, let us consider any arbitrary partition of the set $\{1,\ldots,m\}$ into the  disjoint union of $p$ non-empty subsets $I_1\uplus\ldots\uplus I_p$, $p$ being arbitrary. 
	Employing the notations  $\widehat{x}_m=(x_1,\ldots,x_m)\in \mathbb{R}^{(d+1)m}$ and $\widehat{t}_{m}=(t_1,\ldots,t_m)\in\mathbb{R}^m$, as well as their counterpart at the level of covectors, we set
	\clacomment{
	\begin{equation}\label{C_m}
		\begin{split}
			C_m:=&\{(\widehat{t}_m, \widehat{x}_m,\widehat{\omega}_m,\widehat{k}_m)\in T^\ast\mathbb{R}^{(d+1)m}\setminus\{0\}\,\vert\,\exists\,l\in\{1,\ldots,m-1\},\\
			&\{1,\ldots,m\}=I_1\uplus\ldots\uplus I_l\text{ such that }\forall i\neq j,\\
			&\forall (a,b)\in I_i\times I_j,\text{ then }t_a\neq t_b\text{ and }\forall j\in\{1,\ldots, l\},\\
			&t_n=t_m\,\forall n,m\in I_j\text{ and }\sum_{m\in I_j}\omega_m=0,\sum_{m\in I_j}k_m=0\}\,,
		\end{split}
	\end{equation}}
where $\vert I_j\vert$ denotes the cardinality of the set $I_j$.
Accordingly we call
	\begin{equation}\label{Eq: Condition_C}
			\mathcal{D}'_C(\mathbb{R}^{(d+1)};\mathsf{Pol}_{\mathbb{C}})\doteq\{u\in\mathcal{D}'(\mathbb{R}^{(d+1)};\mathsf{Pol}_{\mathbb{C}})\;\vert\;\mathrm{WF}(u^{(k,k^\prime)})\subseteq C_{k+k^\prime+1}\,,\;\forall\, k,k^\prime\geq 0\}\,.
	\end{equation}
\end{definition}

\begin{remark}\label{Rem: smooth distributions}
We observe that the elements of the space $\mathcal{D}'_C(\mathbb{R}^{(d+1)};\mathsf{Pol}_{\mathbb{C}})$ are distributions generated by smooth functions, as one can deduce from Equations \eqref{C_m} and \eqref{Eq: Condition_C} setting $k=k^\prime=0$.
\end{remark}

\begin{remark}\label{Rem: stability wrt G}
The space $\mathcal{D}'_C(\mathbb{R}^{(d+1)};\mathsf{Pol}_{\mathbb{C}})$ is stable with respect to the action of the fundamental solution $G$ and of its complex conjugate, more precisely 
\begin{align*}
G\circledast\tau,\,\overline{G}\circledast\tau\in\mathcal{D}'_C(\mathbb{R}^{(d+1)};\mathsf{Pol}_{\mathbb{C}})\,,\qquad\forall\,\tau\in\mathcal{D}'_C(\mathbb{R}^{(d+1)};\mathsf{Pol}_{\mathbb{C}})\,.
\end{align*}
The proof of this property follows the same lines of \cite[Lemma 2.14]{Dappiaggi:2020gge} and thus we omit it. \clacomment{We underline that the only difference concerns the different form of $\mathrm{WF}(G)$, which is nonetheless accounted for by the definition of the sets in Equation \eqref{C_m}.
 
Another important result pertains the estimate of the scaling degree of $G\circledast \tau$. Before dwelling into its calculation, we recall the definition of weighted scaling degree at a point: given $u\in\mathcal{D}'(\mathbb{R}^{d+1})$ and $f\in\mathcal{D}(\mathbb{R}^{d+1})$, consider the scaled function $f_\lambda(t,x)=\lambda^{-(d+2)}f(\lambda^{-2}t,\lambda^{-1}x)$ and $u_\lambda(f)=u(f_\lambda)$.
\begin{align*}
	\mathrm{sd}_{t_0,x_0}(u)=\inf_{\omega^\prime}\{\omega^\prime\in\mathbb{R}\,\vert\,\lim_{\lambda\rightarrow0}\lambda^{\omega^\prime}u_\lambda(t_0,x_0)=0\}.
\end{align*}
In this work we are interested in the scaling degree with respect to the hypersurface $\Lambda_t$, see \cite{Brunetti-Fredenhagen-00}. Under the aforementioned parabolic scaling, the fundamental solution $G$ behaves homogeneously and a direct computation shows that $\mathrm{wsd}_{\Lambda_t}(G)=\mathrm{wsd}_{\Lambda_t}(\overline{G})=d$.}
\clacomment{Having in mind the preceding properties and referring to \cite[App. B]{Dappiaggi:2020gge} and \cite[Lemma 2.14 ]{Dappiaggi:2020gge} for the technical details, it holds
\begin{align*}
\mathrm{wsd}_{\Lambda^{1+k_1+k_2}_t}(G\circledast\tau)^{(k_1,k_2)}\,,\mathrm{wsd}_{\Lambda^{1+k_1+k_2}_t}(\overline{G}\circledast\tau)^{(k_1,k_2)}<\infty\,,
\end{align*}
whenever $\mathrm{wsd}_{\Lambda^{1+k_1+k_2}_t}(\tau)^{(k_1,k_2)}<\infty$. Here $\mathrm{wsd}_{\Lambda^{1+k_1+k_2}_t}$ denotes the scaling degree with respect to the subset
\begin{align*}
	\Lambda^{1+k_1+k_2}_t=\{(t_1,\ldots,t_{1+k_1+k_2};\widehat{x}_1,\ldots,\widehat{x}_{1+k_1+k_2})\,\vert\,t_1=\ldots=t_{1+k_1+k_2}\}\subset\mathbb{R}^{(d+1)(1+k_1+k_2)}.
\end{align*}} 
\end{remark}

\begin{remark}
	\clacomment{We underline that the choice of a parabolic scaling, weighting the time variable twice the spatial ones, is natural since this is the scaling transformation under which the linear part of the Schr\"odinger equation as well as its fundamental solution are scaling invariant.}
\end{remark}

\noindent The following proposition is the main result of this section.

\begin{proposition}\label{Prop: WF of the Algebra}
	Let $\mathcal{A}^{\mathbb{C}}$ be the algebra introduced in Definition \ref{Def: Pointwise Algebra}. 
	In view of Definition \ref{Def: Cm set}, it holds that 
	\begin{align*}
	\mathcal{A}^{\mathbb{C}}\subset\mathcal{D}^\prime_C(\mathbb{R}^{d+1}; \mathsf{Pol}_{\mathbb{C}})\,.
	\end{align*}
\end{proposition}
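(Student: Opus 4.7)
The natural strategy is to proceed by induction on the index $j$ appearing in the inductive definition of $\mathcal{A}^{\mathbb{C}}_j$, and to conclude by taking the direct limit. At each step we need to show that $\mathcal{D}'_C(\mathbb{R}^{d+1};\mathsf{Pol}_{\mathbb{C}})$ is stable under the two operations used to build $\mathcal{A}^{\mathbb{C}}_j$ from $\mathcal{A}^{\mathbb{C}}_{j-1}$, namely the pointwise product of functional-valued distributions in the sense of Corollary \ref{Cor: Pointwise Algebra Structure}, and the action of the fundamental solutions $G$ and $\overline{G}$ as per Equation \eqref{Eq: Action of G}.

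For the base case $j=0$ it suffices to check the three generators of $\mathcal{A}^{\mathbb{C}}_0$. The functional $\mathbf{1}$ has identically vanishing functional derivatives, hence the wavefront set condition is trivially satisfied. For $\Phi$ one observes that $\Phi^{(k,k')}=0$ unless $(k,k')=(0,0)$ or $(k,k')=(1,0)$; the only non-trivial case gives $\Phi^{(1,0)}(f\otimes\eta_1;\eta,\eta')=\int f(x)\eta_1(x)\,dx$, which is nothing but $\delta_{\mathrm{Diag}_2}$ paired with $f\otimes\eta_1$. The wavefront set of $\delta_{\mathrm{Diag}_2}$ is its conormal bundle, and on it the two time components coincide and the two full covectors sum to zero, so the unique partition $I_1=\{1,2\}$ witnesses the $C_2$ condition. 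An analogous argument applies to $\overline{\Phi}$, and this takes care of the generators. Stability under the action of $G$ and $\overline{G}$ has already been stated in Remark \ref{Rem: stability wrt G}, so the only genuine work needed at the inductive step is to show that $\mathcal{D}'_C(\mathbb{R}^{d+1};\mathsf{Pol}_{\mathbb{C}})$ is closed under the pointwise product.

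To establish this closure, take $u,v\in\mathcal{D}'_C(\mathbb{R}^{d+1};\mathsf{Pol}_{\mathbb{C}})$ and compute $(uv)^{(k,k')}$ by applying the Leibniz rule to Equation \eqref{Eq: Functional derivative}. One obtains a finite sum of terms of the form
\begin{equation*}
u^{(p,p')}(f\otimes\eta_{I}\otimes\eta'_{I'};\eta,\eta')\;v^{(q,q')}(f\otimes\eta_{J}\otimes\eta'_{J'};\eta,\eta'),
\end{equation*}
indexed by partitions $\{1,\ldots,k\}=I\uplus J$ and $\{1,\ldots,k'\}=I'\uplus J'$ with $|I|=p$, $|J|=q=k-p$, and similarly for the primed indices. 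Viewed as a distribution on $\mathbb{R}^{(d+1)(1+k+k')}$, each such term is the product of $u^{(p,p')}$ and $v^{(q,q')}$ along the common variable on which $f$ is tested, which amounts to a pullback via the diagonal map in the first entry. By the inductive hypothesis the wavefront sets of $u^{(p,p')}$ and $v^{(q,q')}$ lie in $C_{1+p+p'}$ and $C_{1+q+q'}$ respectively, hence in particular they contain no covector whose first-slot component vanishes together with the corresponding spatial covector — this is the content of the $C_m$ condition applied to the singleton partition. Hörmander's criterion for multiplication of distributions then ensures that the product is well defined, and the resulting wavefront set is contained in the union of the wavefront sets of the factors together with their sums along the identified first variable.

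The main obstacle, and the one requiring the most care, is to verify that this union is still contained in $C_{1+k+k'}$. The key structural observation is that the set $C_m$ is designed precisely so as to be stable under this operation: if the equal-time partitions witnessing membership of $u^{(p,p')}$ and $v^{(q,q')}$ in their respective $C_\bullet$ are merged along the shared first coordinate, the resulting partition of $\{1,\ldots,1+k+k'\}$ still groups together indices whose time components coincide, and the sum of the covectors within each block remains zero because the two original sums did. This step is essentially the complex-valued and parabolic-scaling analogue of \cite[Lemma 2.14]{Dappiaggi:2020gge}, the only difference lying in the fact that the anomalous component of $\mathrm{WF}(G)$ described in Equation \eqref{Eq: Wavefront Set of G} has vanishing spatial covectors and non-vanishing temporal ones, a feature already accommodated in the very definition of $C_m$ through the separate sum condition on the $\omega$'s and on the $k$'s. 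Putting these pieces together closes the induction, and the passage to the direct limit yields the desired inclusion.
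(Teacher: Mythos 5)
Your proof follows essentially the same route as the paper's: induction on the index $j$, a direct check of the generators in the base case, the stability of $\mathcal{D}'_C(\mathbb{R}^{d+1};\mathsf{Pol}_{\mathbb{C}})$ under $G\circledast$ and $\overline{G}\circledast$ via Remark \ref{Rem: stability wrt G}, and closure under the pointwise product via the Leibniz rule combined with the additivity of the $C_m$ conditions when the two equal-time partitions are merged along the shared first variable. One phrasing to tighten: what the definition of $C_m$ actually excludes (and what H\"ormander's criterion needs here) is a wavefront-set element whose covectors vanish in every slot \emph{except possibly} the first one --- the block containing the first index would then force that covector to vanish too, contradicting the requirement that the total covector be non-zero --- and not, as your sentence literally states, an element whose first temporal and spatial covectors both vanish, which $C_m$ does allow.
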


\begin{proof}
	In view of the characterization of $\mathcal{A}^{\mathbb{C}}$ as an inductive limit, see Definition \ref{Def: Pointwise Algebra}, it suffices to prove the statement for each $\mathcal{A}^{\mathbb{C}}_j$ as in Equation \eqref{Eq: Pointwise algebra - A_0} and \eqref{Eq: Pointwise algebra - A_j}. 
	To this end we proceed inductively on the index $j$.
	
	\vskip .2cm
	
	\noindent{\em Step 1 -- $j=0$:}  In view of Equation \eqref{Eq: Pointwise algebra - A_0}, it suffices to focus the attention on the collection of generators $u_{p,q}\equiv\overline{\Phi}^p\Phi^q$, $p,q\geq 0$, see Example \ref{Ex: Basic Functionals}. 
	If $p=q=0$, there is nothing to prove since we are considering the identity functional $\mathbf{1}$ whose derivatives are all vanishing. 
	Instead, in all other cases, we observe that a direct calculation shows that the derivatives yield either smooth functionals or suitable products between smooth functions and Dirac delta distributions. 
	Per comparison with Definition \ref{Def: Cm set}, we can conclude 
	$$\mathrm{WF}(u_{p,q}^{(k,k^\prime)})\subseteq C_{k+k^\prime+1}\,.$$
	
	\vskip .2cm
	
	\noindent{\em Step 2 -- $j\geq 1$:} As first step, we observe that if $u\in\mathcal{A}^{\mathbb{C}}_j\cap\mathcal{D}^\prime_C(\mathbb{R}^{d+1}; \mathsf{Pol}_{\mathbb{C}})$, then $G\circledast u,\overline{G}\circledast u\in\mathcal{D}^\prime_C(\mathbb{R}^{d+1};\mathsf{Pol}_{\mathbb{C}})$. 
	As a matter of fact, Equation \eqref{Eq: Action of G} entails that, for every $k,k^\prime\geq 0$, $(G\circledast u)^{(k,k^\prime)}=u^{(k,k^\prime)}\cdot (G\otimes 1_{k+k^\prime})$, the dot standing for the pointwise product of distributions, while $1$ stands here for the identity operator. 
	Since $\mathrm{WF}(u^{(k,k^\prime)})\subseteq C_{k+k^\prime+1}$ per hypothesis, it suffices to apply \cite[Lemma 2.14]{Dappiaggi:2020gge}, \clacomment{see Remark \ref{Rem: stability wrt G}}, in combination with Remark \ref{Rem: la vita e' dura} to conclude that $\mathrm{WF}((G\circledast u)^{(k,k^\prime)})\subseteq C_{k+k^\prime+1}$. 
	The same line of reasoning entails that an identical conclusion can be drawn for $\overline{G}$. 
	
	We can now focus on the inductive step. 
	Therefore, let us assume that the statement of the proposition holds true for $\mathcal{A}^{\mathbb{C}}_j$. In view of Equation \eqref{Eq: Pointwise algebra - A_j}, $\mathcal{A}^{\mathbb{C}}_{j+1}$ is generated by $\mathcal{A}^{\mathbb{C}}_j$, $G\circledast\mathcal{A}^{\mathbb{C}}_j$ and $\overline{G}\circledast\mathcal{A}^{\mathbb{C}}_j$. 
	On account of the inductive step, it suffices to focus on the pointwise product of two functionals, say $u$ and $v$ lying in one of the generating algebras. \\
	For definiteness we focus on $u\in\mathcal{A}^{\mathbb{C}}_j$ and $v\in\mathcal{A}^{\mathbb{C}}_j$, the other cases following suit. Using the Leibniz rule, for every $k,k^\prime\geq 0$, it turns out that $(uv)^{(k,k^\prime)}$ is a linear combination with smooth coefficient of products of distributions of the form $u^{(p,p^\prime)}\otimes v^{(q,q^\prime)}$ with $0\leq p\leq k$, $0\leq q\leq k^\prime$ and $p+q=k$ while $p^\prime+q^\prime=k^\prime$. 
	The inductive hypothesis entails
	\begin{align*}
	\mathrm{WF}(u^{(p,p^\prime)})\subseteq C_{p+p^\prime+1},\quad\textrm{and}\quad\mathrm{WF}(v^{(q,q^\prime)})\subseteq C_{q+q^\prime+1}\,.
	\end{align*}
\clacomment{
	A direct inspection of Equation \eqref{C_m} entails that if  $(t,x,\omega_1,\kappa_1,s_{p+p'},y_{p+p^\prime},\omega_{s}, \kappa_y)\in C_{p+p^\prime+1}$ while $(t,x,\omega_2,\kappa_2,r_{q+q'},z_{q+q^\prime},\omega_{r}, \kappa_z)\in C_{q+q^\prime+1}$ then
\begin{align*}	
	(t,x,\omega_1+\omega_2,\kappa_1+\kappa_2,s_{p+p'},y_{p+p^\prime},\omega_s,\kappa_y, r_{q+q'},z_{q+q^\prime},\omega_r,\kappa_z)\in C_{k+k^\prime+1}\,,
	\end{align*}
	from which it descends that $\mathrm{WF}(u^{(p,p^\prime)}\otimes v^{(q,q^\prime)})\subseteq C_{k+k^\prime+1}$.
}
\end{proof}

\section{The Algebra $\mathcal{A}^{\mathbb{C}}_{\cdot_Q}$}\label{Sec: expectations}
The algebra $\mathcal{A}^{\mathbb{C}}$ introduced in Definition \ref{Def: Pointwise Algebra} does not codify the information associated to the stochastic nature of the underlying white noise.

In order to encode such datum in the functional-valued distributions and in the same spirit of \cite{Dappiaggi:2020gge}, we 
introduce a new algebra which is obtained as a \emph{deformation} of the pointwise product of $\mathcal{A}^{\mathbb{C}}$. As it will be manifest in the following discussion, this construction is a priori purely formal unless a suitable renormalization procedure is implemented

\begin{remark}\label{Rem: Q}
As a premise, we introduce the following bi-distribution, constructed out of the fundamental solution $G\in\mathcal{D}^\prime(\mathbb{R}^{d+1}\times\mathbb{R}^{d+1})$ as per Equation \eqref{Eq: Integral Kernel}:
\begin{equation}\label{Eq: definition of Q}
Q\vcentcolon=G\circ G\in\mathcal{D}^\prime(\mathbb{R}^{d+1}\times\mathbb{R}^{d+1})\,,
\end{equation}
where $\circ$ denotes the composition of bi-distributions, namely, for any $f_1,f_2\in\mathcal{D}(\mathbb{R}^{d+1}\times\mathbb{R}^{d+1})$,
\begin{align*}
Q(f_1,f_2)\vcentcolon=(G\circ G)(f_1\otimes f_2)\vcentcolon=[(G\otimes G)\cdot(1_{d+1}\otimes\delta_{\mathrm{Diag}_2}\otimes1_{d+1})](f_1\otimes1_{d+1}\otimes1_{d+1}\otimes f_2)\,.
\end{align*}
Here, with a slight abuse of notation, $1_{d+1}\in\mathcal{D}^\prime(\mathbb{R}^{d+1})$ stands for the distribution generated by the constant smooth function $1$ on $\mathbb{R}^{d+1}$ while $\cdot$ indicates the pointwise product of distributions, see \cite[Thm. 8.2.10]{Hormander-I-03}. Further properties of the composition $\circ$ are discussed in \cite[App. A]{Dappiaggi:2020gge}.

From the perspective of the stochastic process at the heart of Equation \eqref{Eq: Stochastic NLS}, we observe that the bi-distribution $Q$ codifies the covariance of the complex Gaussian random field $\widehat{\varphi}=G\ast\xi$.
More explicitly, it holds 
\begin{equation}\label{Eq: covariance complex Gaussian random field}
\mathbb{E}[\widehat{\varphi}(f_1)\overline{\widehat{\varphi}}(f_2)]=Q(f_1,f_2), \quad\forall f_1,f_2\in\mathcal{D}(\mathbb{R}^{d+1}\times\mathbb{R}^{d+1}).
\end{equation}

Observe, in addition that $\widehat{\varphi}$ can be read also as a solution of the stochastic linear Schr\"odinger equation, namely Equation \eqref{Eq: Stochastic NLS} setting $\lambda=0$.

To conclude our excursus on the bi-distribution $Q$, we highlight two notable properties of its singular structure:
\begin{enumerate}
	\item in view of \cite[Thm. 8.2.14]{Hormander-I-03},
	\begin{align*}
		\mathrm{WF}(Q)\subseteq\mathrm{WF}(G)\,.
	\end{align*}
\item as a consequence of \cite[Lemma B.12]{Dappiaggi:2020gge}, 
$$\mathrm{wsd}_{\clacomment{\Lambda^2_t}}(Q)<\infty,$$
where $\mathrm{wsd}_{\clacomment{\Lambda^2_t}}(Q)$ denotes the scaling degree of the distribution $Q\in\mathcal{D}^\prime(\mathbb{R}^{d+1}\times\mathbb{R}^{d+1})$ with respect to $\Lambda_t^2:=\{(t_1,x_1,t_2,x_2)\in\mathbb{R}^{d+1}\times\mathbb{R}^{d+1}\,\vert\,t_1=t_2\}$. 
\end{enumerate}  
\end{remark}

We are now in a position to introduce the sought deformation which encodes the stochastic properties due to the complex white noise present in Equation \eqref{Eq: Stochastic NLS}. Inspired by \cite{Dappiaggi:2020gge}, as a tentative starting point we set, for any $\tau_1,\tau_2\in\mathcal{A}^\mathbb{C}$, $f\in\mathcal{D}(\mathbb{R}^{d+1})$ and $\eta,\eta^\prime\in\mathcal{E}(\mathbb{R}^{d+1})$,
\begin{equation}\label{Eq: prodotto deformato2}
\begin{split}
    [\tau_1\cdot_Q \tau_2](f;\eta, \eta^\prime):=\sum_{k\geq0}\sum\limits_{\substack{k_1,k_2 \\ k_1+k_2=k}}\frac{1}{k_1!k_2!}\big(\delta_{Diag_2}\otimes Q^{\otimes k_1}\otimes\overline{Q}^{\otimes k_2}\big)
    \cdot\big[\tau_1^{(k_1,k_2)}\Tilde{\otimes}\tau_2^{(k_2,k_
        1)}\big](f\otimes 1_{1+2k};\eta, \eta^\prime)\,.
        \end{split}
\end{equation}
Here with $\tilde{\otimes}$ we denote the tensor product though modified in terms of a permutation of its arguments, which, at the level of integral kernel, reads
\begin{gather}\label{Eq: Tilde otimes}
(\delta_{Diag_2}\otimes Q^{\otimes k_1}\otimes\overline{Q}^{\otimes k_2}\big)
    \cdot\big[\tau_1^{(k_1,k_2)}\Tilde{\otimes}\tau_2^{(k_2,k_
        1)}\big]\notag\\
        =\delta(x_1,x_2)\prod_{j=1}^{k_1}\prod_{\ell=1}^{k_2}Q(z_j,z^\prime_{j})\overline{Q}(y_{\ell},y^\prime_\ell)\tau_1^{(k_1,k_2)}(x_1,z_1,\ldots,z_{k_1},y_{1},\ldots,y_{k_2})\tau_2^{(k_2,k_1)}(x_2,z^\prime_{1},\ldots,z^\prime_{k_1},y^\prime_{1},\ldots,y^\prime_{k_2})\,.
\end{gather}

\begin{remark}\label{Rem: deformed product}
We observe that only a finite number of terms contributes to the sum on the right hand side of Equation \eqref{Eq: prodotto deformato2} on account of the polynomial nature of the functional-valued distributions $\tau_1$ and $\tau_2$.
Nonetheless, at this stage, the right hand side of Equation \eqref{Eq: prodotto deformato2} is only a formal expression, since it can include a priori ill-defined structures such as the coinciding point limit of $Q$ and $\overline{Q}$. In the following theorem we shall bypass this hurdle by means of a renormalization procedure so to give meaning to  Equation \eqref{Eq: prodotto deformato2} for any $\tau_1,\tau_2\in\mathcal{A}^\mathbb{C}$.
\end{remark}

The main motivation for the introduction of a deformation of the algebraic structure is to build an explicit algorithm for computing expectation values and correlation functions of polynomial expressions in the random fields $\widehat{\varphi}=G\ast\xi$ and $\overline{\widehat{\varphi}}=\overline{G}\ast\overline{\xi}$. For this reason. let us illustrate the stochastic interpretation of the deformed product $\cdot_Q$ and its link to the expectation values via the following example.
\begin{example}\label{Ex: Example interpretation dot}
Formally, at the level of integral kernels and referring to the defining properties of the complex white noise in Equation \eqref{Eq: complex white noise}, it holds that
    \begin{equation*}
    \begin{split}
        \mathbb{E}[\widehat{\varphi}^2(f)]&=\mathbb{E}\Big[\int_{(\mathbb{R}^{d+1})^2} dx'dy'\,G(x,x')G(x,y')\xi(x')\xi(y')f(x)\Big]\\
        &=\int_{(\mathbb{R}^{d+1})^2}dx'dy'\,G(x,x')G(x,y')\underbrace{\mathbb{E}\bigl[\xi(x')\xi(y')\bigr]}_{=0}f(x)=0,
    \end{split}
\end{equation*}
for every $f\in\mathcal{D}(\mathbb{R}^{d+1})$. Now let us consider the expectation value of $\vert\widehat{\varphi}\vert^2=\widehat{\varphi}\overline{\widehat{\varphi}}$:
    \begin{equation*}
    \begin{split}
        \mathbb{E}[\widehat{\varphi}\overline{\widehat{\varphi}}(f)]&=\mathbb{E}\Big[\int_{(\mathbb{R}^{d+1})^2} dx'dy'\,G(x,x')\overline{G}(x,y')\xi(x')\overline{\xi}(y')f(x)\Big]\\
        &=\int_{(\mathbb{R}^{d+1})^2}dx'dy'\,G(x,x')\overline{G}(x,y')\underbrace{\mathbb{E}\bigl[\xi(x')\xi(y')\bigr]}_{=\delta(x'-y')}f(x)\\
    &=\int_{\mathbb{R}^{d+1}}dx'\,G(x,x')\overline{G}(x,x')f(x)=(G\circ G^\ast)(f\delta_{\text{Diag}_2})=Q(f\delta_{\text{Diag}_2}),
    \end{split}
\end{equation*}
where we used Equation \eqref{Eq: complex white noise - 2} together with the relation between the fundamental solution of $L$ and its formal adjoint $G^\ast$,  $G^\ast(x,x')=\overline{G(x',x)}$, {\em c.f.} Remark \ref{Rem: la vita e' dura}. On account of Equations \eqref{Eq: prodotto deformato2} and \eqref{Eq: Gammadot on products} we can compute
\begin{equation*}
   (\Phi\cdot_Q\Phi)(f;\eta,\overline{\eta})=\Phi^2(f;\eta,\overline{\eta}),
\end{equation*}
\begin{equation*}
    (\Phi\cdot_Q\overline{\Phi})(f;\eta,\overline{\eta})=\Phi\overline{\Phi}(f;\eta,\overline{\eta})+Q(f\delta_{{Diag}_2}).
\end{equation*}
Observe that, fixing $\eta'=\overline{\eta}$, the two configurations are no longer independent, allowing us to make contact with the stochastic nature of the equation. Evaluating these expressions for $\eta=0$, we obtain
\begin{equation*}
    (\Phi\cdot_Q\Phi)(f;0,0)=0\equiv\mathbb{E}[\widehat{\varphi}^2(f)],\hspace{1cm} (\Phi\cdot_Q\overline{\Phi})(f;0,0)=Q(f\delta_{{Diag}_2})\equiv\mathbb{E}[\widehat{\varphi}\overline{\widehat{\varphi}}(f)].
\end{equation*}
As stated in Remark \ref{Rem: deformed product}, the expressions above are a priori ill-defined, accounting for the singular contribution $Q\delta_{Diag_2}$. This problem is tackled in Theorem \ref{Thm: deformation map cdot}.

This example can be readily extended to arbitrary polynomial expressions of $\varphi$ and $\overline{\varphi}$, highlighting how our guess for the deformed product $\cdot_Q$ codifies properly the expectation values. 
\end{example}

\begin{theorem}\label{Thm: deformation map cdot}
Let $\mathcal{A}^\mathbb{C}$ be the algebra introduced in Definition \ref{Def: Pointwise Algebra} and let $\mathcal{M}_{m,m^\prime}$ be the moduli as per Remark \ref{moduli2}. 
There exists a linear map $\Gamma_{\cdot_Q}^\mathbb{C}:\mathcal{A}^\mathbb{C}\to\mathcal{D}^\prime_C(\mathbb{R}^{d+1};\mathrm{Pol}_\mathbb{C})$ such that:
\begin{enumerate}
\item for any $\tau\in\mathcal{M}_{1,0},\,\mathcal{M}_{0,1}$, it holds
\begin{align}\label{Eq: Gammadot on M_1}
\Gamma_{\cdot_Q}^\mathbb{C}(\tau)=\tau\,;
\end{align}
\item for any $\tau\in\mathcal{A}^\mathbb{C}$, it holds
\begin{align}\label{Eq: Gammadot Gtau}
\nonumber
\Gamma_{\cdot_Q}^\mathbb{C}(G\circledast\tau)=G\circledast\Gamma_{\cdot_Q}^\mathbb{C}(\tau)\,,\\
\Gamma_{\cdot_Q}^\mathbb{C}(\overline{G}\circledast\tau)=\overline{G}\circledast\Gamma_{\cdot_Q}^\mathbb{C}(\tau)\,
\end{align}
\item for any $\tau\in\mathcal{A}^\mathbb{C}$ and $\psi\in\mathcal{E}(\mathbb{R}^{d+1})$, it holds
\begin{align}\label{Eq: Gammadot and derivatives}
\nonumber
\Gamma_{\cdot_Q}^\mathbb{C}\cdot\delta_\psi&=\delta_{\psi}\circ\Gamma_{\cdot_Q}^\mathbb{C}\,,\qquad
&&\Gamma_{\cdot_Q}^\mathbb{C}\cdot\delta_{\overline{\psi}}=\delta_{\overline{\psi}}\circ\Gamma_{\cdot_Q}^\mathbb{C}\,,\\
\Gamma_{\cdot_Q}^\mathbb{C}(\psi\tau)&=\psi\Gamma_{\cdot_Q}^\mathbb{C}(\tau)\,,\qquad
&&\Gamma_{\cdot_Q}^\mathbb{C}(\overline{\psi}\tau)=\overline{\psi}\Gamma_{\cdot_Q}^\mathbb{C}(\tau)\,,
\end{align}
where $\delta_\psi$ denotes the directional functional derivative along $\psi$ as per Equation \eqref{Eq: Directional derivative};
\item denoting by $\sigma_{(p,q)}(\tau)=\mathrm{wsd}_{\mathrm{Diag}_{p+q+1}}(\tau^{(p,q)})$ the weighted scaling degree of $\tau^{(p,q)}$ with respect to the total diagonal of $(\mathbb{R}^{d+1})^{p+q+1}$, see \cite[Rmk. B.9]{Dappiaggi:2020gge}, it holds
\begin{align}\label{Eq: Gammadot and sd}
\sigma_{(p,q)}(\Gamma_{\cdot_Q}^\mathbb{C}(\tau))<\infty\,,\qquad\forall\,\tau\in\mathcal{A}^\mathbb{C}\,.
\end{align}
\end{enumerate}
\end{theorem}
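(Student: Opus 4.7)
The plan is to construct $\Gamma_{\cdot_Q}^\mathbb{C}$ by induction along the filtration $\{\mathcal{A}^\mathbb{C}_j\}_{j\geq 0}$ of Definition \ref{Def: Pointwise Algebra}, combined with the graded decomposition in the moduli $\mathcal{M}_{m,m'}$ of Remark \ref{moduli2}, mimicking the Epstein-Glaser inductive strategy employed in \cite{Dappiaggi:2020gge}. The base case fixes $\Gamma_{\cdot_Q}^\mathbb{C}$ as the identity on $\mathcal{M}_{0,0}\cup\mathcal{M}_{1,0}\cup\mathcal{M}_{0,1}$, which already forces property (i). The compatibility with $G,\overline{G}$ is built in as a recursive clause: once $\Gamma_{\cdot_Q}^\mathbb{C}(\tau)$ is available, one sets $\Gamma_{\cdot_Q}^\mathbb{C}(G\circledast\tau):=G\circledast\Gamma_{\cdot_Q}^\mathbb{C}(\tau)$ and analogously for $\overline{G}$. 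By Remark \ref{Rem: stability wrt G} together with Proposition \ref{Prop: WF of the Algebra}, this stipulation takes values in $\mathcal{D}^\prime_C(\mathbb{R}^{d+1};\mathsf{Pol}_\mathbb{C})$ and preserves the finiteness of the relevant weighted scaling degrees.

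The heart of the argument lies in the inductive step for pointwise products. Given $\tau_1,\tau_2$ for which the images $\Gamma_{\cdot_Q}^\mathbb{C}(\tau_i)$ have already been defined, I set $\Gamma_{\cdot_Q}^\mathbb{C}(\tau_1\tau_2)$ equal to the right-hand side of Equation \eqref{Eq: prodotto deformato2} with the functional derivatives of $\tau_i$ replaced by those of $\Gamma_{\cdot_Q}^\mathbb{C}(\tau_i)$. The polynomial nature of $\tau_1,\tau_2$ ensures the series truncates, reducing the problem, for each pair $(k_1,k_2)$, to giving meaning to the distribution
\begin{equation*}
t_{k_1,k_2}:=\bigl(\delta_{\mathrm{Diag}_2}\otimes Q^{\otimes k_1}\otimes\overline{Q}^{\otimes k_2}\bigr)\cdot\bigl[\Gamma_{\cdot_Q}^\mathbb{C}(\tau_1)^{(k_1,k_2)}\,\tilde{\otimes}\,\Gamma_{\cdot_Q}^\mathbb{C}(\tau_2)^{(k_1,k_2)}\bigr]\,.
\end{equation*}
Combining Theorem \ref{Thm: WF(G)}, the inclusion $\mathrm{WF}(Q)\subseteq\mathrm{WF}(G)$ from Remark \ref{Rem: Q}, the wavefront-set estimate of Proposition \ref{Prop: WF of the Algebra}, and H\"ormander's criterion for the product of distributions, one first checks that $t_{k_1,k_2}$ is well-defined as an element of $\mathcal{D}'\bigl((\mathbb{R}^{d+1})^{2+2k_1+2k_2}\setminus\Lambda^{2+2k_1+2k_2}_t\bigr)$: the obstructing covectors all live in the configurations described by $C_m$ in Definition \ref{Def: Cm set} and concentrate precisely on the submanifold $\Lambda^{2+2k_1+2k_2}_t$ along which all time coordinates coincide.

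The next step is to extend $t_{k_1,k_2}$ across $\Lambda^{2+2k_1+2k_2}_t$ by a submanifold version of the Epstein-Glaser theorem, as in \cite{Brunetti-Fredenhagen-00} and adapted in \cite[App. B]{Dappiaggi:2020gge} to the parabolic weighted scaling $(t,x)\mapsto(\lambda^2 t,\lambda x)$ natural for the Schr\"odinger operator. The hypothesis needed for this extension is the finiteness of $\mathrm{wsd}_{\Lambda^{2+2k_1+2k_2}_t}(t_{k_1,k_2})$, which is provided by Remark \ref{Rem: stability wrt G} (preservation of finite weighted scaling degree under $G,\overline{G}$), Remark \ref{Rem: Q} (finiteness of $\mathrm{wsd}_{\Lambda^2_t}(Q)$), and the Leibniz rule propagated along the induction. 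The freedom in the extension, classified by distributions supported on $\Lambda^{2+2k_1+2k_2}_t$ with bounded weighted scaling degree, is then fixed compatibly with the $\mathcal{E}(\mathbb{R}^{d+1})$-module operations: since $\delta_\psi,\delta_{\overline\psi}$ and multiplication by $\psi,\overline{\psi}$ act only on the configurations $\eta,\eta'$ and commute with the kernel $\delta_{\mathrm{Diag}_2}\otimes Q^{\otimes k_1}\otimes\overline{Q}^{\otimes k_2}$, this yields property (iii). Property (iv) is obtained by additionally tracking the transverse spatial behaviour of the extended kernel, which remains finite because $Q,\overline{Q}$ and the derivatives of the previously constructed $\Gamma_{\cdot_Q}^\mathbb{C}(\tau_i)$ exhibit only polynomial singularities.

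The main obstacle I anticipate is the Epstein-Glaser step, specifically the passage from extension along the total diagonal (as in \cite{Dappiaggi:2020gge}) to extension along the considerably larger submanifold $\Lambda^{2k+2}_t$ under the parabolic scaling. Verifying at each inductive step both that the wavefront obstructions of the iterated products of $Q,\overline{Q}$ with the functional derivatives of the already-renormalized $\Gamma_{\cdot_Q}^\mathbb{C}(\tau_i)$ remain confined to $\Lambda_t$, and that the associated weighted scaling degree stays finite, is the crucial technical input. It is precisely the non-microhypoellipticity of the Schr\"odinger operator $L$, reflected in the richer structure of $\mathrm{WF}(G)$ from Theorem \ref{Thm: WF(G)}, that forces this more intricate setup relative to the elliptic or parabolic cases treated in \cite{Dappiaggi:2020gge}.
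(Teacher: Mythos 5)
Your proposal follows essentially the same route as the paper: induction over the moduli $\mathcal{M}_{m,m'}$ and the filtration $\mathcal{A}^{\mathbb{C}}_j$, the recursive definition via the deformed product $\cdot_Q$, microlocal confinement of the singularities of the kernels $\delta_{\mathrm{Diag}}\otimes Q^{\otimes k_1}\otimes\overline{Q}^{\otimes k_2}\cdot(\ldots)$ to $\Lambda_t$, and an Epstein--Glaser extension across $\Lambda_t$ based on finiteness of the parabolic weighted scaling degree, with the ambiguities fixed so as to respect the module and derivative structure. The only place where the paper does visibly more work than your sketch is the improvement of the singular support from the set where \emph{some} pair of time coordinates coincide to the full coincidence set $\Lambda_t$, achieved via a factorization of the kernel over partitions with one smooth factor --- a step you correctly flag as the crucial technical input but do not carry out.
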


\begin{proof}
\noindent\textbf{Strategy:} The construction of a map $\Gamma_{\cdot_Q}^\mathbb{C}$ satisfying the conditions in the statement of the theorem goes by induction with respect to the indices $m$ and $m^\prime$ in the decomposition 
$\mathcal{A}^\mathbb{C}=\bigoplus_{m,m^\prime\in\mathbb{N}}\mathcal{M}_{m,m^\prime}$, discussed in Remark \ref{moduli2}. Since the proof shares many similarities with the counterpart in \cite{Dappiaggi:2020gge} for the case of a stochastic, scalar, partial differential equation, we shall focus mainly on the different aspects. We start from Equation \eqref{Eq: Gammadot on M_1} and, for $\tau_1,\ldots,\tau_n\in\mathcal{A}^\mathbb{C}$, we set
\begin{align}\label{Eq: Gammadot on products}
\Gamma_{\cdot_Q}^\mathbb{C}(\tau_1\ldots\tau_n)\vcentcolon=\Gamma_{\cdot_Q}^\mathbb{C}(\tau_1)\cdot_Q\ldots\cdot_Q\Gamma_{\cdot_Q}^\mathbb{C}(\tau_n)\,,
\end{align}
where the product $\cdot_Q$ is given by Equation \eqref{Eq: prodotto deformato2}.
As we anticipated above, the right hand side of Equation \eqref{Eq: Gammadot on products} might be only formal since the product $\cdot_Q$ is not \textit{a priori} well defined on the whole $\mathcal{D}^\prime_C(\mathbb{R}^{d+1};\mathrm{Pol}_\mathbb{C})$.\\

\noindent\textbf{The Case $\mathbf{d=1}$:} In this scenario the divergence of $Q\delta_{\text{Diag}_2}$ needs no taming. Working at the level of integral kernels, see Equation \eqref{Eq: Integral Kernel}, the composition $Q=G\circ G$ yields a contribution proportional to $\int_{\mathbb{R}}dt/t\sim \log(t)$, which is locally integrable. 
As a consequence, in such scenario the proof does not need renormalization and it is straightforward starting from the condition in the statement of the theorem as well as from Equation \eqref{Eq: Gammadot on products}.\\

\noindent\textbf{The Case $\mathbf{d\geq2}$: Step 1.} We start by showing that, whenever $\Gamma_{\cdot_Q}^\mathbb{C}$ is well defined for $\tau\in\mathcal{A}^\mathbb{C}$ in such a way in particular that Equations \eqref{Eq: Gammadot and derivatives} and \eqref{Eq: Gammadot and sd} hold true, then the same applies for $G\circledast\tau$ and $\overline{G}\circledast\tau$. We shall only discuss the case of $G\circledast\tau$, the other following suit.

First of all we notice that $\Gamma_{\cdot_Q}^\mathbb{C}(G\circledast\tau)$ is completely defined by Equation \eqref{Eq: Gammadot Gtau}.
In addition, Equation \eqref{Eq: Gammadot and sd} for $\Gamma_{\cdot_Q}^\mathbb{C}(G\circledast\tau)$ is a direct consequence of Remark \ref{Rem: stability wrt G}.
Finally, Equation \eqref{Eq: Gammadot and derivatives} descends from an iteration of the following argument, namely, for any $\eta,\eta^\prime,\zeta\in\mathcal{E}(\mathbb{R}^{d+1})$ and $f\in\mathcal{D}(\mathbb{R}^{d+1})$,
\begin{align*}
\delta_\zeta\circ\Gamma_{\cdot_Q}^\mathbb{C}(G\circledast\tau)(f;\eta,\eta^\prime)&=
\delta_\zeta\circ\Gamma_{\cdot_Q}^\mathbb{C}(\tau)(G\circledast f;\eta,\eta^\prime)=
\Gamma_{\cdot_Q}^\mathbb{C}(\delta_\zeta\tau)(G\circledast f;\eta,\eta^\prime)\\
&=\Gamma_{\cdot_Q}^\mathbb{C}(G\circledast\delta_\zeta\tau)(f;\eta,\eta^\prime)=
\Gamma_{\cdot_Q}^\mathbb{C}(\delta_\zeta G\circledast\tau)(f;\eta,\eta^\prime)\\
&=\Gamma_{\cdot_Q}^\mathbb{C}\circ\delta_\zeta(G\circledast\tau)(f;\eta,\eta^\prime)\,.
\end{align*}
With these data in our hand, we can start the inductive procedure. \\

\noindent\textbf{Step 2: $\mathbf{(m,m^\prime)=(1,1)}$.} As a starting point, we observe that Equation \eqref{Eq: Gammadot on M_1} completely determines the map $\Gamma_{\cdot_Q}^\mathbb{C}$ restricted to the moduli $\mathcal{M}_{(1,0)}$ and $\mathcal{M}_{(0,1)}$. In addition all other required properties hold true automatically. Next, we focus our attention on $\mathcal{M}_{(1,1)}$. In order to consistently construct $\Gamma_{\cdot_Q}^\mathbb{C}$ on $\mathcal{M}_{(1,1)}$ we rely on an inductive argument with respect to the index $j$ subordinated to the decomposition -- see Remark \ref{moduli2}:
\begin{align*}
\mathcal{M}_{(1,1)}=\bigoplus_{j\in\mathbb{N}}\mathcal{M}_{(1,1)}^j\,,\qquad\mathcal{M}_{(1,1)}^j\vcentcolon=\mathcal{M}_{(1,1)}\cap\mathcal{A}_j^\mathbb{C}\,. 
\end{align*}

\noindent Setting $j=0$, we observe that
\begin{align*}
\mathcal{M}_{(1,1)}^0=\mathrm{span}_{\mathcal{E}(\mathbb{R}^{d+1};\mathbb{C})}\{\mathbf{1},\Phi,\overline{\Phi},\Phi\overline{\Phi}\}\,.
\end{align*}
On account of the previous discussion, we are left with the task of defining the action of $\Gamma_{\cdot_Q}^\mathbb{C}$ on $\Phi\overline{\Phi}$. This suffices since $\Gamma_{\cdot_Q}^\mathbb{C}$ can be consequently extended to the whole $\mathcal{M}_{(1,1)}^0$ by linearity.
Recalling that $\Phi\overline{\Phi}$ is defined as in Equation \eqref{Eq: generic functional product}, we set, formally, for any $\eta,\eta^\prime\in\mathcal{E}(\mathbb{R}^{d+1})$ and $f\in\mathcal{D}(\mathbb{R}^{d+1})$,
\begin{align*}
\Gamma_{\cdot_Q}^\mathbb{C}(\Phi\overline{\Phi})(f;\eta,\eta^\prime)\vcentcolon=[\Gamma_{\cdot_Q}^\mathbb{C}(\Phi)\cdot_Q\Gamma_{\cdot_Q}^\mathbb{C}(\overline{\Phi})](f;\eta,\eta^\prime)
=\Phi\overline{\Phi}(f;\eta,\eta^\prime)+(G\cdot\overline{G})(f\otimes1_{d+1})\,,
\end{align*}
where in the last equality we exploited Equation \eqref{Eq: prodotto deformato2} and where $G\cdot\overline{G}$ denotes the pointwise product between $G$ and $\overline{G}$.

We observe that the above formula is purely formal due to the presence of the product $G\cdot\overline{G}$ which is \textit{a priori} ill-defined. As a matter of fact, due to the microlocal behaviour of $G$ codified in Remark \ref{Rem: la vita e' dura} and using H\"ormander criterion for the multiplication of distributions \cite[Th. 8.2.10]{Hormander-I-03}, we can only conclude that
\clacomment{$G\cdot\overline{G}\in\mathcal{D}^\prime(\mathbb{R}^{d+1}\times\mathbb{R}^{d+1}\setminus\Lambda^2_t)$.}
Since $\mathrm{wsd}_{\Lambda^2_t}(G\cdot\overline{G})=2d$, \cite[Thm. B.8]{Dappiaggi:2020gge} and \cite[Rmk. B.9]{Dappiaggi:2020gge} entail the existence of an extension $\widehat{G\cdot\overline{G}}\in\mathcal{D}^\prime(\mathbb{R}^{d+1}\times\mathbb{R}^{d+1})$ of $G\cdot\overline{G}$ which preserves both the scaling degree and the wave-front set. Having chosen once and for all any such extension, we set for all $\eta,\eta^\prime\in\mathcal{E}(\mathbb{R}^{d+1})$ and $f\in\mathcal{D}(\mathbb{R}^{d+1})$,
\begin{align}\label{Eq: renormalized squared modulus}
\Gamma_{\cdot_Q}^\mathbb{C}(\Psi\overline{\Psi})(f;\eta,\eta^\prime)\vcentcolon=\Psi\overline{\Psi}(f;\eta,\eta^\prime)+(\widehat{G\cdot\overline{G}})(f\otimes 1)\,.
\end{align}
To conclude, we observe that Equation \eqref{Eq: renormalized squared modulus} implies Equations \eqref{Eq: Gammadot and derivatives} and \eqref{Eq: Gammadot and sd} by direct inspection, as well as that $\Gamma_{\cdot_Q}^\mathbb{C}(\Psi\overline{\Psi})\in\mathcal{D}^\prime_C(\mathbb{R}^{d+1};\mathrm{Pol}_\mathbb{C})$ on account of Remark \ref{Rem: la vita e' dura}.

Having consistently defined $\Gamma_{\cdot_Q}^\mathbb{C}$ on $\mathcal{M}_{(1,1)}^0$, we assume that the map $\Gamma_{\cdot_Q}^\mathbb{C}$ has been coherently assigned on $\mathcal{M}_{(1,1)}^j$ and we prove the inductive step by constructing it on $\mathcal{M}_{(1,1)}^{j+1}$.
We remark that, given $\tau\in\mathcal{M}_{(1,1)}^{j+1}$, either $\tau=G\circledast\tau^\prime$ with $\tau^\prime\in\mathcal{M}_{(1,1)}^j$ (the case with the complex conjugate $\overline{G}$ is analogous), or $\tau=\tau_1\tau_2$ with $\tau_1\in\mathcal{M}_{(1,0)}^j\cup G\circledast\mathcal{M}_{(1,0)}^j\cup\overline{G}\circledast\mathcal{M}_{(1,0)}^j$ while $\tau_2\in\mathcal{M}_{(0,1)}^j\cup G\circledast\mathcal{M}_{(0,1)}^j\cup\overline{G}\circledast\mathcal{M}_{(0,1)}^j$.

In the first case, $\Gamma_{\cdot_Q}^\mathbb{C}(\tau)$ is defined as per Step \textit{1}. of this proof, using in addition the inductive hypothesis. In the second one, we start from the formal expression which descends from Equation \eqref{Eq: Gammadot on products}. It yields for all $\eta,\eta^\prime\in\mathcal{E}(\mathbb{R}^{d+1})$ and $f\in\mathcal{D}(\mathbb{R}^{d+1})$,
\begin{align*}
\Gamma_{\cdot_Q}^\mathbb{C}(\tau)(f;\eta,\eta^\prime)=
[\Gamma_{\cdot_Q}^\mathbb{C}(\tau_1)\cdot_Q\Gamma_{\cdot_Q}^\mathbb{C}(\tau_2)](f;\eta,\eta^\prime)=
[\Gamma_{\cdot_Q}^\mathbb{C}(\tau_1)\Gamma_{\cdot_Q}^\mathbb{C}(\tau_2)](f;\eta,\eta^\prime)+U(f\otimes1_3)\,,
\end{align*}
where
\begin{align}\label{Eq: ill-defined U}
U(f\otimes1_3)\vcentcolon=[(\delta_{\mathrm{Diag}_2}\otimes Q)\cdot(\Gamma_{\cdot_Q}^\mathbb{C}(\tau_1)^{(1,0)}\Tilde{\otimes}\Gamma_{\cdot_Q}^\mathbb{C}(\tau_2)^{(0,1)})](f\otimes1_3)\,,
\end{align}
and where $\Gamma_{\cdot_Q}^\mathbb{C}(\tau_1)\Gamma_{\cdot_Q}^\mathbb{C}(\tau_2)$ denotes the pointwise product between the functional-valued distributions, which is well-defined on account of Remark \ref{Rem: smooth distributions}.
Yet, Equation \eqref{Eq: ill-defined U} involves the product of singular distributions and it calls for a renormalization procedure.
First of all we observe that, due to the microlocal behaviour of $Q$ -- \textit{cf.} Remark \ref{Rem: Q} --  and due to the inductive hypothesis entailing
\begin{align*}
\mathrm{WF}(\Gamma_{\cdot_Q}^\mathbb{C}(\tau_1)^{(1,0)}),\mathrm{WF}(\Gamma_{\cdot_Q}^\mathbb{C}(\tau_2)^{(0,1)})\subset C_2\,,
\end{align*}
it follows that 
\clacomment{
\begin{align*}
	U\in\mathcal{D}'((\mathbb{R}^{d+1})^4\setminus\Lambda_t^{4,\mathrm{Big}}),
\end{align*}
where $\Lambda_t^{4,\mathrm{Big}}:=\{(\widehat{t},\widehat{x})\in(\mathbb{R}^{d+1})^4\,\vert\,\exists a,b\in\{1,2,3,4\}, a\neq b, t_a=t_b\}$.\\
This estimate can be improved observing that, whenever $(\widehat{t},\widehat{x})\in\Lambda_t^{4,\mathrm{Big}}\setminus\Lambda_t^4$, one of the factors among $(\delta_{\mathrm{Diag}_2}\otimes Q)$, $\Gamma_{\cdot_Q}^\mathbb{C}(\tau_1)^{(1,0)}$ and $\Gamma_{\cdot_Q}^\mathbb{C}(\tau_2)^{(0,1)}$ is smooth and the product of the remaining two is well-defined.
}

\clacomment{As a consequence, $U\in\mathcal{D}^\prime((\mathbb{R}^{d+1})^4\setminus\Lambda_t^4)$. Furthermore \cite[Rmk. B.7]{Dappiaggi:2020gge} here applied to the weighted scaling degree guarantees that
\begin{align*}
\mathrm{wsd}_{\Lambda_t^4}(U)
\leq&\mathrm{wsd}_{\Lambda_t^4}(\delta_{\mathrm{Diag}_2}\otimes Q)+\mathrm{wsd}_{\Lambda_t^4}(\Gamma_{\cdot_Q}^\mathbb{C}(\tau_1)^{(1,0)}\Tilde{\otimes}\Gamma_{\cdot_Q}^\mathbb{C}(\tau_2)^{(0,1)})\\
\leq&\mathrm{wsd}_{\Lambda_t^2}(\delta_{\mathrm{Diag}_2})+\mathrm{wsd}_{\Lambda_t^2}(Q)+\mathrm{wsd}_{\Lambda_t^2}(\Gamma_{\cdot_Q}^\mathbb{C}(\tau_1)^{(1,0)})+\mathrm{wsd}_{\Lambda_t^2}(\Gamma_{\cdot_Q}^\mathbb{C}(\tau_2)^{(0,1)})<\infty\,.
\end{align*}}
Once more, \cite[Thm. B.8]{Dappiaggi:2020gge} and \cite[Rem. B.9]{Dappiaggi:2020gge} entail the existence of an extension $\widehat{U}\in\mathcal{D}^\prime((\mathbb{R}^{d+1})^4)$ of $U\in\mathcal{D}^\prime((\mathbb{R}^{d+1})^4\setminus\Lambda_t^4)$ which preserves both the scaling degree and the wave-front set. Eventually, we set 
\begin{align}\label{Eq: renormalized U}
\Gamma_{\cdot_Q}^\mathbb{C}(\tau)(f;\eta,\eta^\prime)=
[\Gamma_{\cdot_Q}^\mathbb{C}(\tau_1)\Gamma_{\cdot_Q}^\mathbb{C}(\tau_2)](f;\eta,\eta^\prime)+\widehat{U}(f\otimes1_3)\,.
\end{align}
By direct inspection, one can realize that Equation \eqref{Eq: renormalized U} satisfies the conditions codified in Equations \eqref{Eq: Gammadot and derivatives} and \eqref{Eq: Gammadot and sd}. This concludes the proof for the case $(m,m^\prime)=(1,1)$.\\

\noindent\textbf{Step 3: Generic $\mathbf{(m,m^\prime)}$.}
We can now prove the inductive step with respect to the indices $(m,m^\prime)$. In particular, we assume that  $\Gamma_{\cdot_Q}^\mathbb{C}$ has been consistently assigned on $\mathcal{M}_{(m,m^\prime)}$ and we prove that the same holds true for $\mathcal{M}_{(m+1,m^\prime)}$. We stress that one should prove the same statement also for $\mathcal{M}_{(m,m^\prime+1)}$, but since the analysis is mutatis mutandis the same as for $\mathcal{M}_{(m+1,m^\prime)}$, we omit it.

Once more we employ an inductive procedure exploiting that $\mathcal{M}_{(m+1,m^\prime)}=\bigoplus_{j\in\mathbb{N}}\mathcal{M}_{(m+1,m^\prime)}^j$, with $\mathcal{M}_{(m+1,m^\prime)}^j:=\mathcal{M}_{(m+1,m^\prime)}\cap\mathcal{A}_j^\mathbb{C}$. First of all, we focus on the case $j=0$, observing that
\begin{align*}
\mathcal{M}_{(m+1,m^\prime)}^0=\mathrm{span}_{\mathcal{E}(\mathbb{R}^{d+1};\mathbb{C})}\{\mathbf{1},\Phi,\overline{\Phi},\ldots,\Phi^{m+1}\overline{\Phi}^{m^\prime}\}\,.
\end{align*}
On account of the inductive hypothesis, the map $\Gamma_{\cdot_Q}^\mathbb{C}$ has been already assigned on all the generators of $\mathcal{M}_{(m+1,m^\prime)}^0$ but $\Phi^{m+1}\overline{\Phi}^{m^\prime}$.
Hence, to determine $\Gamma_{\cdot_Q}^\mathbb{C}$ on the whole $\mathcal{M}_{(m+1,m^\prime)}^0$, it suffices to establish its action on $\Phi^{m+1}\overline{\Phi}^{m^\prime}$ extending it by linearity.
To this end, we exploit again Equation \eqref{Eq: Gammadot on products} which, on account of the explicit form of Equation \eqref{Eq: prodotto deformato2}, yields
\begin{align*}
[\Gamma_{\cdot_Q}^\mathbb{C}(\Phi^{m+1}\overline{\Phi}^{m^\prime})](f;\eta,\eta^\prime)=\sum_{k=0}^{m+1\wedge m^\prime}k!\binom{m+1}{k}\binom{m^\prime}{k}[Q_{2k}\cdot(\Gamma_{\cdot_Q}^\mathbb{C}(\Phi)^{m+1-k})\Gamma_{\cdot_Q}^\mathbb{C}(\overline{\Phi})^{m^\prime-k}](f;\eta,\eta^\prime)\,,
\end{align*}
where $m+1\wedge m^\prime=\min\{m+1,m^\prime\}$ and where the symbol $Q_{2k}$ is a shortcut notation for
\begin{align*}
Q_{2k}(f)\vcentcolon=(G\overline{G})^{\otimes k}\cdot(\delta_{\mathrm{Diag}_k}\otimes 1_k)(f\otimes 1_{2k-1})\,.\quad\forall f\in\mathcal{D}(\mathbb{R}^{d+1}).
\end{align*}
We observe that, although the distribution $Q_{2k}$ is a priori ill-defined, as discussed in Step 2 of this proof, \clacomment{$G\overline{G}\in\mathcal{D}^\prime((\mathbb{R}^{d+1})^2\setminus\Lambda_t^2)$} admits an extension $\widehat{G\overline{G}}$. In turn this yields a renormalized version of $Q_{2k}$, \textit{i.e.},
\begin{align*}
\widehat{Q}_{2k}(f)\vcentcolon=(\widehat{G\overline{G}})^{\otimes k}\cdot(\delta_{\mathrm{Diag}_k}\otimes 1_k)(f\otimes 1_{2k-1})\,.
\end{align*}
Bearing in mind these premises, we set
\begin{align*}
[\Gamma_{\cdot_Q}^\mathbb{C}(\Phi^{m+1}\overline{\Phi}^{m^\prime})](f;\eta,\eta^\prime)=\sum_{k=0}^{m+1\wedge m^\prime}k!\binom{m+1}{k}\binom{m^\prime}{k}[\widehat{Q}_{2k}\cdot(\Gamma_{\cdot_Q}^\mathbb{C}(\Phi)^{m+1-k})\Gamma_{\cdot_Q}^\mathbb{C}(\overline{\Phi})^{m^\prime-k}](f;\eta,\eta^\prime)\,,
\end{align*}
which is well-defined since it involves only products of distributions generated by smooth functions. All required properties of $\Gamma_{\cdot_Q}^\mathbb{C}$ are satisfied by direct inspection. 

Finally, we discuss the last inductive step, namely we assume that $\Gamma_{\cdot_Q}^\mathbb{C}$ has been assigned on $\mathcal{M}_{(m+1,m^\prime)}^j$ and we construct it consistently on $\mathcal{M}_{(m+1,m^\prime)}^{j+1}$. To this end, let us consider $\tau\in\mathcal{M}_{(m+1,m^\prime)}^{j+1}$. Similarly to the inductive procedure in Step {\em 2.}, on account of the definition of $\mathcal{M}_{(m+1,m^\prime)}^{j+1}$ and of the linearity of $\Gamma_{\cdot_Q}^\mathbb{C}$, it suffices to consider elements $\tau$ either of the form $\tau=G\circledast\tau^\prime$ with $\tau^\prime\in\mathcal{M}_{(m+1,m^\prime)}^{j}$ or of the form
\begin{align*}
\tau=\tau_1\ldots\tau_\ell\,,\qquad\tau_i\in\mathcal{M}_{(m_i,m_i^\prime)}^{j}\cup G\circledast\mathcal{M}_{(m_i,m_i^\prime)}^{j}\cup \overline{G}\circledast\mathcal{M}_{(m_i,m_i^\prime)}^{j}\,,\qquad i\in\{1,\ldots,\ell\}\,,\qquad\ell\in\mathbb{N}\,, 
\end{align*}
where for any $i\in\{1,\ldots,\ell\}$ it holds $m_i,m_i^\prime\in\mathbb{N}\setminus\{0\}$ and $\sum_{i=1}^\ell m_i=m+1$ and $\sum_{i=1}^\ell m^\prime_i=m^\prime$.

We observe that in the first case the sought result descends as a direct consequence of Step {\em 1.} and of the inductive hypothesis. Hence we focus only on the second one.  In view of the inductive hypothesis, $\Gamma_{\cdot_Q}^\mathbb{C}$ has already been assigned on $\tau_i$ for all $i\in\{1,\ldots,\ell\}$.
As before, Equation \eqref{Eq: Gammadot on products} yields a formal expression, namely
\begin{align}
\Gamma_{\cdot_Q}^\mathbb{C}(\tau)(f;\eta,\eta^\prime)=
\sum_{N,M\geq0}\sum_{\substack{N_1+\ldots+N_\ell=N+M\\ M_1+\ldots+M_\ell=N+M}}\mathcal{F}(N_1,\ldots&,N_\ell;M_1,\ldots,M_\ell)\cdot\bigl[(\delta_{Diag_l}\otimes Q^{\otimes N}\otimes\overline{Q}^{\otimes M})\cdot\notag\\
&\Bigl(U_{1}^{(N_1,M_1)}\Tilde{\otimes}\ldots\Tilde{\otimes} U_{\ell}^{(N_\ell,M_\ell)}\Bigr)\bigr](f\otimes\mathbf{1}_{\ell-1+2N+2M};\eta,\eta^\prime)\,,\label{Eq: Gamma long}
\end{align}
where we have introduced the notation $\Gamma_{\cdot_Q}^\mathbb{C}(\tau_i)=U_i$ and where the symbols $\mathcal{F}(N_1,\ldots,N_\ell;M_1,\ldots,M_\ell)$ are $\mathbb{C}$-numbers stemming from the underlying combinatorics. Their explicit form plays no r\^{o}le and hence we omit them. 
First of all, since we consider only polynomial functionals, only a finite number of terms in the above sum is non-vanishing.  Nonetheless, the expression in Equation \eqref{Eq: Gamma long} is still purely formal and renormalization needs to be accounted for. To be more precise, let us introduce
\begin{align*}
    U_{N,M}\vcentcolon=\bigl[(\delta_{Diag_\ell}\otimes Q^{\otimes N}\otimes\overline{Q}^{\otimes M})\cdot\Bigl(U_{1}^{(N_1,M_1)}\Tilde{\otimes}\ldots\Tilde{\otimes} U_{\ell}^{(N_\ell,M_\ell)}\Bigr)\bigr]\,.
\end{align*}
On the one hand, by \cite[Thm. 8.2.9]{Hormander-I-03}, 
\begin{align*}
\mathrm{WF}(\delta_{Diag_\ell}\otimes Q^{\otimes N}\otimes\overline{Q}^{\otimes M})\subseteq\{(\widehat{x}_\ell,\widehat{z}_{2N+2M}&;\widehat{k}_\ell,\widehat{q}_{2N,2M})\in T^\ast (\mathbb{R}^{d+1})^{\ell+2N+2M}\setminus\{0\}\,\vert\, \\
&(\widehat{x}_\ell,\widehat{k}_\ell)\in\mathrm{WF}(\delta_{\text{Diag}_\ell})\,, (\widehat{z}_{2N+2M};\widehat{q}_{2N,2M})\in\mathrm{WF}(Q^{\otimes N}\otimes\overline{Q}^{\otimes M})\}\,,
\end{align*}
while, on the other hand, by the inductive hypothesis, $\mathrm{WF}(U_{i}^{(N_i,M_i)})\subseteq C_{N_i+M_i+1}$ for any $i\in\{1,\ldots,\ell\}$.
These two data together with \cite[Thm. 8.2.10]{Hormander-I-03} imply \clacomment{$U_{N,M}\in\mathcal{D}'(M^{\ell+2N+2M}\setminus\Lambda_t^{\ell+2N+2M,\mathrm{Big}})$}, where
\clacomment{
\begin{align*}
\Lambda_t^{\ell+2N+2M,\mathrm{Big}}\vcentcolon=\{(\widehat{t}_{\ell+2N+2M},\widehat{x}_{\ell+2N+2M})\in\mathbb{R}^{(d+1)(\ell+2N+2M)}\,\vert\,\exists i,j\in\{1,\ldots,\ell+2N+2M\}, t_i=t_j\}\,.
\end{align*}}
In addition, again by \cite[Thm. 8.2.10]{Hormander-I-03}, it holds
 \begin{equation*}
    \begin{split}
        \mathrm{WF}(U_{N,M}&)=\{(\widehat{x}_\ell,\widehat{z}_{2N+2M},\widehat{k}_\ell+\widehat{k'}_{\ell},\widehat{q}_{N_1,M_1}+\widehat{q'}_{N_1,M_1},\ldots,\widehat{q}_{N_\ell,M_\ell}+\widehat{q'}_{N_\ell,M_\ell})\in\\
        &\hspace{1cm}\in T^\ast (\mathbb{R}^{d+1})^{\ell+2N+2M}\setminus\{0\}\,\vert\, (\widehat{x}_\ell,\widehat{k}_\ell)\in\mathrm{WF}(\delta_{\text{Diag}_\ell}),\\
        &\hspace{2cm}\,(\widehat{z}_{2N+2M},\widehat{q}_{2N+2M})\in\mathrm{WF}(Q^{\otimes N}\otimes\overline{Q}^{\otimes M}),\\
        &\hspace{3cm}(x_i,\widehat{z}_{N_i,M_i};k_i,\widehat{q'}_{N_i,M_i})\in C_{1+N_i+M_i} \forall i\in\{1,\ldots\ell\}\}\,.
    \end{split}
\end{equation*}
This estimate can be improved through the following argument, which is a generalization of the one used in Step {\em 2.}
Let $\{A,B\}$ be a partition of the index set $\{1,\ldots,\ell+2N+2M\}$, in two disjoint sets such that if
\clacomment{$\{t_1,x_1,\ldots,t_{\ell+2N+2M},x_{\ell+2N+2M}\}=\{\widehat{t}_A,\widehat{x}_A,\widehat{t}_B,\widehat{x}_B\}$, then $t_a\neq t_b$ for any $t_a\in\widehat{t}_A$ and $t_b\in\widehat{t}_B$.}
In such scenario, the integral kernel of $U_{N,M}$ can be decomposed as\clacomment{
\begin{align*}
U_{N,M}(t_1,x_1,\ldots,t_{\ell+2N+2M},x_{\ell+2N+2M})=K^A_{N,M}(\widehat{t}_A,\widehat{x}_A)S_{N,M}(\widehat{t}_A,\widehat{x}_A,\widehat{t}_B,\widehat{x}_B)K^B_{N,M}(\widehat{t}_B,\widehat{x}_B)\,,
\end{align*}
where the kernel $S_{N,M}(\widehat{t}_A,\widehat{x}_A,\widehat{t}_B,\widehat{x}_B)$ is smooth on such a partition and where $K^A_{N,M}(\widehat{t}_A,\widehat{x}_A)$ and $K^B_{N,M}(\widehat{t}_B,\widehat{x}_B)$ are kernels of distributions appearing in the definition of the map $\Gamma_{\cdot_Q}^\mathbb{C}$ on $\mathcal{M}_{(k,k^\prime)}^p$ with $k<m+1$, $k^\prime<m^\prime$ and $p<j+1$.
This, together with the inductive hypothesis, implies that the product $(K^A_{N,M}\otimes K^B_{N,M})\cdot S_{N,M}$ is well defined.
Since this argument is independent from the partition $\{A,B\}$, it follows that $U_{N,M}\in\mathcal{D}'(M^{\ell+2N+2M}\setminus\Lambda_t^{\ell+2N+2M})$.
To conclude, observing that
\begin{equation*}
    \begin{split}
        \text{wsd}_{\Lambda_t^{\ell+2N+2M}}(U_{N,M})\leq \text{wsd}_{\Lambda_t^{\ell+2N+2M}}(\delta_{Diag_\ell}\otimes Q^{\otimes N}\otimes\overline{Q}^{\otimes M})+\sum_{i=1}^\ell\text{wsd}_{\Lambda_t^{1+N_i+M_i}}(U_{i}^{(N_i,M_i)})<\infty\,,
    \end{split}
\end{equation*}
once more, \cite[Thm. B.8]{Dappiaggi:2020gge} and \cite[Rem. B.9]{Dappiaggi:2020gge} grant the existence of an extension $\widehat{U}_{N,M}\in\mathcal{D}^\prime((\mathbb{R}^{d+1})^{\ell+2N+2M})$ of $U\in\mathcal{D}^\prime((\mathbb{R}^{d+1})^{\ell+2N+2M}\setminus\Lambda_t^{\ell+2N+2M})$ which preserves the scaling degree and the wave-front set.
This allows us to set 
\begin{align}\label{Eq: extension of the general case}
\Gamma_{\cdot_Q}^\mathbb{C}(\tau)(f;\eta,\eta^\prime)=
\sum_{N,M\geq0}\sum_{\substack{N_1+\ldots+N_\ell=N+M\\ M_1+\ldots+M_\ell=N+M}}\mathcal{F}(N_1,\ldots&,N_\ell;M_1,\ldots,M_\ell)\cdot\widehat{U}_{N,M}\,.
\end{align}}
Following the same strategy as in the proof of \cite[Thm. 3.1]{Dappiaggi:2020gge}, \textit{mutatis mutandis}, one can show that Equation \eqref{Eq: extension of the general case} satisfies the conditions required in the statement of the theorem.
\end{proof}

A map $\Gamma_{\cdot_Q}^\mathbb{C}$ satisfying the properties listed in Theorem \ref{Thm: deformation map cdot} is the main ingredient for introducing a deformation of the pointwise product allowing the class of functional-valued distributions to encode the stochastic behaviour induced by the white noise. 
This is codified in the algebra introduced in the following theorem, whose proof we omit since, being completely algebraic, it is analogous to the one of \cite[Cor. 3.3]{Dappiaggi:2020gge}.

\begin{remark}
We observe that the deformation map $\Gamma_{\cdot_Q}^\mathbb{C}$ introduced above is similar to the inverse of the one used to define Wick ordering, see \emph{e.g.}, \cite[Sec.2]{BBS19}.
\end{remark}

\begin{theorem}\label{thm_struttura_deformata}
Let $\Gamma_{\cdot_Q}^\mathbb{C}:\mathcal{A}^\mathbb{C}\rightarrow\mathcal{D}^\prime_C(\mathbb{R}^{d+1};\mathrm{Pol}_\mathbb{C})$ be the linear map introduced via Theorem \ref{Thm: deformation map cdot}.
Moreover, let us set $\mathcal{A}^\mathbb{C}_{\cdot_Q}\vcentcolon=\Gamma_{\cdot_Q}^\mathbb{C}(\mathcal{A})$.
In addition, let
\begin{equation}
    u_1\cdot_{\Gamma_{\cdot_Q}^\mathbb{C}}u_2\vcentcolon=\Gamma_{\cdot_Q}^\mathbb{C}\bigl[{\Gamma_{\cdot_Q}^\mathbb{C}}^{-1}(u_1){\Gamma_{\cdot_Q}^\mathbb{C}}^{-1}(u_2)\bigr],\hspace{1cm}\forall\,u_1,u_2\in\mathcal{A}^\mathbb{C}_{\cdot_Q}\,.
\end{equation}
Then $(\mathcal{A}^\mathbb{C}_{\cdot_Q},\cdot_{\Gamma_{\cdot_Q}^\mathbb{C}})$ is a unital, commutative and associative $\mathbb{C}$-algebra.
\end{theorem}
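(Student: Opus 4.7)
The strategy is to recognize that the formula
\[
u_1\cdot_{\Gamma_{\cdot_Q}^\mathbb{C}}u_2=\Gamma_{\cdot_Q}^\mathbb{C}\bigl[(\Gamma_{\cdot_Q}^\mathbb{C})^{-1}(u_1)\,(\Gamma_{\cdot_Q}^\mathbb{C})^{-1}(u_2)\bigr]
\]
realizes $\cdot_{\Gamma_{\cdot_Q}^\mathbb{C}}$ as the pushforward along $\Gamma_{\cdot_Q}^\mathbb{C}$ of the pointwise product of $\mathcal{A}^\mathbb{C}$. Once $\Gamma_{\cdot_Q}^\mathbb{C}$ is shown to be a linear bijection between $\mathcal{A}^\mathbb{C}$ and its image $\mathcal{A}^\mathbb{C}_{\cdot_Q}$, commutativity, associativity and unitality of the new product follow mechanically from the corresponding properties of the pointwise product, so the entire task reduces to establishing such a bijection together with identifying the unit.

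\textbf{Step 1: invertibility of $\Gamma_{\cdot_Q}^\mathbb{C}$.} I would argue by induction on the bigrading $\mathcal{A}^\mathbb{C}=\bigoplus_{m,m'\in\mathbb{N}_0}\mathcal{M}_{m,m'}$ of Remark \ref{moduli2}. The key observation, read off from Equation \eqref{Eq: Gammadot on products} combined with the explicit form of Equation \eqref{Eq: prodotto deformato2}, is that for every monomial $\tau\in\mathcal{M}_{m,m'}$ one has a triangular decomposition
\[
\Gamma_{\cdot_Q}^\mathbb{C}(\tau)=\tau+R(\tau),
\]
where the remainder $R(\tau)$ collects the $k\geq 1$ contractions in Equation \eqref{Eq: prodotto deformato2} together with the renormalized contributions produced in the proof of Theorem \ref{Thm: deformation map cdot}. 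Each such contraction either pairs a functional derivative $\Phi^{(1,0)}$ with a $\overline{\Phi}^{(0,1)}$ via $Q$ or $\overline{Q}$, or it reduces the polynomial order through $\widehat{Q}_{2k}$-type kernels; in all cases the output lies in $\bigoplus_{k\leq m,\,k'\leq m',\,(k,k')\neq(m,m')}\mathcal{M}_{k,k'}$. Applying this decomposition componentwise, any $\sigma\in\ker\Gamma_{\cdot_Q}^\mathbb{C}$ must have all its top-degree components vanish, then all subleading ones, and so on, forcing $\sigma=0$. Hence $\Gamma_{\cdot_Q}^\mathbb{C}\colon\mathcal{A}^\mathbb{C}\to\mathcal{A}^\mathbb{C}_{\cdot_Q}$ is a linear bijection and $(\Gamma_{\cdot_Q}^\mathbb{C})^{-1}$ is well defined on $\mathcal{A}^\mathbb{C}_{\cdot_Q}$.

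\textbf{Step 2: algebraic properties and unit.} With the bijection in hand, $(\mathcal{A}^\mathbb{C}_{\cdot_Q},\cdot_{\Gamma_{\cdot_Q}^\mathbb{C}})$ is tautologically isomorphic, as a $\mathbb{C}$-algebra, to $(\mathcal{A}^\mathbb{C},\cdot)$ via $\Gamma_{\cdot_Q}^\mathbb{C}$. Commutativity and associativity therefore descend from the corresponding properties of the pointwise product established in Corollary \ref{Cor: Pointwise Algebra Structure}: for any $u_1,u_2,u_3\in\mathcal{A}^\mathbb{C}_{\cdot_Q}$, writing $\tau_i=(\Gamma_{\cdot_Q}^\mathbb{C})^{-1}(u_i)$, one checks $u_1\cdot_{\Gamma_{\cdot_Q}^\mathbb{C}}u_2=\Gamma_{\cdot_Q}^\mathbb{C}(\tau_1\tau_2)=\Gamma_{\cdot_Q}^\mathbb{C}(\tau_2\tau_1)=u_2\cdot_{\Gamma_{\cdot_Q}^\mathbb{C}}u_1$, and similarly for associativity. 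For the unit, since every functional derivative of $\mathbf{1}$ vanishes, only the $k_1=k_2=0$ term survives in Equation \eqref{Eq: prodotto deformato2} when $\mathbf{1}$ is inserted, which together with Equation \eqref{Eq: Gammadot on products} gives $\Gamma_{\cdot_Q}^\mathbb{C}(\mathbf{1})=\mathbf{1}$, and this element then acts as the unit for $\cdot_{\Gamma_{\cdot_Q}^\mathbb{C}}$ by the same pushforward argument.

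\textbf{Main obstacle.} All the algebraic content is carried by Step 1; Step 2 is essentially functoriality of pushforward along a bijection of vector spaces. The subtlety to keep honest is that the renormalization extensions $\widehat{G\cdot\overline{G}}$, $\widehat{U}$, $\widehat{U}_{N,M}$ introduced in the proof of Theorem \ref{Thm: deformation map cdot} contribute to $R(\tau)$, but, as noted in that proof, they live in strictly lower $(m,m')$-bidegree and on diagonal configurations, hence they respect the triangular structure used in the inductive inversion. This is precisely the step paralleling \cite[Cor.~3.3]{Dappiaggi:2020gge}, which is why the authors legitimately omit the argument.
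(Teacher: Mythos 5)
Your proof is correct and follows exactly the route the authors intend: the paper omits the argument, deferring to the analogous \cite[Cor.~3.3]{Dappiaggi:2020gge}, whose content is precisely your triangularity-of-$\Gamma_{\cdot_Q}^\mathbb{C}$ observation (identity plus strictly lower $(m,m')$-bidegree remainder, hence a linear bijection onto its image) followed by the pushforward of the pointwise product. Both steps, including the identification $\Gamma_{\cdot_Q}^\mathbb{C}(\mathbf{1})=\mathbf{1}$ as the unit, are sound.
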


\paragraph{Uniqueness}
The main result of the previous section consists of the existence of a map $\Gamma_{\cdot_Q}^\mathbb{C}$ satisfying the condition as per Theorem \ref{Thm: deformation map cdot} and which, through a renormalization procedure, codifies at an algebraic level the stochastic properties of $\Psi$ induced by the complex white noise $\xi$.
In the following we shall argue that the map $\Gamma_{\cdot_Q}^\mathbb{C}$ is not unique and, to this end, it is convenient to focus on Equation \eqref{Eq: renormalized squared modulus} and in particular on  $\widehat{G\cdot\overline{G}}\in\mathcal{D}^\prime(\mathbb{R}^{d+1}\times\mathbb{R}^{d+1})$.
\clacomment{As discussed above, this is an extension of  $G\cdot\overline{G}\in\mathcal{D}^\prime(\mathbb{R}^{d+1}\times\mathbb{R}^{d+1}\setminus\Lambda_t^2)$.}
Due to \cite[Thm. B.8]{Dappiaggi:2020gge} and \cite[Rem. B.9]{Dappiaggi:2020gge}, such an extension might not be unique depending on the dimension $d\in\mathbb{N}$ of the underlying space.

In spite of this, we can draw some conclusions on the relation between different prescriptions for the deformation map $\Gamma_{\cdot_Q}^\mathbb{C}$ and in turn for the associated algebras $\mathcal{A}_{\cdot_Q}^\mathbb{C}$.
Results in this direction are analogous to those stated in \cite[Thm. 5.2]{Dappiaggi:2020gge}. Hence mutatis mutandis we adapt them to the complex scenario. Once more we do not give detailed proofs, since these can be readily obtained from the counterparts in \cite{Dappiaggi:2020gge}. 
The following two results characterize completely the arbitrariness in choosing the linear map $\Gamma_{\cdot_Q}^\mathbb{C}$, as well as the ensuing link between the deformed algebras.

\begin{theorem}\label{thm_unicc}
 Let $\Gamma_{\cdot_Q}^\mathbb{C},\,{\Gamma_{\cdot_Q}^\mathbb{C}}':\mathcal{A}^\mathbb{C}\rightarrow\mathcal{D}^\prime_C(\mathbb{R}^{d+1};\mathrm{Pol}_\mathbb{C})$ be two linear maps satisfying the requirements listed in Theorem \ref{Thm: deformation map cdot}. 
 There exists a family $\{C_{\ell,\ell^\prime}\}_{\ell,\ell^\prime\in\mathbb{N}_0}$ of linear maps $C_{\ell,\ell'}:\mathcal{A}^\mathbb{C}\rightarrow\mathcal{M}_{\ell,\ell'}$ satisfying the following properties
 \begin{itemize}
     \item For all $m, m'\in\mathbb{N}$ such that either $m\leq j+1$ or $m'\leq j'+1$, it holds
     \begin{equation*}
         C_{j,j'}[\mathcal{M}_{m,m'}]=0\,,
     \end{equation*}
     \item For all $\ell,\ell'\in\mathbb{N}_0$ and for all $\tau\in\mathcal{A}^\mathbb{C}$, it holds
     \begin{equation*}
         C_{\ell,\ell'}[G\circledast\tau]=G\circledast C_{\ell,\ell'}[\tau],\hspace{1cm}C_{\ell,\ell'}[\overline{G}\circledast\tau]=\overline{G}\circledast C_{\ell,\ell'}[\tau]\,,
     \end{equation*}
     \item For all $\ell,\ell'\in\mathbb{N}$ and for all $\zeta\in\mathcal{E}(\mathbb{R}^{d+1};\mathbb{C})$, it holds
     \begin{equation*}
         \delta_\zeta\circ C_{\ell,\ell'}=C_{\ell-1,\ell'}\circ\delta_\zeta,\hspace{1cm}\delta_{\overline{\zeta}}\circ C_{\ell,\ell'}=C_{\ell,\ell'-1}\circ\delta_{\overline{\zeta}}\,.
     \end{equation*}
     \item For all $\tau\in\mathcal{M}_{m,m'}$, it holds
     \begin{equation*}
         {\Gamma_{\cdot_Q}^\mathbb{C}}'(\tau)=\Gamma_{\cdot_Q}^\mathbb{C}\bigl(\tau+C_{m-1,m'-1}(\tau)\bigr)\,.
     \end{equation*}
 \end{itemize}
\end{theorem}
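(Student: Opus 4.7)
The plan is to mirror the inductive construction used in Theorem \ref{Thm: deformation map cdot} and simultaneously define the family $\{C_{\ell,\ell'}\}$ by comparing $\Gamma_{\cdot_Q}^\mathbb{C}$ and ${\Gamma_{\cdot_Q}^\mathbb{C}}'$ stratum by stratum on the decomposition $\mathcal{A}^\mathbb{C}=\bigoplus_{m,m'}\mathcal{M}_{m,m'}$. The guiding observation is that, at each inductive step in Theorem \ref{Thm: deformation map cdot}, both maps produce Epstein--Glaser extensions of the \emph{same} off-diagonal distribution; their difference is therefore supported on the relevant submanifold $\Lambda_t^n$ and, by \cite[Thm. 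B.8]{Dappiaggi:2020gge}, is a finite linear combination of derivatives of $\delta_{\Lambda_t^n}$ transverse to the submanifold with smooth coefficients. These finitely many coefficients are what will be repackaged as the values of the maps $C_{\ell,\ell'}$.

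First I would settle the base case. On $\mathcal{M}_{1,0}$ and $\mathcal{M}_{0,1}$ property \eqref{Eq: Gammadot on M_1} forces the two maps to coincide, so one sets $C_{\ell,\ell'}\vert_{\mathcal{M}_{1,0}\oplus\mathcal{M}_{0,1}}=0$ for every $\ell,\ell'$. On $\mathcal{M}_{1,1}^0$ the only nontrivial element is $\Phi\overline{\Phi}$, and by Equation \eqref{Eq: renormalized squared modulus} the two prescriptions differ only by the ambiguity $\widehat{G\cdot\overline{G}}'-\widehat{G\cdot\overline{G}}$, a distribution supported on $\Lambda_t^2$ of finite weighted scaling degree. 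Writing this difference as $\sum_\alpha c_\alpha\,\partial^\alpha \delta_{\Lambda_t^2}$ with smooth $c_\alpha$ and pulling back through the integration in $\Phi\overline{\Phi}$ identifies a functional-valued element of $\mathcal{M}_{0,0}$ which, by definition, is the value of $C_{0,0}$ on $\Phi\overline{\Phi}$.

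For the inductive step on $(m,m')$, I would follow Steps 2 and 3 of the proof of Theorem \ref{Thm: deformation map cdot}. Assuming $\{C_{\ell,\ell'}\}$ has already been defined with the required properties on all $\mathcal{M}_{k,k'}$ with $k\leq m$ and $k'\leq m'$ (and not both equalities), a generic $\tau\in\mathcal{M}_{m+1,m'}^{j+1}$ is either of the form $G\circledast\tau'$ (or $\overline{G}\circledast\tau'$), in which case Equation \eqref{Eq: Gammadot Gtau} and the inductive hypothesis force $C_{\ell,\ell'}(G\circledast\tau')=G\circledast C_{\ell,\ell'}(\tau')$ so that the second property is automatically satisfied; or a product $\tau_1\ldots\tau_\ell$ of lower-order factors, in which case expanding ${\Gamma_{\cdot_Q}^\mathbb{C}}'(\tau)-\Gamma_{\cdot_Q}^\mathbb{C}(\tau)$ via Equation \eqref{Eq: prodotto deformato2} and the inductive hypothesis reduces the comparison to the single freedom in extending the top-level distribution $U_{N,M}$ across $\Lambda_t^{\ell+2N+2M}$. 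This residual freedom, classified once more by \cite[Thm. B.8]{Dappiaggi:2020gge}, is precisely what encodes $C_{m,m'-1}(\tau)$: it sits in $\mathcal{M}_{m,m'-1}$ because the extension lowers by one the polynomial degree in both $\Phi$ and $\overline{\Phi}$ through the contraction by $\delta_{\Lambda_t^n}$. The commutation with $\delta_\zeta$ and $\delta_{\overline{\zeta}}$ required by the third bullet then follows by applying the corresponding commutation relations in Equation \eqref{Eq: Gammadot and derivatives} to both sides of ${\Gamma_{\cdot_Q}^\mathbb{C}}'(\tau)=\Gamma_{\cdot_Q}^\mathbb{C}(\tau+C_{m-1,m'-1}(\tau))$ and using the uniqueness of the inductive construction at lower order.

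The main obstacle will be the consistency check: since the same element of $\mathcal{M}_{m,m'}$ can be presented in multiple ways (as a product of lower-order factors, or as $G$-/$\overline{G}$-images of already constructed functionals), one has to verify that the resulting assignment of $C_{\ell,\ell'}$ does not depend on the representation chosen. This amounts essentially to rerunning the associativity and well-definedness arguments of Theorem \ref{Thm: deformation map cdot}, exactly as done in \cite[Thm. 5.2]{Dappiaggi:2020gge} for the real scalar case; the complex version requires only a careful bookkeeping of the two independent gradings coming from $\Phi$ and $\overline{\Phi}$, which is accounted for by the double-indexed family $\{C_{\ell,\ell'}\}$ rather than the single family used in the cited reference. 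Once consistency is established, the first bullet of the statement is automatic by construction, because the recursive definition places $C_{\ell,\ell'}(\tau)$ in $\mathcal{M}_{\ell,\ell'}$ only when $\tau$ has polynomial degree strictly greater than $(\ell,\ell')$ in both arguments, and vanishes otherwise.
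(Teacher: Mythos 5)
Your proposal is correct and takes the route the paper itself intends: the paper omits the proof of this theorem, deferring to the real-scalar counterpart in \cite{Dappiaggi:2020gge}, and your stratum-by-stratum inductive comparison of the two maps --- in which, after substituting the lower-order counterterms, the residual difference at each step is the ambiguity between two Epstein--Glaser extensions across $\Lambda_t^n$, classified as finitely many $\delta_{\Lambda_t^n}$-supported terms of lower bidegree --- is precisely that adaptation, including the correct observation that each $Q$- or $\overline{Q}$-contraction lowers the $\Phi$- and $\overline{\Phi}$-degrees by one simultaneously, so the counterterm for $\mathcal{M}_{m,m'}$ lands in $\mathcal{M}_{m-1,m'-1}$. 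Two points are left implicit (the first also in the paper): that the derivative-of-delta ambiguities are genuinely expressible inside the modules $\mathcal{M}_{\ell,\ell'}$ as defined in Remark \ref{moduli2}, and that your reading of the vanishing condition as ``degree strictly greater than $(\ell,\ell')$ in both arguments'' quietly corrects what appears to be an off-by-one in the statement's first bullet, which as literally written (with $m\leq j+1$) would force $C_{m-1,m'-1}$ to annihilate $\mathcal{M}_{m,m'}$ and trivialize the fourth bullet.
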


\begin{corollary}
    Under the hypothesis of Theorem \ref{thm_unicc}, the algebras $\mathcal{A}^\mathbb{C}_{\cdot_Q}=\Gamma_{\cdot_Q}^\mathbb{C}(\mathcal{A}^\mathbb{C})$ and ${\mathcal{A}^\mathbb{C}}'_{\cdot_Q}={\Gamma_{\cdot_Q}^\mathbb{C}}'(\mathcal{A}^\mathbb{C})$, defined as per Theorem \ref{thm_struttura_deformata}, are isomorphic.
\end{corollary}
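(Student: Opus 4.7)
The plan is to exhibit an explicit algebra isomorphism built out of the two maps provided by the hypothesis. I will not try to compare the two deformed products directly; instead I will use the fact that, by construction, each of $\Gamma_{\cdot_Q}^\mathbb{C}$ and ${\Gamma_{\cdot_Q}^\mathbb{C}}'$ is itself an algebra isomorphism between $(\mathcal{A}^\mathbb{C},\cdot)$ endowed with the pointwise product and its image endowed with the corresponding deformed product, as recorded in Theorem \ref{thm_struttura_deformata}. Composing one with the inverse of the other will yield the sought isomorphism between the two deformed algebras.

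First I would verify invertibility of both deformation maps on $\mathcal{A}^\mathbb{C}$. Owing to Equation \eqref{Eq: Gammadot on M_1} and to the inductive construction in the proof of Theorem \ref{Thm: deformation map cdot}, the restriction of $\Gamma_{\cdot_Q}^\mathbb{C}$ to $\mathcal{M}_{m,m'}$ is of the form $\mathrm{id}+R_{m,m'}$ with $R_{m,m'}$ lowering strictly the bi-degree $(m,m')$; this triangular structure with respect to the decomposition $\mathcal{A}^\mathbb{C}=\bigoplus_{m,m'}\mathcal{M}_{m,m'}$ implies that $\Gamma_{\cdot_Q}^\mathbb{C}$ is a linear bijection onto $\mathcal{A}^\mathbb{C}_{\cdot_Q}$. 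The same observation applies to ${\Gamma_{\cdot_Q}^\mathbb{C}}'$, or, alternatively, can be deduced directly from Theorem \ref{thm_unicc}, since the correction $\tau\mapsto\tau+C_{m-1,m'-1}(\tau)$ is also of the form identity-plus-lower-degree and therefore invertible on each $\mathcal{M}_{m,m'}$.

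Having established invertibility, I would define
\begin{equation*}
\alpha\vcentcolon={\Gamma_{\cdot_Q}^\mathbb{C}}'\circ(\Gamma_{\cdot_Q}^\mathbb{C})^{-1}\colon\mathcal{A}^\mathbb{C}_{\cdot_Q}\longrightarrow{\mathcal{A}^\mathbb{C}}'_{\cdot_Q}\,,
\end{equation*}
which is a bijection with inverse $\Gamma_{\cdot_Q}^\mathbb{C}\circ({\Gamma_{\cdot_Q}^\mathbb{C}}')^{-1}$. The verification that $\alpha$ is a $\mathbb{C}$-algebra homomorphism then reduces to unwinding the definitions: for $u_i=\Gamma_{\cdot_Q}^\mathbb{C}(\tau_i)$, $i=1,2$, one has $(\Gamma_{\cdot_Q}^\mathbb{C})^{-1}(u_i)=\tau_i=({\Gamma_{\cdot_Q}^\mathbb{C}}')^{-1}(\alpha(u_i))$, and hence
\begin{equation*}
\alpha(u_1\cdot_{\Gamma_{\cdot_Q}^\mathbb{C}}u_2)={\Gamma_{\cdot_Q}^\mathbb{C}}'(\tau_1\tau_2)=\alpha(u_1)\cdot_{{\Gamma_{\cdot_Q}^\mathbb{C}}'}\alpha(u_2)\,,
\end{equation*}
where the pointwise product $\tau_1\tau_2$ is taken in $\mathcal{A}^\mathbb{C}$, \emph{cf.}\ Corollary \ref{Cor: Pointwise Algebra Structure}. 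Since both $\Gamma_{\cdot_Q}^\mathbb{C}$ and ${\Gamma_{\cdot_Q}^\mathbb{C}}'$ act as the identity on the unit $\mathbf{1}\in\mathcal{M}_{0,0}$, $\alpha$ is unital as well.

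I expect no serious obstacle: the non-trivial analytic content — existence of the renormalized extensions, preservation of the scaling degree and of the wavefront set class $C_m$ — has already been absorbed into the existence of $\Gamma_{\cdot_Q}^\mathbb{C}$ and ${\Gamma_{\cdot_Q}^\mathbb{C}}'$, and Theorem \ref{thm_unicc} is used only to guarantee that the difference of the two prescriptions has the triangular form needed for invertibility. The only point that requires a modicum of care is to check that $R_{m,m'}$ in the decomposition $\Gamma_{\cdot_Q}^\mathbb{C}|_{\mathcal{M}_{m,m'}}=\mathrm{id}+R_{m,m'}$ indeed maps into $\bigoplus_{p\leq m,\,q\leq m',\,(p,q)\neq (m,m')}\mathcal{M}_{p,q}$; this is a direct inspection of Equation \eqref{Eq: prodotto deformato2} as used in the inductive step of Theorem \ref{Thm: deformation map cdot}, since each renormalized contraction by $Q$ or $\overline{Q}$ strictly lowers the polynomial degree in $\Phi$ and $\overline{\Phi}$.
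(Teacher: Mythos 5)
Your argument is correct and is precisely the intended one: since Theorem \ref{thm_struttura_deformata} defines each deformed product by transporting the pointwise product along the corresponding $\Gamma$-map, the composition ${\Gamma_{\cdot_Q}^\mathbb{C}}'\circ(\Gamma_{\cdot_Q}^\mathbb{C})^{-1}$ is tautologically a unital algebra isomorphism, with invertibility of each map following from its triangular (identity-plus-degree-lowering) structure. The paper omits the proof, deferring to the scalar counterpart in \cite{Dappiaggi:2020gge}, and your reconstruction matches that argument; the only cosmetic caveat is that the $\mathcal{M}_{m,m'}$ of Remark \ref{moduli2} form a filtration rather than a grading, so the triangularity should be read with respect to the graded pieces $\mathcal{M}_{p,q,l,l'}$, which changes nothing in the conclusion.
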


\section{The Algebra $\mathcal{A}^{\mathbb{C}}_{\bullet_Q}$}\label{Sec: correlations}

The construction of Section \ref{Sec: expectations} allows to compute the expectation values of the stochastic distributions appearing in the \emph{perturbative expansion} of the stochastic nonlinear Schr\"odinger equation, as we shall discuss in Section \ref{Sec: perturbative analysis}.

Alas, this falls short from characterizing, again at a perturbative level, the stochastic behaviour of the perturbative solution on Equation \eqref{Eq: Stochastic NLS}, since it does not allow the computation of multi-local \emph{correlation functions}, which are instead of great relevance for applications. 

Similarly to the case of expectation values, also correlation functions can be obtained through a deformation procedure of a suitable non-local algebra constructed out of $\mathcal{A}^\mathbb{C}_{\cdot_Q}$. Before moving to the main result of this section, we introduce the necessary notation.

As a starting point, we need a multi-local counterpart for the space of polynomial functional-valued distributions, namely
\begin{equation}\label{eq_T'}
      \mathcal{T}_{C}^{'\scriptscriptstyle \mathbb{C}}(\mathbb{R}^{d+1};\text{Pol})\vcentcolon=\mathbb{C}\oplus\bigoplus_{\ell\geq1}\mathcal{D}^\prime_C((\mathbb{R}^{d+1})^\ell;\mathsf{Pol}_{\mathbb{C}})\,.
\end{equation}
We are also interested in 
\begin{equation}\label{eq_modulo2}
    \mathcal{T}^{\scriptscriptstyle \mathbb{C}}(\mathcal{A}^\mathbb{C}_{\cdot_Q})\vcentcolon=\mathcal{E}(\mathbb{R}^{d+1};\mathbb{C})\oplus\bigoplus_{\ell>0}{\mathcal{A}^\mathbb{C}}_{\cdot_Q}^{\,\otimes\ell}\,.
\end{equation}
 
As anticipated, we encode the information on correlation functions through a deformed algebra structure induced over $  \mathcal{T}^{\scriptscriptstyle \mathbb{C}}(\mathcal{A}^\mathbb{C}_{\cdot_Q})$.
The main ingredient is the bi-distribution $Q\in\mathcal{D}^\prime(\mathbb{R}^{d+1}\times\mathbb{R}^{d+1})$ we introduced in Section \ref{Sec: expectations}, which is the $2$-point correlation function of the stochastic convolution $G\circledast\xi$.
Starting from $Q$ we construct the following product $\bullet_Q$: for any $\tau_1\in\mathcal{D}^\prime_C((\mathbb{R}^{d+1})^{\ell_1};\mathsf{Pol}_{\mathbb{C}})$ and $\tau_2\in\mathcal{D}^\prime_C((\mathbb{R}^{d+1})^{\ell_2};\mathsf{Pol}_{\mathbb{C}})$, for any $f_1\in\mathcal{D}((\mathbb{R}^{d+1})^{\ell_1})$, $f_2\in\mathcal{D}((\mathbb{R}^{d+1})^{\ell_2})$ and for all $\eta,\eta^\prime\in\mathcal{E}(\mathbb{R}^{d+1})$ 
\begin{gather}
 (\tau_1\bullet_Q\tau_2)(f_1\otimes f_2;\eta,\eta^\prime)=\notag\\
 \sum\limits_{\substack{k\geq0\\ k_1+k_2=k}}\frac{1}{k_1!k_2!}\big[(\mathbf{1}_{\ell_1+\ell_2}\otimes Q^{\otimes k_1}\otimes\overline{Q}^{\otimes k_2})\cdot(u_1^{(k_1,k_2)}\Tilde{\otimes}u_2^{(k_2,k_1)})\bigr](f_1\otimes f_2\otimes1_{2k};\eta,\eta^\prime)\,.\label{eq_bullet2}
\end{gather}
Here $\Tilde{\otimes}$ codifies a particular permutation of the arguments of the tensor product as in Equation \eqref{Eq: Tilde otimes}.

\begin{theorem}\label{thm_bullet2}
Let $\mathcal{A}^\mathbb{C}_{\cdot_Q}=\Gamma_{\cdot_Q}^\mathbb{C}(\mathcal{A}^\mathbb{C})$ be the deformed algebra defined in Theorem \ref{thm_struttura_deformata} and let us consider the space $ \mathcal{T}_{C}^{'\scriptscriptstyle \mathbb{C}}(\mathbb{R}^{d+1};\mathsf{Pol}_{\mathbb{C}})$ introduced in Equation \eqref{eq_T'}, as well as the universal tensor module $\mathcal{T}^{\scriptscriptstyle \mathbb{C}}(\mathcal{A}^\mathbb{C}_{\cdot_Q})$ defined via Equation \eqref{eq_modulo2}. Then there exists a linear map $\Gamma_{\bullet_Q}^\mathbb{C}:\mathcal{T}^{\scriptscriptstyle\mathbb{C}}(\mathcal{A}^\mathbb{C}_{\cdot_Q})\rightarrow\mathcal{T}'_C(\mathbb{R}^{d+1};\mathsf{Pol}_{\mathbb{C}})$ satisfying the following properties:
\begin{enumerate}
    \item For all $\tau_1\ldots,\tau_\ell\in\mathcal{A}^\mathbb{C}_{\cdot_Q}$ with $\tau_1\in\Gamma_{\cdot_Q}^\mathbb{C}(\mathcal{M}_{1,0})$ or $\tau_1\in\Gamma_{\cdot_Q}^\mathbb{C}(\mathcal{M}_{0,1})$, see Remark \ref{moduli2}, it holds
    \begin{equation}
        \Gamma_{\bullet_Q}^\mathbb{C}(\tau_1\otimes\ldots\otimes\tau_\ell)\vcentcolon=\tau_1\bullet_Q\Gamma_{\bullet_Q}^\mathbb{C}(\tau_2\otimes\ldots\otimes\tau_n)\,,
    \end{equation}
    where $\bullet_Q$ is the product defined via Equation \ref{eq_bullet2}.
    \item For any $\tau_1,\ldots,\tau_\ell\in\mathcal{A}^\mathbb{C}_{\cdot_Q}$, $\eta,\eta^\prime\in\mathcal{E}(M;\mathbb{C})$ and $f_1,\ldots,f_n\in\mathcal{D}(\mathbb{R}^{d+1};\mathbb{C})$ such that there exists $I \subsetneq\{1,\ldots,n\}$ for which
    \begin{equation*}
        \bigcup_{i\in I}\text{supp}(f_i)\cap\bigcup_{j\notin I}\text{supp}(f_j)=\emptyset\,,
    \end{equation*}
    it holds
    \begin{equation*}
        \Gamma_{\bullet_Q}^\mathbb{C}(\tau_1\otimes\ldots\otimes\tau_n)(f_1\otimes\ldots\otimes f_n;\eta,\eta^\prime)
        =\Bigl[\Gamma_{\bullet_Q}^\mathbb{C}\bigl(\bigotimes_{i\in I}\tau_i\bigr)\bullet_Q \Gamma_{\bullet_Q}^\mathbb{C}\bigl(\bigotimes_{j\notin I}\tau_j\bigr)\Bigr](f_1\otimes\ldots\otimes f_n;\eta,\eta^\prime)\,.
    \end{equation*}
    \item Denoting by $\delta_\zeta,\delta_{\overline{\zeta}}$ the functional derivatives in the direction of $\zeta,\overline{\zeta}\in\mathcal{E}(\mathbb{R}^{d+1};\mathbb{C})$, $\Gamma_{\bullet_Q}^\mathbb{C}$ satisfies the following identities:
    \begin{align}
        &\Gamma_{\bullet_Q}^\mathbb{C}(\tau)=\tau\,,\hspace{1cm}\forall\tau\in\mathcal{A}^\mathbb{C}_{\cdot_Q},\label{identita_A}\\
        &\Gamma_{\bullet_Q}^\mathbb{C}\circ\delta_{\zeta}=\delta_\zeta\circ\Gamma_{\bullet_Q}^\mathbb{C}\,,\hspace{0.6cm}\Gamma_{\bullet_Q}^\mathbb{C}\circ\delta_{\overline{\zeta}}=\delta_{\overline{\zeta}}\circ\Gamma_{\bullet_Q}^\mathbb{C}\,,\hspace{0.6cm}\forall\zeta\in\mathcal{E}(\mathbb{R}^{d+1};\mathbb{C})\,,
    \end{align}
    \begin{equation}
    \begin{split}
        \Gamma_{\bullet_Q}^\mathbb{C}(\tau_1&\otimes\ldots\otimes G\circledast\tau_i\otimes\ldots\otimes\tau_n)\\
        &=(\delta_{\text{Diag}_2}^{\otimes i-1}\otimes G\otimes \delta_{\text{Diag}_2}^{\otimes n-i})\circledast\Gamma_{\bullet_Q}^\mathbb{C}(\tau_1\otimes\ldots\otimes\tau_i\otimes\ldots\otimes\tau_n)\,.
        \end{split}
    \end{equation}
     \begin{equation}
    \begin{split}
        \Gamma_{\bullet_Q}^\mathbb{C}(\tau_1&\otimes\ldots\otimes \overline{G}\circledast\tau_i\otimes\ldots\otimes\tau_n)\\
        &=(\delta_{\text{Diag}_2}^{\otimes i-1}\otimes\overline{G}\otimes\delta_{\text{Diag}_2}^{\otimes n-i})\circledast\Gamma_{\bullet_Q}^\mathbb{C}(\tau_1\otimes\ldots\otimes \tau_i\otimes\ldots\otimes\tau_n)\,.
        \end{split}
    \end{equation}
    for all $\tau_1,\ldots,\tau_n\in\mathcal{A}^\mathbb{C}_{\cdot_Q}$, $n\in\mathbb{N}_0$.
\end{enumerate}
\end{theorem}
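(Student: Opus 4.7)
The plan is to mimic the inductive construction of Theorem \ref{Thm: deformation map cdot}, now organized according to the tensor degree $n$ of the argument in $\mathcal{T}^{\scriptscriptstyle\mathbb{C}}(\mathcal{A}^\mathbb{C}_{\cdot_Q})$. For $n=0,1$ the map is completely fixed by \eqref{identita_A}, so the base case is immediate. For $n\geq 2$, I would use Property \emph{1} as the recursive defining formula whenever the first tensor slot hosts an element of $\Gamma_{\cdot_Q}^\mathbb{C}(\mathcal{M}_{1,0})\cup\Gamma_{\cdot_Q}^\mathbb{C}(\mathcal{M}_{0,1})$, then extend by linearity and by the compatibility rules of Property \emph{3} to arbitrary arguments. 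The graded module decomposition of Remark \ref{moduli2}, together with the commutativity and associativity of $\cdot_{\Gamma_{\cdot_Q}^\mathbb{C}}$ secured by Theorem \ref{thm_struttura_deformata}, allow us to reduce any tensor $\tau_1\otimes\ldots\otimes\tau_n$ to a linear combination of tensors whose first slot is a generator of the required type dressed by suitable compositions with $G\circledast$ and $\overline{G}\circledast$.

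The core analytic content is that the product $\bullet_Q$ in Equation \eqref{eq_bullet2} is a priori only formal, and each summand must be renormalized. Setting
\[
V_{k_1,k_2}\vcentcolon=(\mathbf{1}_{\ell_1+\ell_2}\otimes Q^{\otimes k_1}\otimes\overline{Q}^{\otimes k_2})\cdot\bigl(u_1^{(k_1,k_2)}\Tilde{\otimes}u_2^{(k_1,k_2)}\bigr),
\]
I would verify inductively that $V_{k_1,k_2}$ is well defined on the complement of the partial diagonal $\Lambda_t^{\ell_1+\ell_2+2k_1+2k_2}$. This follows by combining Proposition \ref{Prop: WF of the Algebra} and Remark \ref{Rem: Q} with the H\"ormander multiplication criterion, exactly along the lines of Step \emph{3} in the proof of Theorem \ref{Thm: deformation map cdot}: the relevant partition argument shows that outside the partial diagonal, the integral kernel factorizes into a smooth core times lower-order $\Gamma_{\cdot_Q}^\mathbb{C}$-kernels whose products have already been controlled. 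The finiteness of $\mathrm{wsd}_{\Lambda_t^{\ell_1+\ell_2+2k_1+2k_2}}(V_{k_1,k_2})$ then follows from the sub-additivity of the weighted scaling degree under tensor products and from the inductive control inherited from \eqref{Eq: Gammadot and sd}. Once both conditions are established, \cite[Thm. B.8]{Dappiaggi:2020gge} furnishes an extension $\widehat V_{k_1,k_2}$ which preserves the wave-front set and the scaling degree; fixing one such extension once and for all defines $\Gamma_{\bullet_Q}^\mathbb{C}(\tau_1\otimes\ldots\otimes\tau_n)$.

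Having secured existence, I would close the argument by verifying Properties \emph{2} and \emph{3} by direct inspection. Property \emph{2} is the cleanest: under the disjoint-support assumption, no partial-diagonal singularity of $V_{k_1,k_2}$ lies in the joint support of the test functions, so no renormalization is activated across the partition $I,\{1,\ldots,n\}\setminus I$, and the factorization reduces algebraically to $\bullet_Q$ acting on the two subfactors. The intertwining with the functional derivatives $\delta_\zeta,\delta_{\overline\zeta}$ and with the convolutions $G\circledast,\overline{G}\circledast$ follows from their action at the kernel level on $V_{k_1,k_2}$ and from the analogous relations for $\Gamma_{\cdot_Q}^\mathbb{C}$ proved in Theorem \ref{Thm: deformation map cdot}, while \eqref{identita_A} holds by construction of the $n=1$ base case.

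The main obstacle will be the combinatorial bookkeeping in the inductive step: the renormalization must be carried out on the partial diagonal $\Lambda_t^{\ell_1+\ell_2+2k_1+2k_2}$, rather than on the total diagonal as in the parabolic setting of \cite{Dappiaggi:2020gge}, and one has to show that the recursive prescription of Property \emph{1} is independent of which element in $\Gamma_{\cdot_Q}^\mathbb{C}(\mathcal{M}_{1,0})\cup\Gamma_{\cdot_Q}^\mathbb{C}(\mathcal{M}_{0,1})$ is singled out as the first slot. This well-posedness is the multi-local counterpart of the consistency check already faced in the construction of $\Gamma_{\cdot_Q}^\mathbb{C}$, and it will ultimately be forced by the associativity of $\cdot_{\Gamma_{\cdot_Q}^\mathbb{C}}$ in conjunction with the causal factorization of Property \emph{2} applied to widely separated supports, followed by a continuity/density argument to remove the separation.
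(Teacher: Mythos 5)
The paper omits the proof of this theorem entirely, remarking only that it follows, barring minor modifications, the lines of the corresponding construction in \cite[Sec. 4]{Dappiaggi:2020gge}, so there is no written argument to compare against in detail. Your proposal --- induction on the tensor degree with Property \emph{1} as the recursive definition, wavefront-set and weighted-scaling-degree control with respect to the partial time-diagonal $\Lambda_t$ in the spirit of the proof of Theorem \ref{Thm: deformation map cdot}, and extension via \cite[Thm. B.8]{Dappiaggi:2020gge} --- is precisely the adaptation the authors have in mind, and it is consistent with the intended approach.
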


\noindent Similarly to Section \ref{Sec: expectations}, through a map $\Gamma_{\bullet_Q}^\mathbb{C}$ we can induce an algebra structure.

\begin{theorem}\label{thm_bulletalgebra}
  Let $\Gamma_{\bullet_Q}^\mathbb{C}:\mathcal{T}^{\scriptscriptstyle \mathbb{C}}(\mathcal{A}^\mathbb{C}_{\cdot_Q})\rightarrow\mathcal{T}_C^{'\scriptscriptstyle \mathbb{C}}(\mathbb{R}^{d+1};\mathsf{Pol}_{\mathbb{C}})$ be a linear map satisfying the constraints of Theorem \ref{thm_bullet2} and let us define
    \begin{equation}\label{defoQ}
        \mathcal{A}^\mathbb{C}_{\bullet_Q}\vcentcolon=\Gamma_{\bullet_Q}^\mathbb{C}(\mathcal{A}^\mathbb{C}_{\cdot_Q})\subseteq\mathcal{T}_C^{'\scriptscriptstyle \mathbb{C}}(\mathbb{R}^{d+1};\mathsf{Pol}_{\mathbb{C}})\,.
    \end{equation}
    Furthermore, let us consider the bilinear map $\bullet_{\Gamma_{\bullet_Q}^\mathbb{C}}:\mathcal{A}^\mathbb{C}_{\bullet_Q}\times\mathcal{A}^\mathbb{C}_{\bullet_Q}\rightarrow\mathcal{A}^\mathbb{C}_{\bullet_Q}$ 
    \begin{equation}\label{prod_bullet}
        \tau_1\bullet_{\Gamma_{\bullet_Q}^\mathbb{C}}\tau_2\vcentcolon=\Gamma_{\bullet_Q}^\mathbb{C}\bigl({\Gamma_{\bullet_Q}^\mathbb{C}}^{-1}(u_1)\otimes{\Gamma_{\bullet_Q}^\mathbb{C}}^{-1}(u_1)\bigr)\hspace{1cm}\forall\,\tau_1,\tau_2\in\mathcal{A}^\mathbb{C}_{\bullet_Q}\,.
    \end{equation}
    Then $(\mathcal{A}^\mathbb{C}_{\bullet_Q},\bullet_{\Gamma_{\bullet_Q}^\mathbb{C}})$ identifies a unital, commutative and associative algebra.
\end{theorem}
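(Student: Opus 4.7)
The strategy is to interpret the product $\bullet_{\Gamma_{\bullet_Q}^\mathbb{C}}$ as the associative, commutative, unital algebra structure naturally carried by the (symmetric) tensor module $\mathcal{T}^{\scriptscriptstyle \mathbb{C}}(\mathcal{A}^\mathbb{C}_{\cdot_Q})$, \emph{transported} onto $\mathcal{A}^\mathbb{C}_{\bullet_Q}$ via the bijection $\Gamma_{\bullet_Q}^\mathbb{C}$. Since the proof is purely algebraic, in close analogy with Theorem \ref{thm_struttura_deformata} and its counterpart in \cite{Dappiaggi:2020gge}, the argument proceeds in the following four steps.

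First, one checks that $\Gamma_{\bullet_Q}^\mathbb{C}$ is injective on $\mathcal{T}^{\scriptscriptstyle \mathbb{C}}(\mathcal{A}^\mathbb{C}_{\cdot_Q})$, at least up to the symmetrization induced by the manifest symmetry of $\bullet_Q$ in its two arguments (visible in Equation \eqref{eq_bullet2}: the sum over $k_1,k_2$ and the modified tensor product $\tilde\otimes$ are invariant under swapping $\tau_1\leftrightarrow \tau_2$). This step rests on the characterization in property \emph{1.} of Theorem \ref{thm_bullet2}, which uniquely determines $\Gamma_{\bullet_Q}^\mathbb{C}$ on simple tensors by iteration once a factor in $\Gamma_{\cdot_Q}^\mathbb{C}(\mathcal{M}_{1,0})\cup\Gamma_{\cdot_Q}^\mathbb{C}(\mathcal{M}_{0,1})$ is singled out, and on the normalization \eqref{identita_A}. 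Consequently ${\Gamma_{\bullet_Q}^\mathbb{C}}^{-1}$ in Equation \eqref{prod_bullet} is well-defined on $\mathcal{A}^\mathbb{C}_{\bullet_Q}$, and the formula
\[
\tau_1\bullet_{\Gamma_{\bullet_Q}^\mathbb{C}}\tau_2=\Gamma_{\bullet_Q}^\mathbb{C}\bigl({\Gamma_{\bullet_Q}^\mathbb{C}}^{-1}(\tau_1)\otimes {\Gamma_{\bullet_Q}^\mathbb{C}}^{-1}(\tau_2)\bigr)
\]
yields an element of $\mathcal{A}^\mathbb{C}_{\bullet_Q}$ by construction, so the product is internal.

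Second, commutativity follows at once: the tensor product in $\mathcal{T}^{\scriptscriptstyle \mathbb{C}}(\mathcal{A}^\mathbb{C}_{\cdot_Q})$ is symmetric under permutation of tensor factors, since the symmetry of $\bullet_Q$ noted above together with property \emph{2.} of Theorem \ref{thm_bullet2} forces $\Gamma_{\bullet_Q}^\mathbb{C}(\tau_1\otimes\tau_2)=\Gamma_{\bullet_Q}^\mathbb{C}(\tau_2\otimes\tau_1)$, and similarly for higher tensor powers. Pushing this symmetry forward through $\Gamma_{\bullet_Q}^\mathbb{C}$ gives $\tau_1\bullet_{\Gamma_{\bullet_Q}^\mathbb{C}}\tau_2=\tau_2\bullet_{\Gamma_{\bullet_Q}^\mathbb{C}}\tau_1$. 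For the unit, one observes that the unit $\mathbf{1}\in\mathcal{A}^\mathbb{C}_{\cdot_Q}$ satisfies $\mathbf{1}^{(k,k')}=0$ for all $(k,k')\neq (0,0)$, so Equation \eqref{eq_bullet2} collapses to $\mathbf{1}\bullet_Q\tau=\tau$ and $\Gamma_{\bullet_Q}^\mathbb{C}(\mathbf{1})=\mathbf{1}$ by \eqref{identita_A}. Hence $\mathbf{1}$ is a unit for $\bullet_{\Gamma_{\bullet_Q}^\mathbb{C}}$.

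Third, for associativity, one uses the fundamental identity
\[
\Gamma_{\bullet_Q}^\mathbb{C}(\tau_1\otimes\tau_2)=\Gamma_{\bullet_Q}^\mathbb{C}(\tau_1)\bullet_{\Gamma_{\bullet_Q}^\mathbb{C}}\Gamma_{\bullet_Q}^\mathbb{C}(\tau_2),
\]
which is a direct rewriting of \eqref{prod_bullet}. Iterating and exploiting associativity of the tensor product in $\mathcal{T}^{\scriptscriptstyle \mathbb{C}}(\mathcal{A}^\mathbb{C}_{\cdot_Q})$, one finds
\[
(\tau_1\bullet_{\Gamma_{\bullet_Q}^\mathbb{C}}\tau_2)\bullet_{\Gamma_{\bullet_Q}^\mathbb{C}}\tau_3
=\Gamma_{\bullet_Q}^\mathbb{C}\bigl({\Gamma_{\bullet_Q}^\mathbb{C}}^{-1}(\tau_1)\otimes {\Gamma_{\bullet_Q}^\mathbb{C}}^{-1}(\tau_2)\otimes {\Gamma_{\bullet_Q}^\mathbb{C}}^{-1}(\tau_3)\bigr)
=\tau_1\bullet_{\Gamma_{\bullet_Q}^\mathbb{C}}(\tau_2\bullet_{\Gamma_{\bullet_Q}^\mathbb{C}}\tau_3),
\]
for any $\tau_1,\tau_2,\tau_3\in\mathcal{A}^\mathbb{C}_{\bullet_Q}$. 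The main (and only non-formal) obstacle in the whole argument is the injectivity/consistency step, which requires one to check that the iterative prescription in property \emph{1.} of Theorem \ref{thm_bullet2} is independent of the order in which the generators from $\Gamma_{\cdot_Q}^\mathbb{C}(\mathcal{M}_{1,0})\cup\Gamma_{\cdot_Q}^\mathbb{C}(\mathcal{M}_{0,1})$ are peeled off; this is precisely the compatibility encoded by property \emph{2.} together with the symmetry of $\bullet_Q$ in its two arguments. Once this consistency is granted, the algebra axioms are all transported for free, in exact analogy with \cite[Cor. 3.3]{Dappiaggi:2020gge}, and the theorem follows.
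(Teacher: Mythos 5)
Your overall strategy --- transporting the associative, commutative, unital structure of the tensor module $\mathcal{T}^{\scriptscriptstyle \mathbb{C}}(\mathcal{A}^\mathbb{C}_{\cdot_Q})$ onto $\mathcal{A}^\mathbb{C}_{\bullet_Q}$ through $\Gamma_{\bullet_Q}^\mathbb{C}$ --- is exactly the intended one: the paper omits the proof altogether, deferring to \cite[Sec.\ 4]{Dappiaggi:2020gge}, and the argument there is precisely this purely algebraic transport of structure, with associativity and commutativity inherited from $\otimes$ and with the well-definedness of ${\Gamma_{\bullet_Q}^\mathbb{C}}^{-1}$ on the image as the only point requiring care. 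Your identification of that last point as the sole non-formal step, and your reading of $\mathcal{A}^\mathbb{C}_{\bullet_Q}$ as the image of the full tensor module (which is what makes the product internal), are both correct.

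Two details deserve correction. First, your verification of the unit does not go through as written: taking $\tau_1=\mathbf{1}\in\mathcal{A}^\mathbb{C}_{\cdot_Q}$ in Equation \eqref{eq_bullet2}, all derivatives of $\mathbf{1}$ vanish, so only the $k=0$ term survives, but that term is $(\mathbf{1}\otimes\tau)(f_1\otimes f_2)=\mathbf{1}(f_1)\,\tau(f_2)$, which is a genuinely \emph{bilocal} functional and not $\tau$ itself; hence $\mathbf{1}\bullet_Q\tau\neq\tau$. The unit of $(\mathcal{A}^\mathbb{C}_{\bullet_Q},\bullet_{\Gamma_{\bullet_Q}^\mathbb{C}})$ is instead the image of the degree-zero component $\mathcal{E}(\mathbb{R}^{d+1};\mathbb{C})$ of $\mathcal{T}^{\scriptscriptstyle \mathbb{C}}(\mathcal{A}^\mathbb{C}_{\cdot_Q})$ in Equation \eqref{eq_modulo2}, i.e.\ the constant function $1$ sitting in tensor degree zero, for which the tensor product acts trivially. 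Second, commutativity holds only up to the simultaneous permutation of the test-function slots, $\Gamma_{\bullet_Q}^\mathbb{C}(\tau_1\otimes\tau_2)(f_1\otimes f_2)=\Gamma_{\bullet_Q}^\mathbb{C}(\tau_2\otimes\tau_1)(f_2\otimes f_1)$; this is the standard sense in which such multilocal algebras are called commutative, but it is worth stating explicitly rather than asserting $\Gamma_{\bullet_Q}^\mathbb{C}(\tau_1\otimes\tau_2)=\Gamma_{\bullet_Q}^\mathbb{C}(\tau_2\otimes\tau_1)$ as an identity of distributions on a fixed ordering of the arguments. Neither point affects the validity of the associativity argument, which is sound.
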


\begin{remark}
For the sake of brevity, we omit the proofs of Theorems \ref{thm_bullet2} and \ref{thm_bulletalgebra} since, barring minor modifications, they follow the same lines of those outlined in \cite[Sec. 4]{Dappiaggi:2020gge}.
\end{remark}
\begin{example}\label{Ex: Example interpretation bullet}
To better grasp the r\^ole played by this deformation of the algebraic structure, in total analogy with Example \ref{Ex: Example interpretation dot} we compare the two point correlation function of the random field $\varphi$, defined via Equation \eqref{Eq: covariance complex Gaussian random field}, with the functional counterpart
\begin{align*}
       \Gamma_{\bullet_Q}^{\mathbb{C}}(\Phi\overline{\Phi})(f_1\otimes f_2;\eta,\overline{\eta})&=(\Phi\bullet_Q\overline{\Phi})(f_1\otimes f_2;\eta, \overline{\eta})=\Phi\otimes\overline{\Phi}(f_1\otimes f_2;\eta,\overline{\eta})\\
    &+[(\mathbf{1}_2\otimes Q)\cdot(\delta_{Diag_2}\Tilde{\otimes}\delta_{Diag_2})](f_1\otimes f_2\otimes 1_2;\eta, \overline{\eta})\\
    &=\Phi\otimes\overline{\Phi}(f_1\otimes f_2;\eta,\overline{\eta})+Q(f_1\otimes f_2).
\end{align*}
which, evaluated at the configurations $\eta=0$, yields $(\Phi\bullet_Q\overline{\Phi})(f_1\otimes f_2;0,0)=Q(f_1\otimes f_2)\equiv\omega_2(f_1\otimes f_2)$.

Note that we could have also computed the correlation function of the random field $\varphi$ with itself, obtaining zero. More generally, the two-point correlation function of a polynomial expression of $\varphi$ and $\overline{\varphi}$ with itself turns out to be trivial, while the one of the expression and its complex conjugate contains relevant stochastic information, as shown in Section \ref{Sec: perturbative analysis}.
\end{example}

Similarly to the case of the deformation map $\Gamma_{\cdot_Q}^\mathbb{C}$ and the algebra $\mathcal{A}^\mathbb{C}_{\cdot_Q}$, also for $\Gamma_{\bullet_Q}^\mathbb{C}$ and the algebra $\mathcal{A}^\mathbb{C}_{\bullet_Q}$ one can discuss the issue of uniqueness due to the renormalization procedure exploited in the construction of $\Gamma_{\bullet_Q}^\mathbb{C}$. The following theorem deals with this hurdle. Once more we do not give a detailed proof, since it can be readily obtained from the counterpart in \cite{Dappiaggi:2020gge}. 

\begin{theorem}\label{Thm: Uniqueness bullet algebra}
Let $\Gamma_{\bullet_Q}^{\mathbb{C}}\,{\Gamma_{\bullet_Q}^{\mathbb{C}}}':\mathcal{A}^\mathbb{C}_{\cdot_Q}\rightarrow\mathcal{T}^{'\scriptscriptstyle \mathbb{C}}_C(\mathbb{R}^{d+1};\mathsf{Pol}_{\mathbb{C}})$ be two linear maps satisfying the requirements listed in Theorem \ref{thm_bullet2}. Then there exists a family $\{C_{\underline{m},\underline{m}'}\}_{\underline{m},\underline{m}'\in\mathbb{N}_0^{\mathbb{N}_0}}$ of linear maps $C_{\underline{m},\underline{m}'}:\mathcal{T}^{\scriptscriptstyle \mathbb{C}}(\mathcal{A}^\mathbb{C}_{\cdot_Q})\rightarrow\mathcal{T}^{\scriptscriptstyle \mathbb{C}}(\mathcal{A}^\mathbb{C}_{\cdot_Q})$, the space $\mathcal{T}^{\scriptscriptstyle \mathbb{C}}(\mathcal{A}^\mathbb{C}_{\cdot_Q})$ being defined via Equation \eqref{eq_modulo2}, satisfying the following properties:
\begin{enumerate}
    \item For all $j\in\mathbb{N}_0$, it holds
    \begin{equation*}
        C_{\underline{m},\,\underline{m}'}[(\mathcal{\mathcal{A}^\mathbb{C}_{\cdot_Q}})^{\otimes j}]\subseteq\mathcal{M}_{m_1,m_1'}\otimes\ldots\otimes\mathcal{M}_{m_j,m_j'},
    \end{equation*}
    while, if either $m_i\leq l_i-1$ or $m'_i\leq l'_i-1$ for some $i\in\{1,\ldots, j\}$, then
    \begin{equation*}
        C_{\underline{l},\,\underline{l}'}[\mathcal{M}_{m_1,m_1'}\otimes\ldots\otimes\mathcal{M}_{m_j,m'_j}]=0.
    \end{equation*}
    \item For all $j\in\mathbb{N}\cup\{0\}$ and $u_1\ldots, u_j\in\mathcal{A}^\mathbb{C}_{\cdot_Q}$, the following identities hold true:
    \begin{equation*}
    \begin{split}
        C_{\underline{m},\,\underline{m}'}[u_1\otimes\ldots&\otimes G\circledast u_k\otimes\ldots\otimes u_j]\\
        &=(\delta_{\text{Diag}_2}^{\otimes(k-1)}\otimes G\otimes\delta_{\text{Diag}_2}^{\otimes(j-k)})\circledast C_{\underline{m},\,\underline{m}'}[u_1\otimes\ldots\otimes u_j],
    \end{split}
    \end{equation*}
      \begin{equation*}
    \begin{split}
        C_{\underline{m},\,\underline{m}'}[u_1\otimes\ldots&\otimes \overline{G}\circledast u_k\otimes\ldots\otimes u_j]\\
        &=(\delta_{\text{Diag}_2}^{\otimes(k-1)}\otimes \overline{G}\otimes\delta_{\text{Diag}_2}^{\otimes(j-k)})\circledast C_{\underline{m},\,\underline{m}'}[u_1\otimes\ldots\otimes u_j],
    \end{split}
    \end{equation*}
    \begin{equation*}
        \delta_\psi C_{\underline{m},\,\underline{m}'}[u_1\otimes\ldots\otimes u_j]=\sum_{a=1}^jC_{\underline{m}(a),\,\underline{m}'}[u_1\otimes\ldots\otimes\delta_\psi u_a\otimes\ldots\otimes u_j],
    \end{equation*}
       \begin{equation*}
        \delta_{\overline{\psi}} C_{\underline{m},\,\underline{m}'}[u_1\otimes\ldots\otimes u_j]=\sum_{a=1}^jC_{\underline{m},\,\underline{m}'(a)}[u_1\otimes\ldots\otimes\delta_{\overline{\psi}} u_a\otimes\ldots\otimes u_j],
    \end{equation*}
    where $\underline{m}(a)_i=m_i$ if $i\neq a$ and $\underline{m}(a)_a=m_a-1$, while $\delta_\psi$, $\delta_{\overline{\psi}}$ are the directional derivatives with respect to $\psi,\overline{\psi}\in\mathcal{E}(\mathbb{R}^{d+1};\mathbb{C})$, see Definition \ref{Def: Functionals}. Here $G\in\mathcal{D}^\prime(\mathbb{R}^{d+1}\times \mathbb{R}^{d+1})$ is a fundamental solution of the parabolic operator $L$.
    \item Let us consider two linear maps $\Gamma_{\cdot_Q}^\mathbb{C},\,{\Gamma_{\cdot_Q}^\mathbb{C}}':\mathcal{A}^\mathbb{C}\rightarrow\mathcal{D}'_\mathbb{C}(\mathbb{R}^{d+1};\mathsf{Pol}_{\mathbb{C}})$ compatible with the constraints of Theorem \ref{Thm: deformation map cdot}. For all $u_{m_1,m'_1},\ldots, u_{m_j,m'_j}\in\mathcal{A}^\mathbb{C}$ with $u_{m_i,m'_i}\in\mathcal{M}_{m_i,m'_i}$ for all $i\in\{1,\ldots,j\}$ and for $f_1,\ldots,f_j\in\mathcal{D}(\mathbb{R}^{d+1})$ it holds
    \begin{equation*}
        \begin{split}
            {\Gamma_{\bullet_Q}^\mathbb{C}}'(&{\Gamma_{\cdot_Q}^\mathbb{C}}^{'\,\otimes j}(u_{m_1,m'_1}\otimes\ldots\otimes u_{m_j,m'_j}))(f_1\otimes\ldots\otimes f_j)\\
            &=\Gamma_{\bullet_Q}^\mathbb{C}({\Gamma_{\cdot_Q}^\mathbb{C}}^{\otimes j}(u_{m_1,m_1'}\otimes\ldots\otimes u_{m_j,m'_j}))(f_1\otimes\ldots\otimes f_j)\\
            &+\sum_{\mathcal{G}\in \mathcal{P}(1,\ldots,j)}\Gamma_{\bullet_Q}^\mathbb{C}\Bigl[{\Gamma_{\cdot_Q}^\mathbb{C}}^{\otimes\vert\mathcal{G}\vert}C_{\underline{m}_{\mathcal{G}},\,\underline{m}'_{\mathcal{G}}}\Bigl(\bigotimes_{I\in\mathcal{G}}\prod_{i\in I}u_{m_i,\,m'_i}\Bigr)\Bigr]\Bigl(\bigotimes_{I\in\mathcal{G}}\prod_{i\in I}f_i\Bigr).
        \end{split}
    \end{equation*}
    Here $\mathcal{P}(1,\ldots,j)$ denotes the set of all possible partitions of $\{1,\ldots,j\}$ into non-empty disjoint subsets while $\underline{m}_{\mathcal{G}}=(m_I)_{I\in\mathcal{G}}$, where $m_I:=\sum_{i\in I}m_i$.
\end{enumerate}
\end{theorem}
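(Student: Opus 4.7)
The plan is to mirror the inductive construction of $\Gamma_{\bullet_Q}^\mathbb{C}$ from Theorem \ref{thm_bullet2}, identifying the renormalization ambiguity at each extension step with a component of the family $\{C_{\underline{m},\underline{m}'}\}$, following \cite[Thm. 5.4]{Dappiaggi:2020gge} \emph{mutatis mutandis}. The base case concerns tensor order $j=1$, for which the statement collapses to Theorem \ref{thm_unicc} applied to the single-factor renormalization encoded in $\Gamma_{\cdot_Q}^\mathbb{C}$: the maps $C_{\ell,\ell'}$ provided there are exactly what is needed when $\underline{m},\underline{m}'$ are supported on a single slot, and they already verify the required commutation relations with $G$, $\overline{G}$ and directional derivatives.

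For the inductive step I would proceed on the tensor order $j$ together with an inner induction on $(\sum_i m_i,\sum_i m_i')$, paralleling the construction of $\Gamma_{\bullet_Q}^\mathbb{C}$ itself. Given $u_{m_1,m_1'}\otimes\ldots\otimes u_{m_j,m_j'}$, both $\Gamma_{\bullet_Q}^\mathbb{C}$ and ${\Gamma_{\bullet_Q}^\mathbb{C}}'$ must agree with the formal $\bullet_Q$-product applied to the already-renormalized single-factor images, up to an extension of a distribution which is well-defined off suitable partial diagonals of $(\mathbb{R}^{d+1})^{j}$. The difference of any two admissible extensions is therefore supported on those diagonals, and by the factorization property (item 2 of Theorem \ref{thm_bullet2}) it necessarily splits according to a partition of $\{1,\ldots,j\}$ into blocks indexing the slots that were jointly renormalized on a common diagonal. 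This is precisely what produces the sum over $\mathcal{G}\in\mathcal{P}(1,\ldots,j)$ in item 3 and, read in reverse, it defines $C_{\underline{m}_{\mathcal{G}},\underline{m}'_{\mathcal{G}}}$ recursively by extracting the genuinely new contribution at this level from the already-identified lower-order ambiguities.

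The commutation of $C_{\underline{m},\underline{m}'}$ with $G$ and $\overline{G}$ in item 2 is automatic: since both $\Gamma_{\bullet_Q}^\mathbb{C}$-maps satisfy the corresponding relations of Theorem \ref{thm_bullet2}, and since the action of a fundamental solution on a single slot preserves the microlocal and scaling-degree data governing the extension, their difference inherits these identities. The same reasoning applies to $\delta_\psi$ and $\delta_{\overline{\psi}}$, with the index shift $\underline{m}(a)$ reflecting the fact that a derivative in the $a$-th slot lowers the polynomial degree there by one. The grading constraint in item 1 descends from the Leibniz rule applied to the functional derivatives inside the $\bullet_Q$-product, since each pairing through $Q$ or $\overline{Q}$ consumes one power of $\Phi$ and one of $\overline{\Phi}$ in the relevant slot.

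The main obstacle is the combinatorial identity of item 3, which is the bullet-algebra analog of the exponential formula relating moments and cumulants. One expands both sides by iterating the inductive hypothesis on $\Gamma_{\cdot_Q}^\mathbb{C}$ versus ${\Gamma_{\cdot_Q}^\mathbb{C}}'$ (through Theorem \ref{thm_unicc}) and on $\Gamma_{\bullet_Q}^\mathbb{C}$ versus ${\Gamma_{\bullet_Q}^\mathbb{C}}'$ at strictly smaller tensor orders, and regroups the resulting cross-terms according to which subsets of slots share a common renormalization block. The bookkeeping is delicate but purely algebraic; the key point ensuring that the induction terminates is that the newly defined $C_{\underline{m}_{\mathcal{G}},\underline{m}'_{\mathcal{G}}}$ always involves strictly smaller total degrees than the original expression, so that the recursion is well-founded and matches, block by block, the partition-indexed sum appearing in the statement.
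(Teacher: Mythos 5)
Your strategy coincides with the one the paper implicitly relies on: the authors give no proof at all, stating only that it "can be readily obtained from the counterpart in \cite{Dappiaggi:2020gge}", and that counterpart is exactly the double induction (on tensor order $j$ and on the polynomial degrees) in which the disjoint-support factorization property forces the difference of the two maps to localize on partial diagonals and hence to organize into the partition-indexed sum, with the base case $j=1$ reducing to Theorem \ref{thm_unicc}. Your sketch identifies the correct key ingredients and mechanism, so it is consistent with the intended argument.
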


\section{Perturbative Analysis of the Nonlinear Schr\"odinger Dynamics}\label{Sec: perturbative analysis}

In the following we apply the framework devised in Sections \ref{Sec: expectations} and \ref{Sec: correlations} to the study of the stochastic nonlinear Schr\"odinger equation, that is Equation \eqref{Eq: Stochastic NLS}, where we set for simplicity $\kappa=1$. 

The first step consists of translating the equation of interest to a functional formalism, replacing the unknown random distribution $\psi$ with a polynomial functional-valued distribution $\Psi\in\mathcal{D}^\prime(\mathbb{R}^{d+1};\mathsf{Pol}_{\mathbb{C}})$ as per Definition \ref{Def: Functionals}. Recalling that we denote by $G\in\mathcal{D}'(\mathbb{R}^{d+1}\times \mathbb{R}^{d+1})$ the fundamental solution of the Schr\"odinger operator, see Equation \eqref{Eq: Integral Kernel}, the integral form of Equation \eqref{Eq: Stochastic NLS} reads
\begin{equation}\label{Eq: Functional equation}
	\Psi= \Phi+\lambda G\circledast\overline{\Psi}\Psi^2,\hspace{1cm}\lambda\in\mathbb{R}_+,
\end{equation}
where $\Phi\in\mathcal{D}^\prime(\mathbb{R}^{d+1};\mathsf{Pol}_{\mathbb{C}})$ is the functional defined in Example \ref{Ex: Basic Functionals}.

Equation \eqref{Eq: Functional equation} cannot be solved exactly and therefore we rely on a perturbative analysis. Hence we expand the solution as a formal power series in the coupling constant $\lambda\in\mathbb{R}_{+}$ with coefficients lying in $\mathcal{A}^{\mathbb{C}}\subset\mathcal{D}^\prime_C(\mathbb{R}^{d+1};\mathsf{Pol}_{\mathbb{C}})$, see Definition \ref{Def: Pointwise Algebra}:
\begin{equation}\label{Eq: Perturbative expansion}
	\Psi\llbracket\lambda\rrbracket=\sum_{k\geq0}\lambda^{k}F_{k},\hspace{1cm}F_k\in\mathcal{A}^{\mathbb{C}}.
\end{equation}
The coefficients of Equation \eqref{Eq: Perturbative expansion} are determined through an iteration procedure yielding, at first orders
\begin{equation}\label{Eq: Espressione coefficienti}
	\begin{split}
		&F_0=\Phi\,,\\
		&F_1=G\circledast\overline{\Phi}\Phi^2\,,\\
		&F_2=G\circledast(\overline{F_1}F_0^2+2\overline{F_0}F_0F_1)\,,
	\end{split}
\end{equation}
while the $k$-th order contribution reads
\begin{equation}\label{Eq: Generic coefficient}
	F_k=\sum\limits_{\substack{k_1,k_2,k_3\in\mathbb{N}_0 \\ k_1+k_2+k_3=k-1}} G\circledast(\overline{F}_{k_1}F_{k_2}F_{k_3}),\hspace{1cm}k\geq3\,.
\end{equation}
We also underline that the $k$-th coefficient is constructed out of lower order coefficients.
\begin{remark}
	Working on $\mathbb{R}^{d+1}$ we are forced to introduce a cut-off function $\chi\in\mathcal{D}(\mathbb{R}^{d+1})$ in order to implement the necessary support conditions yielding a well-defined convolution $G\circledast u$, see \cite[Chap. 4]{Hormander-I-03}. As per Remark \ref{Rem: Cut-Off on G}, henceforth the cut-off is left implied.
\end{remark}
The perturbative expansion of Equation \eqref{Eq: Perturbative expansion} is particularly useful for computing statistical quantities at every order in perturbation theory by means of the deformation maps, as outlined in the preceding sections. In what follows we explicitly calculate the expectation value of the solution at first order in $\lambda$, as well as its two point correlation function.
\vskip .3cm
\noindent\textbf{\normalsize Expectation value.}

The strategy is thus the following: we first construct the formal perturbative solution $\Psi\llbracket\lambda\rrbracket$ in the algebra $\mathcal{A}\llbracket\lambda\rrbracket$ and then, on account of Theorem \ref{thm_struttura_deformata} and of the stochastic interpretation of the map $\Gamma_{\cdot_Q}^{\mathbb{C}}$ introduced in Theorem \ref{Thm: deformation map cdot}, we consider $\Gamma_{\cdot_Q}^{\mathbb{C}}(\Psi\llbracket\lambda\rrbracket)$.
Bearing in mind this comment we consider
\begin{equation}\label{Eq: perturbative deformed solution}
	\Psi_{\cdot_Q}\llbracket\lambda\rrbracket\vcentcolon=\Gamma_{\cdot_Q}^{\mathbb{C}}(\Psi\llbracket\lambda\rrbracket)=\sum_{k}\lambda^k\Gamma_{\cdot_Q}^{\mathbb{C}}(F_k)\in\mathcal{A}_{\cdot_Q}^{\mathbb{C}}\llbracket\lambda\rrbracket\,,
\end{equation}
where we exploited Equation \eqref{Eq: Perturbative expansion} as well as the linearity of $\Gamma_{\cdot_Q}^{\mathbb{C}}$. 

As outlined in Example \ref{Ex: Example interpretation dot}, the action of $\Gamma_{\cdot_Q}^{\mathbb{C}}$ on elements lying in $\mathcal{A}^{\mathbb{C}}$  coincides with the expectation value of the equivalent expression in terms of the random field $\varphi$. In view of Equation \eqref{Eq: perturbative deformed solution} up to the first order in perturbation theory, it holds
\begin{equation*}
	\Psi\llbracket\lambda\rrbracket=\Phi+\lambda G\circledast\overline{\Phi}\Phi^2+ \mathcal{O}(\lambda^2)\,.
\end{equation*}
Hence, taking into account the defining properties of $\Gamma_{\cdot_Q}^{\mathbb{C}}$ encoded in Equations \eqref{Eq: Gammadot on M_1} and \eqref{Eq: Gammadot Gtau}, the $\eta$-shifted expectation value of the solution reads
\begin{equation*}
	\begin{split}
		\mathbb{E}[\psi_\eta\llbracket\lambda\rrbracket(f)]&=\Gamma_{\cdot_Q}^{\mathbb{C}}(\Psi\llbracket\lambda\rrbracket)(f;\eta,\overline{\eta})=\Gamma_{\cdot_Q}^{\mathbb{C}}(\Phi)(f;\eta,\overline{\eta})+\lambda\Gamma_{\cdot_Q}^{\mathbb{C}}(G\circledast\overline{\Phi}\Phi^2)(f;\eta,\overline{\eta})+\mathcal{O}(\lambda^2)\\
		&=\Phi(f;\eta,\overline{\eta})+\lambda G\circledast\Gamma_{\cdot_Q}^{\mathbb{C}}(\overline{\Phi}\Phi^2)(f;\eta,\overline{\eta})+\mathcal{O}(\lambda^2)\,,
	\end{split}
\end{equation*}
for all $f\in\mathcal{D}(\mathbb{R}^{d+1}), \eta\in\mathcal{E}(\mathbb{R}^{d+1})$. Notice that, at this stage, we have restored the information that $\Phi$ and $\overline{\Phi}$ are related by complex conjugation by choosing $\eta$ and $\bar{\eta}$ as the underlining configurations. The explicit expression of $\cdot_Q$ in Equation \eqref{Eq: prodotto deformato2} entails
\begin{equation*}
	\begin{split}
		\Gamma_{\cdot_Q}^{\mathbb{C}}(\overline{\Phi}\Phi^2)(f;\eta,\overline{\eta})&=(\overline{\Phi}\cdot_Q\Gamma_{\cdot_Q}^{\mathbb{C}}(\Phi^2))(f;\eta,\overline{\eta})=\overline{\Phi}\Phi^2(f;\eta,\overline{\eta})\\
		&+2(\delta_{Diag_2}\otimes\overline{Q})(\delta_{Diag_2}\otimes\Phi\delta_{Diag_2})(f\otimes\delta_{Diag_3};\eta,\overline{\eta})\\
		&=\overline{\Phi}\Phi^2(f;\eta,\overline{\eta})+2\overline{C}\Phi(f;\eta,\overline{\eta})\,,
	\end{split}
\end{equation*}
where $\overline{C}\in\mathcal{E}(\mathbb{R}^{d+1})$ is a smooth function whose integral kernel reads $\overline{C}(x)=\chi(x)(\widehat{G\cdot\overline{G}})(\delta_x\otimes\chi)$. Here $\widehat{G\cdot\overline{G}}$ is any but fixed extension of \clacomment{$G\cdot\overline{G}\in\mathcal{D}'(\mathbb{R}^{d+1}\times\mathbb{R}^{d+1}\setminus\Lambda_t^2)$} to the whole $\mathbb{R}^{(d+1)}\times\mathbb{R}^{(d+1)}$, see the proof of Theorem \ref{Thm: deformation map cdot}. We adopted the notation $\psi_\eta$ to underline that its expectation value is in terms of the shifted random distribution $\varphi_\eta=G\circledast\xi+\eta$, which is still Gaussian, but with expectation value $\mathbb{E}[\varphi_\eta]=\eta$. To wit, the centered Gaussian noise properties can be recollected setting $\eta=0$. Thus the expectation value of the solution up to order $\mathcal{O}(\lambda^2)$ reads
\begin{equation}
	\mathbb{E}[\psi_0\llbracket\lambda\rrbracket]=\mathcal{O}(\lambda^2)\,.
\end{equation}
This result is a direct consequence of the cubic nonlinear term and it can be extended to any order in perturbation theory.
\begin{theorem}\label{Thm: valor medio nullo}
	Let $\Psi\llbracket\lambda\rrbracket\in\mathcal{A}^\mathbb{C}$ be the perturbative solution of Equation \eqref{Eq: Functional equation}. Then
	\begin{equation*}
		\mathbb{E}[\psi_0\llbracket\lambda\rrbracket(f)]=\Gamma_{\cdot_Q}^\mathbb{C}(\Psi\llbracket\lambda\rrbracket)(f;0,0)=0,\hspace{1cm}\forall f\in\mathcal{D}(\mathbb{R}^{d+1})\,.
	\end{equation*}
\end{theorem}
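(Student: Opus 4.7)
My plan is to introduce a $\mathbb{Z}$-grading on $\mathcal{A}^{\mathbb{C}}$---a \emph{charge}---and to argue that: (i) every coefficient $F_k$ of the perturbative expansion lies in the charge-$1$ sector; (ii) the deformation map $\Gamma_{\cdot_Q}^{\mathbb{C}}$ preserves the charge; and (iii) evaluation at $(\eta,\eta')=(0,0)$ annihilates every polynomial functional whose polynomial part has nontrivial bidegree. Combining these three facts yields the statement at every order in $\lambda$.

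Using the decomposition of Remark \ref{moduli2} one defines $\mathrm{ch}:\mathcal{A}^{\mathbb{C}}\to\mathbb{Z}$ by $\mathrm{ch}\bigl\vert_{\mathcal{M}_{m,m',l,l'}}=m-m'$, so that $\mathrm{ch}(\Phi)=+1$ and $\mathrm{ch}(\overline{\Phi})=-1$, while the actions of $G\circledast$ and $\overline{G}\circledast$ are charge-preserving and the pointwise product is additive in the charge. Step (i) is a simple induction on $k$: the base case $\mathrm{ch}(F_0)=\mathrm{ch}(\Phi)=1$ is immediate from Equation \eqref{Eq: Espressione coefficienti}, while the inductive step follows from Equation \eqref{Eq: Generic coefficient}, since each summand $G\circledast(\overline{F}_{k_1}F_{k_2}F_{k_3})$ has charge $-1+1+1=1$. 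Structurally this reflects the $U(1)$ covariance of the cubic nonlinearity $\overline{\Psi}\Psi^2$ of Equation \eqref{Eq: Functional equation}, and the same count would work verbatim for any focusing or defocusing $\vert\Psi\vert^{2\kappa}\Psi$ nonlinearity.

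For step (ii) one revisits the inductive construction of $\Gamma_{\cdot_Q}^{\mathbb{C}}$ in the proof of Theorem \ref{Thm: deformation map cdot}. The product $\cdot_Q$ of Equation \eqref{Eq: prodotto deformato2} is built from pairings mediated by $Q$ and $\overline{Q}$; each such pairing ties a $\Phi$-derivative of one factor to a $\overline{\Phi}$-derivative of the other, so that every contraction simultaneously removes one $\Phi$ and one $\overline{\Phi}$ from the polynomial part, leaving the charge invariant. The renormalization ambiguities introduced during the construction---the extensions $\widehat{G\cdot\overline{G}}$, $\widehat{Q}_{2k}$ and $\widehat{U}_{N,M}$---are all purely $\mathbb{C}$-valued distributions, hence lie in the charge-zero modules $\mathcal{M}_{0,0,l,l'}$ and do not alter the charge of the factor they multiply. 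A bookkeeping induction on the indices $(m,m')$ that parallels the one in Theorem \ref{Thm: deformation map cdot} then shows that $\Gamma_{\cdot_Q}^{\mathbb{C}}$ sends any $\tau\in\mathcal{A}^{\mathbb{C}}$ of charge $n$ to a polynomial functional-valued distribution whose polynomial part is again of charge $n$.

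Step (iii) is immediate: any element of $\mathcal{M}_{p,q,l,l'}$ with $(p,q)\neq(0,0)$ depends polynomially on $(\eta,\eta')$ with vanishing constant term, and so evaluates to $0$ at $(\eta,\eta')=(0,0)$. Since every summand of $\Gamma_{\cdot_Q}^{\mathbb{C}}(F_k)$ has charge $1$, it must contain at least one unpaired $\Phi$; therefore
\begin{equation*}
\Gamma_{\cdot_Q}^{\mathbb{C}}(F_k)(f;0,0)=0\qquad\forall\,k\in\mathbb{N}_0,
\end{equation*}
and summing against $\lambda^k$ yields $\mathbb{E}[\psi_0\llbracket\lambda\rrbracket(f)]=0$. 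The only mildly delicate point in the whole argument is the verification of charge-preservation in the presence of renormalization, but this is forced by the construction, as the Epstein--Glaser extension freedom in Theorem \ref{Thm: deformation map cdot} is always of pure $c$-number type and hence of charge zero.
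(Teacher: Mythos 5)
Your proposal is correct and its skeleton coincides with the paper's: both arguments rest on the single structural fact that every contraction produced by $\cdot_Q$ removes exactly one $\Phi$ together with one $\overline{\Phi}$, so that some grading of the polynomial bidegree $(m,m')$ is preserved by $\Gamma_{\cdot_Q}^{\mathbb{C}}$, and both run the same induction on Equation \eqref{Eq: Generic coefficient} to pin down that grading on the coefficients $F_k$. The difference is the invariant you track. The paper works with the coarser $\mathbb{Z}_2$-invariant ``total polynomial degree $m+m'$ is odd'' (the subspace $\mathcal{O}$, closed under triple products, conjugation and $\Gamma_{\cdot_Q}^{\mathbb{C}}$), which is the minimal information needed to guarantee one leftover field; you work with the $\mathbb{Z}$-valued charge $m-m'$, which is a strict refinement. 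Your version buys a little more: it identifies the surviving field as an unpaired $\Phi$ (a fact the paper only extracts later, in the proof of Theorem \ref{Thm: Equazione Rinormalizzata}, to rule out a $\overline{\Psi}_{\cdot_Q}$ counterterm) and it makes the underlying $U(1)$ covariance of the nonlinearity explicit; the paper's parity version is marginally more robust in that it would survive any nonlinearity with an odd total number of fields, charged or not. One small imprecision in your step (ii): in the general inductive step of Theorem \ref{Thm: deformation map cdot} the extended object $\widehat{U}_{N,M}$ is not a pure $c$-number but still carries the uncontracted functional dependence of the factors $U_i^{(N_i,M_i)}$; what is true, and what your argument actually needs, is that the Epstein--Glaser extension only modifies the distributional kernel on $\Lambda_t$ and leaves the polynomial bidegree in $(\eta,\eta')$ untouched, hence preserves the charge. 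With that rephrasing the proof is complete and equivalent to the paper's.
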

\begin{proof}
	By direct inspection of Equations \eqref{Eq: prodotto deformato2} and \eqref{Eq: Gammadot on products} one can infer that the action of $\Gamma_{\cdot_Q}^\mathbb{C}$ on $u\in\mathcal{A}^{\mathbb{C}}$ yields a sum of terms whose polynomial degree in $\Phi$ and $\overline{\Phi}$ is decreased by an even number due to contractions between pairs of fields $\Phi$ and $\overline{\Phi}$. Hence it is sufficient to show that all perturbative coefficients $F_k$ of Equation \eqref{Eq: Perturbative expansion} lie in $\mathcal{M}_{m,m'}$ with $m+m'=2l+1$ for $l\in\mathbb{N}\setminus\{0\}$. If this holds true, all terms resulting from the action of $\Gamma_{\cdot_Q}^\mathbb{C}(u)$ would contain at least one field $\Phi$ or $\overline{\Phi}$ and their evaluation at $\eta=0$ would vanish.
	
	Let us define the vector space $\mathcal{O}\subset\mathcal{A}^\mathbb{C}$ of polynomial functional-valued distributions of odd polynomial degree, namely $u^{(k_1,k_2)}(\cdot,0)=0$ for all $u\in\mathcal{O}$ with
	$k_1,k_2\in\mathbb{N}_0$ such that $k_1+k_2=2l$, $l\in\mathbb{N}$, see Definition \ref{Def: Functionals}. Observe that $\mathcal{O}$ is closed under the action of $\Gamma_{\cdot_Q}^\mathbb{C}$. In addition it holds that $u_1u_2u_3\in\mathcal{O}$ for any $u_1,u_2,u_3\in\mathcal{O}$ and if $u\in\mathcal{O}$, then $\overline{u}\in\mathcal{O}$.
	
	Focusing on the problem under scrutiny, the thesis translates into requiring that $\Psi\llbracket\lambda\rrbracket\in\mathcal{O}$, which via Equation \eqref{Eq: Perturbative expansion} is equivalent to $F_k\in\mathcal{O}$ for all $k\in\mathbb{N}$. 
	Proceeding inductively, it holds true that, for $k=0$, $F_{0}=\Phi\in\mathcal{O}$. 
	Assuming that $F_j\in\mathcal{O}$ for any $j\leq k\in\mathbb{N}$, our goal is to prove that the same holds true for $F_{k+1}$. 
	Invoking Equation \eqref{Eq: Generic coefficient}, it holds
	\begin{equation*}
		F_{k+1}=\sum\limits_{\substack{k_1,k_2k_3\in\mathbb{N} \\ k_1+k_2+k_3=k}}G\circledast(\overline{F}_{k_1}F_{k_2}F_{k_3})\,.
	\end{equation*}
	Since $k_i\leq k$ for all $i\in\{1,2,3\}$, the inductive hypothesis entails that $F_{k_i}\in\mathcal{O}$. 
	From the aforementioned properties of $\mathcal{O}$ we conclude that $F_{k+1}\in\mathcal{O}$.
\end{proof}
\begin{remark}
	Theorem \ref{Thm: valor medio nullo} holds true for any nonlinear term of the form $\vert\psi\vert^{2\kappa}\psi=\overline{\psi}^\kappa\psi^{\kappa+1}$, $\kappa\in\mathbb{N}$ in Equation \eqref{Eq: Stochastic NLS} since the proof relies on the property that the perturbative coefficients contain an odd number of fields. 
\end{remark}
\vskip .3cm
\noindent\textbf{\normalsize Two-point correlation function.} Another crucial information encoded in the solution of an SPDE resides in the correlation functions which describe the non-local behaviour of the system. Being interested in the calculation of the two-point correlation function, we focus on the deformed algebra $\mathcal{A}^\mathbb{C}_{\bullet_Q}=\Gamma^\mathbb{C}_{\bullet_Q}(\mathcal{A}_{\cdot_Q}^\mathbb{C})$, see Theorems \ref{thm_bullet2} and \ref{thm_bulletalgebra}. In view of Example \ref{Ex: Example interpretation bullet}, the two-point correlation function of the perturbative solution $\psi_\eta\llbracket\lambda\rrbracket$, $\eta\in\mathcal{E}(\mathbb{R}^{d+1})$, localized via $f_1,f_2\in\mathcal{D}(\mathbb{R}^{d+1})$, reads
\begin{equation}\label{Eq: due punti}
	\begin{split}
		\omega_2(\psi_\eta\llbracket\lambda\rrbracket;f_1&\otimes f_2)=\bigl(\Psi_{\cdot_Q}\llbracket\lambda\rrbracket\bullet_{\Gamma_{\bullet_Q}^\mathbb{C}}\overline{\Psi}_{\cdot_Q}\llbracket\lambda\rrbracket\bigr)(f_1\otimes f_2;\eta,\overline{\eta})\\
		&=\Gamma_{\bullet_Q}^\mathbb{C}\bigl[\Gamma_{\cdot_Q}^\mathbb{C}(\Psi\llbracket\lambda\rrbracket)\otimes\Gamma_{\cdot_Q}^{\mathbb{C}}(\overline{\Psi}\llbracket\lambda\rrbracket)\bigr](f_1\otimes f_2;\eta;\overline{\eta})\\
		&=\sum_{k\geq0}\lambda^k\sum\limits_{\substack{k_1,k_2 \\ k_1+k_2=k}}\Gamma_{\bullet_Q}\big(\Gamma_{\cdot_Q}(F_{k_1})\otimes\Gamma_{\cdot_Q}(\overline{F}_{k_2})\big)(f_1\otimes f_2;\eta,\overline{\eta})\,,
	\end{split}
\end{equation}
where we used once more the linearity of $\Gamma_{\cdot_Q}^\mathbb{C}$ and $\Gamma_{\bullet_Q}^\mathbb{C}$. With reference to the perturbative coefficients as per Equation \eqref{Eq: Espressione coefficienti}, Equation \eqref{Eq: due punti} up to order $\mathcal{O}(\lambda^2)$ reads
\begin{equation*}
	\begin{split}
		\omega_{2}(\psi_\eta\llbracket\lambda\rrbracket;f_1\otimes f_2)&=\Gamma_{\bullet_Q}^\mathbb{C}(\Gamma_{\cdot_Q}^\mathbb{C}(\Phi)\otimes\Gamma_{\cdot_Q}^\mathbb{C}(\overline{\Phi}))(f_1\otimes f_2;\eta,\overline{\eta})\\
		&+\lambda\Gamma_{\bullet_Q}^\mathbb{C}(\Gamma_{\cdot_Q}^\mathbb{C}(G\circledast\overline{\Phi}\Phi^2)\otimes\Gamma_{\cdot_Q}^\mathbb{C}\overline{\Phi})(f_1\otimes f_2;\eta,\overline{\eta})\\
		&+\lambda\Gamma_{\bullet_Q}^\mathbb{C}(\Gamma_{\cdot_Q}^\mathbb{C}(\Phi)\otimes\Gamma_{\cdot_Q}^\mathbb{C}(\overline{G}\circledast\Phi\overline{\Phi}^2))(f_1\otimes f_2;\eta,\overline{\eta})\,.
	\end{split}
\end{equation*}
Taking into account that Equation \eqref{Eq: Gammadot on M_1} entails that $\Gamma_{\cdot_Q}^\mathbb{C}(\Phi)=\Phi$, $\Gamma_{\cdot_Q}^\mathbb{C}(\overline{\Phi})=\overline{\Phi}$,  Equation \eqref{eq_bullet2} yields that each separate term in the last expression reads
\begin{equation*}
	\Gamma_{\bullet_Q}^\mathbb{C}(\Phi\otimes\overline{\Phi})(f_1\otimes f_2;\eta,\overline{\eta})=(\Phi\bullet_Q\overline{\Phi})(f_1\otimes f_2;\eta,\overline{\eta})=(\Phi\otimes\overline{\Phi})(f_1\otimes f_2;\eta,\overline{\eta})+Q(f_1\otimes f_2)\,,
\end{equation*}
\begin{equation*}
	\begin{split}
		\Gamma_{\bullet_Q}^\mathbb{C}(\Gamma_{\cdot_Q}^\mathbb{C}(G\circledast\overline{\Phi}\Phi^2)&\otimes\overline{\Phi})(f_1\otimes f_2;\eta,\overline{\eta})=(G\circledast\overline{\Phi}\Phi^2+2G\circledast\overline{C}\,\Phi)\bullet_Q\overline{\Phi})(f_1\otimes f_2;\eta,\overline{\eta})\\
		&=(G\circledast(\overline{\Phi}\Phi^2+2\overline{C}\Phi)\otimes\overline{\Phi})(f_1\otimes f_2;\eta,\overline{\eta})+Q\cdot(G\circledast(2\overline{\Phi}\Phi)\otimes \mathbf{1})(f_1\otimes f_2;\eta,\overline{\eta})\\
		&+2\overline{Q}\cdot(G\circledast\overline{C}\mathbf{1}\otimes \mathbf{1})(f_1\otimes f_2;\eta,\overline{\eta})\,,
	\end{split}
\end{equation*}
\begin{equation*}
	\begin{split}
		\Gamma_{\bullet_Q}^\mathbb{C}(\Phi&\otimes\Gamma_{\cdot_Q}^\mathbb{C}(\overline{G}\circledast\Phi\overline{\Phi}^2))(f_1\otimes f_2;\eta,\overline{\eta})=(\Phi\bullet_Q(\overline{G}\circledast\Phi\overline{\Phi}^2+2\overline{G}\circledast C\,\overline{\Phi}))(f_1\otimes f_2;\eta,\overline{\eta})\\
		&=(\Phi\otimes \overline{G}\circledast(\Phi\overline{\Phi}^2+2C\overline{\Phi}))(f_1\otimes f_2;\eta,\overline{\eta})+\overline{Q}\cdot(\mathbf{1}\otimes \overline{G}\circledast(2\overline{\Phi}\Phi))(f_1\otimes f_2;\eta,\overline{\eta})\\
		&+2Q\cdot(\overline{G}\circledast C\mathbf{1}\otimes \mathbf{1})(f_1\otimes f_2;\eta,\overline{\eta})\,,
	\end{split}
\end{equation*}
where $\mathbf{1}\in\mathcal{D}^\prime(\mathbb{R}^{d+1};\mathsf{Pol}_{\mathbb{C}})$ is the identity functional defined in Example \eqref{Ex: Basic Functionals}. 
Gathering all contributions together and setting $\eta=0$, at first order in $\lambda$ the two-point correlation function of the perturbative solution reads
\begin{equation}
	\omega_2(\psi_0\llbracket\lambda\rrbracket;f_1\otimes f_2)=Q(f_1\otimes f_2)+2\lambda\overline{Q}(G\circledast\overline{C}\mathbf{1}\otimes\mathbf{1})(f_1\otimes f_2)+2\lambda Q\cdot(\overline{G}\circledast C\mathbf{1}\otimes \mathbf{1})(f_1\otimes f_2)+\mathcal{O}(\lambda^2)\,.
\end{equation}
\begin{remark}
	In this section we limited ourselves to constructing the two-point correlation function of the solution. 
	However it is important to highlight that, using the algebraic approach, we could have studied the $m$-point correlation function at any order in perturbation theory, for any $m\in\mathbb{N}$. 
	Yet, being the underlying random distribution Gaussian, all odd correlation functions vanish, while the even ones can be computed as
	\begin{equation*}
		\begin{split}
			&\omega_{m}(\psi\llbracket\lambda\rrbracket;f_1\otimes\ldots\otimes f_m)\\
			&=\bigl(\underbrace{\Psi_{\cdot_Q}\llbracket\lambda\rrbracket\bullet_{\Gamma_{\bullet_Q}^\mathbb{C}}\overline{\Psi}_{\cdot_Q}\llbracket\lambda\rrbracket\bullet_{\Gamma_{\bullet_Q}^\mathbb{C}}\ldots\bullet_{\Gamma_{\bullet_Q}^\mathbb{C}}\Psi_{\cdot_Q}\llbracket\lambda\rrbracket\bullet_{\Gamma_{\bullet_Q}^\mathbb{C}}\overline{\Psi}_{\cdot_Q}\llbracket\lambda\rrbracket}_{m\text{ pairs}}\bigr)(f_1\otimes\ldots\otimes f_m;0,0)\,,
		\end{split}
	\end{equation*}
	for all $f_1,\ldots, f_m\in\mathcal{D}(\mathbb{R}^{d+1})$. 
	The curly bracket encompasses $m$ pairs of terms of the form $\Psi_{\cdot_Q}\llbracket\lambda\rrbracket\bullet_{\Gamma_{\bullet_Q}^\mathbb{C}}\overline{\Psi}_{\cdot_Q}\llbracket\lambda\rrbracket$, with $m=2m'$ for any $m'\in\mathbb{N}$.
\end{remark}

\subsection{Renormalized equation}
As previously underlined, in order for the functional Equation \eqref{Eq: Functional equation} to account for the stochastic character codified by the complex white noise in Equation \eqref{Eq: Stochastic NLS}, we must deform the underlying algebra. In other words we act on Equation \eqref{Eq: Functional equation} with the deformation map $\Gamma_{\cdot_Q}^\mathbb{C}$, see Theorem \ref{Thm: deformation map cdot}, obtaining 
\begin{equation}\label{Eq: Equazione deformata}
	\begin{split}
		\Psi_{\cdot_Q}\llbracket\lambda\rrbracket&=\Gamma_{\cdot_Q}^\mathbb{C}(\Psi\llbracket\lambda\rrbracket)=\Gamma_{\cdot_Q}^\mathbb{C}(\Phi+\lambda G\circledast \overline{\Psi}\Psi^2)\\
		&=\Phi +\lambda G\circledast(\overline{\Psi}_{\cdot_Q}\cdot_Q\Psi_{\cdot_Q}\cdot_Q\Psi_{\cdot_Q})\,,
	\end{split}
\end{equation}
where $\Psi_{\cdot_Q}=\Gamma_{\cdot_Q}^\mathbb{C}(\Psi)$ and $\overline{\Psi}_{\cdot_Q}=\Gamma_{\cdot_Q}^\mathbb{C}(\overline{\Psi})$, while $\cdot_Q$ is defined in Equation \eqref{Eq: prodotto deformato2}. 
Yet, the presence of this deformed product makes quite cumbersome working with Equation \eqref{Eq: Equazione deformata}. Hence it is desirable to address whether one can rewrite such equation in an equivalent form in which only the pointwise product enters the game, namely we seek for the so called \textit{renormalized equation} \cite{Dappiaggi:2020gge}. The price to pay is the occurrence of new contributions to the equation known in the physics literature as \textit{counter-terms}. 
These are a by product both of the deformation procedure (and hence of the stochastic properties of the solution) and of the renormalization one.

We now present the fundamental result of this section, which is an extension of \cite[Prop. 6.6]{Dappiaggi:2020gge}.
\begin{theorem}\label{Thm: Equazione Rinormalizzata}
	Let us denote with $\Psi\llbracket\lambda\rrbracket\in\mathcal{A}^\mathbb{C}$ a perturbative solution of Equation \eqref{Eq: Functional equation} and with $\Psi_{\cdot_Q}=\Gamma_{\cdot_Q}^\mathbb{C}(\Psi)$ its counterpart lying in $\mathcal{A}^\mathbb{C}_{\cdot_Q}$. Then there exists a sequence of linear operator-valued functionals $\{M_k\}_{k\geq1}$ such that
	\begin{enumerate}
		\item \label{punto1} for all $k\geq1$ and for all $\eta,\eta^\prime\in\mathcal{E}(\mathbb{R}^{d+1})$, it holds: 
		\begin{equation}\label{eq_rinormalizzatac}
			M_k(\eta,\eta^\prime):\mathcal{E}(\mathbb{R}^{d+1})\times\mathcal{E}(\mathbb{R}^{d+1})\rightarrow\mathcal{E}(\mathbb{R}^{d+1})\times\mathcal{E}(\mathbb{R}^{d+1}).
		\end{equation}
		\item \label{punto2} Every element of the sequence has a polynomial dependence on $\eta,\eta^\prime$ and, for all $i,j,k\in\mathbb{N}$ such that $i+j=2k+1$, $M_k^{(i,j)}(0)=0$.
	\end{enumerate}
In addition, calling $M:=\sum_{k\geq1}\lambda^kM_k$, $\Psi_{\cdot_Q}$ is a solution of
		\begin{equation}\label{Eq: Equazione rinormalizzata}
			\Psi_{\cdot_Q}=\Phi+\lambda G\circledast\overline{\Psi}_{\cdot_Q}\Psi_{\cdot_Q}^2+G\circledast (M\Psi_{\cdot_Q}).
		\end{equation}
\end{theorem}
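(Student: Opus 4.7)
The plan is to apply $\Gamma_{\cdot_Q}^\mathbb{C}$ to the fixed-point equation \eqref{Eq: Functional equation}, which, by the intertwining properties \eqref{Eq: Gammadot Gtau} and \eqref{Eq: Gammadot on products}, yields exactly the deformed equation \eqref{Eq: Equazione deformata}. The problem then reduces to rewriting the triple deformed product $\overline{\Psi}_{\cdot_Q}\cdot_Q\Psi_{\cdot_Q}\cdot_Q\Psi_{\cdot_Q}$ on its right-hand side as the pointwise product $\overline{\Psi}_{\cdot_Q}\Psi_{\cdot_Q}^2$ plus a remainder that can be recast, up to the overall factor of $\lambda$ already present in front of $G\circledast(\,\cdot\,)$, in the form $\lambda^{-1}M\Psi_{\cdot_Q}$.

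Next, I would unfold the definition \eqref{Eq: prodotto deformato2} twice in order to expand the triple product: the zero-contraction term reproduces $\overline{\Psi}_{\cdot_Q}\Psi_{\cdot_Q}^2$, while the remaining terms are finite sums of the schematic form $(Q^{\otimes a}\otimes\overline{Q}^{\otimes b})\cdot\bigl(\overline{\Psi}_{\cdot_Q}^{(\cdot,\cdot)}\tilde{\otimes}\,\Psi_{\cdot_Q}^{(\cdot,\cdot)}\tilde{\otimes}\,\Psi_{\cdot_Q}^{(\cdot,\cdot)}\bigr)$ with $a+b\geq1$. Although these involve \textit{a priori} ill-defined products of singular distributions, Theorem \ref{Thm: deformation map cdot} guarantees, via the extensions already built into $\Gamma_{\cdot_Q}^\mathbb{C}$, that each contraction has a well-defined representative in $\mathcal{D}^\prime_C(\mathbb{R}^{d+1};\mathrm{Pol}_\mathbb{C})$. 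The sequence $\{M_k\}_{k\geq1}$ is then constructed by induction on $k$. The crucial observation, parallel to the parity argument underlying Theorem \ref{Thm: valor medio nullo}, is that the cubic vertex $\overline{\Psi}\Psi^2$ carries three field factors, so any non-trivial contraction leaves an uncontracted leg; since the nonlinearity contains strictly more $\Psi$-legs than $\overline{\Psi}$-legs, this surviving leg can always be chosen to be a $\Psi_{\cdot_Q}$. Factoring it out identifies the remaining kernel as $M_k(\eta,\eta^\prime)$, whose dependence on $\eta,\eta^\prime$ and on the lower-order coefficients of $\Psi_{\cdot_Q}$ (which are fixed by the inductive hypothesis) is polynomial. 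Property \ref{punto1} is immediate, and property \ref{punto2} is the order-by-order transcription of the same odd-parity counting: after $2k$ contractions at order $\lambda^k$ the residual polynomial in $\eta,\eta^\prime$ carries odd total degree, and hence vanishes at $\eta=\eta^\prime=0$ whenever $i+j=2k+1$.

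The main obstacle is the combinatorial bookkeeping associated with the extraction of the uncontracted $\Psi_{\cdot_Q}$-leg and with the resulting definition of $M_k$: at each order, the contraction graphs proliferate and the identification of $M_k$ has to be shown both to be consistent with the iterative construction of $\Psi_{\cdot_Q}$ (whose lower-order data enter $M_k$ themselves) and to respect the symmetry between the two $\Psi$-insertions in the cubic vertex. This is handled exactly along the lines of the proof of \cite[Prop. 6.6]{Dappiaggi:2020gge}, \emph{mutatis mutandis}, the only genuine novelty being the need to keep track of the separate $Q$- and $\overline{Q}$-insertions forced by the complex-conjugate structure of the noise, which roughly doubles the combinatorial complexity without affecting the logical structure of the induction.
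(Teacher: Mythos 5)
Your proposal is correct and follows essentially the same route as the paper: an induction on the perturbative order in which the deformed triple product is expanded via Equation \eqref{Eq: prodotto deformato2}, the zero-contraction term reproduces the pointwise cubic nonlinearity, and the parity count for the vertex $\overline{\Psi}\Psi^2$ (precisely $m$ factors of $\overline{\Phi}$ against $m+1$ factors of $\Phi$ in each $F_l$, preserved under contractions) guarantees an unpaired $\Phi$-leg whose extraction defines $M_k$ with the stated even-degree property. The paper organizes this through the odd-degree subspace $\mathcal{O}$ introduced in the proof of Theorem \ref{Thm: valor medio nullo} and an explicit first-order computation giving $M_1=2\overline{C}\,\mathbb{I}$, but the logical structure is the one you describe.
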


\begin{proof} Referring to the formal power series expansion
\begin{equation}
	M:=\sum_{k\geq1}\lambda^kM_k,
\end{equation}
we proceed inductively with respect to the perturbative order $k$,
providing at every inductive step a suitable candidate $M_k$ satisfying the desired properties. On account of Equation \eqref{Eq: Gammadot on M_1}, at order $k=0$ the solution reads $\Psi_{\cdot_Q}=\Phi+\mathcal{O}(\lambda)$ and no correction is required. We can use the expression of $F_1$ in Equation \eqref{Eq: Espressione coefficienti} to compute $M$ at first order, namely
\begin{equation*}
	\begin{split}
		\mathrm{LHS}&=\Gamma_{\cdot_Q}^\mathbb{C}(\Phi+\lambda G\circledast\overline{\Phi}{\Phi}^2)+\mathcal{O}(\lambda^2)\\
		&=\Phi+\lambda G\circledast\overline{\Phi}\Phi^2+2\lambda G\circledast\overline{C}\Phi+\mathcal{O}(\lambda^2)\,,
	\end{split}
\end{equation*}
\begin{equation*}
	\begin{split}
		\mathrm{RHS}&=\Phi+\lambda G\circledast\overline{\Phi}\Phi^2+\lambda G\circledast M_1\Phi+\mathcal{O}(\lambda^2)\,,
	\end{split}
\end{equation*}
where $\overline{C}\in\mathcal{E}(\mathbb{R}^{d+1})$ has been defined in the proof of Theorem \ref{Thm: deformation map cdot} and where with $\mathrm{LHS}$ and $\mathrm{RHS}$ we denote the left hand side and the right hand side, respectively, of Equation \eqref{Eq: Equazione rinormalizzata}.
Comparing these expressions, we conclude $M_1=2\overline{C}\mathbb{I}$ where $\mathbb{I}$ is the identity operator. 
All the properties listed in the statement of the theorem, as well as Equation \eqref{Eq: Equazione rinormalizzata} up to order $\mathcal{O}(\lambda^2)$ are fulfilled as one can infer by direct inspection. 
To complete the induction procedure we assume that $M_{k'}$ has been defined for all $k'\leq k-1$, $k\in\mathbb{N}$ so to satisfy all properties listed in the statement and so that Equation \eqref{Eq: Equazione rinormalizzata} holds true up to order $\mathcal{O}(\lambda^{k})$. 
We show how one can individuate $M_k$ sharing the same properties up to order $\mathcal{O}(\lambda^{k+1})$. 
Mimicking the procedure adopted for $k=1$ it descends
\begin{equation*}
	\begin{split}
		\mathrm{LHS}=T_{k-1}&+\lambda^k\sum_{k_1+k_2+k_3=k-1}G\circledast\Gamma_{\cdot_Q}^\mathbb{C}(\overline{F}_{k_1}F_{k_2}F_{k_3})+\mathcal{O}(\lambda^{k+1})\,,
	\end{split}
\end{equation*}
\begin{equation*}
	\begin{split}
		\mathrm{RHS}=T_{k-1}&+\lambda^k\sum_{k_1+k_2+k_3=k-1}G\circledast\bigl[\Gamma_{\cdot_Q}^\mathbb{C}(\overline{F}_{k_1})\Gamma_{\cdot_Q}(F_{k_2})\Gamma_{\cdot_Q}^\mathbb{C}(F_{k_3})\bigr]\\
		&+\lambda^k\sum\limits_{\substack{k_1+k_2=k\\k_2\neq0}}G\circledast(M_{k_1}\Gamma_{\cdot_Q}^\mathbb{C}(F_{k_2}))+\lambda^kG\circledast(M_k\Phi)+\mathcal{O}(\lambda^{k+1})\,.
	\end{split}
\end{equation*}
Here $T_{k-1}$ summarizes all contributions up to order $\mathcal{O}(\lambda^k)$ and thanks to the inductive hypothesis it satisfies Equation \eqref{Eq: Equazione rinormalizzata}. Thus it does not contribute to the construction of $M_k$.

In the proof of Theorem \ref{Thm: valor medio nullo} we introduced the vector space $\mathcal{O}\subset\mathcal{A}^\mathbb{C}$ of distributions $u\in\mathcal{A}^\mathbb{C}$ having odd polynomial degree in $\Phi,\overline{\Phi}\in\mathcal{A}^\mathbb{C}$. 
Since the inductive hypothesis entails that $M_l$ has an even polynomial degree in both $\Phi$ and $\overline{\Phi}$ for all $l\leq k-1$, it follows that, since $\Gamma_{\cdot_Q}^\mathbb{C}(F_{k_2})\in\mathcal{O}$, then $M_{k_1}\Gamma_{\cdot_Q}^\mathbb{C}(F_{k_2})\in\mathcal{O}$.

We can further characterize the perturbative coefficients as particular elements of the vector space $\mathcal{O}$: The specific form of the nonlinear potential $V(\psi,\overline{\psi})=\overline{\psi}\psi^2$ implies that every $F_l$, $l\in\mathbb{N}$ depends precisely on $m$ fields $\overline{\Phi}$ and on $m+1$ fields $\Phi$, with $m\in\mathbb{N}$ such that $m\leq l$. This characterization entails that there is always an unpaired field $\Phi$. Note also that this property is preserved by the action of $\Gamma_{\cdot_Q}^\mathbb{C}$ since contractions collapse pairs of $\Phi$ and $\overline{\Phi}$. We conclude that, besides the term involving $M_k$, all  other contributions in both sides of Equation \eqref{Eq: Equazione rinormalizzata} are of the form $P\circledast u$, $u\in\mathcal{O}$. Moreover, $u$ can be written as $u=K\Phi$ with $K$ a linear operator abiding to all the properties listed in the statement of the theorem for the renormalization counterterms. Thus we can write
\begin{equation*}
	\begin{split}
		\Psi_{\cdot_Q}-\Phi-&\lambda G\circledast\Psi_{\cdot_Q}+\lambda G\circledast\overline{\Psi}_{\cdot_Q}\Psi_{\cdot_Q}^2-G\circledast (M\Psi_{\cdot_Q})\\
		&=\lambda^kG\circledast\bigl[K-M_k\bigr]\Phi+\mathcal{O}(\lambda^{k+1}).
	\end{split}
\end{equation*}
If we set $M_k=K$, it suffices to observe that Equation \eqref{Eq: Equazione rinormalizzata} holds true up to order $\mathcal{O}(\lambda^{k})$. This concludes the induction procedure and the proof.
\end{proof}

\begin{remark}
We refer to \cite[Rem. 6.7]{Dappiaggi:2020gge} for the explicit computation of the coefficient $M_2$ in the real scenario. 
The complex case can be discussed similarly.
\end{remark}

\begin{remark}
	One may wonder why Equation \eqref{Eq: Equazione rinormalizzata} does not encompass a correction involving $\overline{\Psi}_{\cdot_Q}$, namely a term of the form $G\circledast(N\overline{\Psi}_{\cdot_Q})$  with $N$  the same defining properties of $M$. Although, a priori such kind of contribution should be taken into account, Theorem \ref{Thm: Equazione Rinormalizzata}  entails that such additional term is not necessary. Furthermore, our renormalization procedure makes a direct contact with analogous results within other non-perturbative frameworks, such as the theory of regularity structures and of paracontrolled calculus. More precisely, in full analogy with \cite{Hoshino}, only correction terms linear in $\Psi$ should be expected.
\end{remark}

\subsection{A graphical approach to renormalizability}\label{Sec: graph}

As we have seen in the previous sections, the construction of the algebra $\mathcal{A}^\mathbb{C}_{\cdot_Q}$, which is a necessary step in order to compute at any order in perturbation theory the expectation value of the perturbative solution, requires renormalization in order to be meaningful. 

As a matter of fact, in the construction of the perturbative solution only some elements of the algebra $\mathcal{A}^\mathbb{C}_{\cdot_Q}$ are involved.
The question we address in this section is whether we can find a condition under which the number of ill-defined contributions to be renormalized in order to compute the expectation value of the solution remains finite to all orders in $\lambda$.

As a consequence, such a condition would tell whether one needs a finite or infinite number of independent renormalization steps in order to compute the expectation value of the solution at any order in perturbation theory.

This is tantamount to classifying a model as \emph{super-renormalizable} or, equivalently, \emph{sub-critical}.

It is useful to tackle such problem using a graphical representation, taking advantage of an analogous formulation for the stochastic $\varphi^3_d$ model analyzed in \cite{Dappiaggi:2020gge}. We can associate to every perturbative contribution a graph via the following prescription:
\begin{itemize}
	\item the occurrence of the field $\Phi\in\mathcal{A}^\mathbb{C}$ is represented by the symbol $\fiammifero$, while the complex conjugate field $\overline{\Phi}$ corresponds to $\fiammiferoCC$.
	\item The convolution with $G$ and $\overline{G}$ is denoted by segments $\propagatore$ and $\propagatoreCC$ , respectively.
	\item  The graphical analogue of the pointwise product between elements of $\mathcal{A}^\mathbb{C}$ consists of joining the roots of the respective graphs into a single vertex.
\end{itemize}
\begin{example}
	In order to better grasp the notation, consider the following two examples:
	\begin{equation*}
		\Phi\overline{\Phi}=
		\begin{tikzpicture}[thick,scale=1.2]
			\draw[red] (0,0) -- (0.2,0.3);
			\filldraw[red] (0.2,0.3)circle (1pt);
			\draw (0,0) -- (-0.2,0.3);
			\filldraw (-0.2,0.3)circle (1pt);
		\end{tikzpicture},
		\hspace{3cm}
		G\circledast\Phi\overline{\Phi}=
		\begin{tikzpicture}[thick,scale=1.2]
			\draw[red] (0,0) -- (0.2,0.3);
			\filldraw[red] (0.2,0.3)circle (1pt);
			\draw (0,0) -- (-0.2,0.3);
			\filldraw (-0.2,0.3)circle (1pt);
			\draw (0,0) -- (0,-0.3);
		\end{tikzpicture}.
	\end{equation*}
\end{example}

The rules outlined above encompass all the algebraic structures required to represent the perturbative coefficients, but the need of renormalization occurs when acting with the deformation map $\Gamma_{\cdot_Q}^\mathbb{C}$.  Its evaluation against elements lying in $\mathcal{A}^\mathbb{C}$ has been thoroughly studied in Section \ref{Sec: expectations}. The net effect amounts to a sequence of contractions between $\Phi$ and $\overline{\Phi}$, related to the occurrence of powers of $Q$ and $\overline{Q}$. Hence at the graphical level it translates into progressively collapsing pairs of leaves into a single loop. 

Observe how, in view of the stochastic properties of the complex white noise, see Equation \eqref{Eq: complex white noise}, only contractions between leaves of different colours are allowed.
\begin{example}
	The graphical counterpart of $\Gamma_{\cdot_Q}(\Phi\overline{\Phi})(f;\eta,\overline{\eta})=\Phi\overline{\Phi}(f;\eta,\overline{\eta})+C(f)$, $f\in\mathcal{D}(\mathbb{R}^{d+1}),\eta\in\mathcal{E}(\mathbb{R}^{d+1})$, is
	\begin{equation*}
		\Gamma_{\cdot_Q}(\Phi\overline{\Phi})=
		\begin{tikzpicture}[thick,scale=1.2]
			\draw[red] (0,0) -- (0.2,0.3);
			\filldraw[red] (0.2,0.3)circle (1pt);
			\draw (0,0) -- (-0.2,0.3);
			\filldraw (-0.2,0.3)circle (1pt);
		\end{tikzpicture}
		+\fish.
	\end{equation*}
\end{example}

The next step consists of identifying all possible graphs appearing in the perturbative decomposition of a solution. From a direct inspection of the  coefficients $F_k$, $k\in\mathbb{N}$ one can prove inductively that all such graphs have a tree structure with branches of the form
\begin{equation*}
	\begin{tikzpicture}[thick,scale=1.2]
		\draw (0,0) -- (0,0.3);
		\draw[red] (0,0.3) -- (0.3,0.6);
		\filldraw[red] (0.3,0.6)circle (1pt);
		\draw (0,0.3) -- (0,0.7);
		\filldraw (0,0.7)circle (1pt);
		\draw (0,0.3) -- (-0.3,0.6);
		\filldraw (-0.3,0.6)circle (1pt);
	\end{tikzpicture}
	\hspace{1cm}\text{or}\hspace{1cm}
	\begin{tikzpicture}[thick,scale=1.2]
		\draw[red] (0,0) -- (0,0.3);
		\draw[red] (0,0.3) -- (0.3,0.6);
		\filldraw[red] (0.3,0.6)circle (1pt);
		\draw[red] (0,0.3) -- (0,0.7);
		\filldraw[red] (0,0.7)circle (1pt);
		\draw (0,0.3) -- (-0.3,0.6);
		\filldraw (-0.3,0.6)circle (1pt);
	\end{tikzpicture}.
\end{equation*}
Furthermore, increasing by one the perturbative order in $\lambda$ translates into adding a vertex with this topology. Such result entails a constraint on the class of graphs involved in the construction of a solution and it allows us to perform an analysis with respect to the perturbative order. This differs from the approach adopted in \cite{Dappiaggi:2020gge}, where the authors consider a larger class, including also graphs which do not enter the construction. As a results we improve the bound on the renormalizability condition.

\begin{remark}
	Non contracted leaves correspond, at the level of distributions, to the multiplication by a smooth function $\eta\in\mathcal{E}(\mathbb{R}^{d+1})$, which does not contribute to the divergence of the graph. Hence we restrict ourselves to maximally contracted graphs, being aware that adding uncontracted leaves does not affect the degree of divergence.
\end{remark}

\begin{definition}
	A graph is said to be admissible if it derives from a graph associated to any perturbative coefficient $F_k$ via a maximal contraction of its leaves.
\end{definition}

Following the prescription presented above, we construct a graph $\mathcal{G}$ and an associated distribution $u_{\mathcal{G}}$ completely characterized by the following features:
\begin{itemize}
	\item $\mathcal{G}$ has $L$ edges and $N$ vertices of valence $1$ or $4$. Here the valence of a vertex is the number of edges connected to it.
	\item Each edge $e$ of the graph is the pictorial representation of a propagator $G(x_{s(e)},x_{t(e)})$, where $s(e)$ stands for the origin of $e$, while with $t(e)$ we denote the target,
	\item  the integral kernel of $u_{\mathcal{G}}$ is a product of propagators:
	\begin{equation}\label{eq_propagatori}
		u_{\mathcal{G}}(x_1,\dots,x_M):=\prod_{e\in E_{\mathcal{G}}}G(x_{s(e)},x_{t(e)}),
	\end{equation}
	where $E_{\mathcal{G}}=\{e_i, i=1,...,N\}$ represents the set of edges of $\mathcal{G}$.
\end{itemize}

When acting with $\Gamma_{\cdot_Q}^\mathbb{C}$ on elements of the perturbative expansion, we are working with distributions $u\in\mathcal{D}^\prime(U)$, $U\subseteq\mathbb{R}^{N(d+1)}$. As outlined in the proof of Theorem \ref{Thm: deformation map cdot}, the singular support of $u$ corresponds to \clacomment{$\Lambda_t^{N}=\{(t_1,x_1,\ldots,t_N,x_N)\in(\mathbb{R}^{d+1})^N\,\vert\,t_1=\ldots=t_N\}$} and renormalization can be dealt with via microlocal analytical techniques based on the calculation of the scaling degree of $u$ with respect to the submanifold $\Lambda_t^N$ of $\mathbb{R}^{N(d+1)}$. A direct computation yields\\
\clacomment{
\begin{equation}\label{Eq: Degree of divergence G}
	\rho_{\text{Diag}_{N(d+1)}}(u_\mathcal{G}):=Ld-2(N-1)\,,
\end{equation}
where $\rho_{\Lambda_t^N}(u_\mathcal{G})$ is the weighted degree of divergence of $u_{\mathcal{G}}$ with respect to $\Lambda_t^N$, while $2(N-1)$ is the weighted co-dimension of $\Lambda_t^N$, see \cite[App. B]{Dappiaggi:2020gge}.}

Our goal is to find an expression of Equation \eqref{Eq: Degree of divergence G} in terms of the perturbative order $k$. We divide the analysis in several steps.\\

\noindent \textbf{Step 1: L(k)}.\hspace{1mm} Since the number of edges is not affected by contractions into a single vertex, at this level we can ignore the action of $\Gamma_{\cdot_Q}$. We state that increasing by one the perturbative order corresponds to adding three edges arranged in one of the following admissible branches:
\begin{equation*}
	\begin{tikzpicture}[thick,scale=1.2]
		\draw[red] (0,0) -- (0.3,0.3);
		\filldraw[red] (0.3,0.3)circle (1pt);
		\draw (0,0) -- (0,0.4);
		\filldraw (0,0.4)circle (1pt);
		\draw (0,0) -- (-0.3,0.3);
		\filldraw (-0.3,0.3)circle (1pt);
	\end{tikzpicture},
	\hspace{1cm}
	\begin{tikzpicture}[thick,scale=1.2]
		\draw[red] (0,0) -- (0.3,0.3);
		\filldraw[red] (0.3,0.3)circle (1pt);
		\draw[red] (0,0) -- (0,0.4);
		\filldraw[red] (0,0.4)circle (1pt);
		\draw (0,0) -- (-0.3,0.3);
		\filldraw (-0.3,0.3)circle (1pt);
	\end{tikzpicture}\,,
\end{equation*}
that is, the number of vertices of valence $4$ coincides with the perturbative order. This implies that
\begin{equation}\label{Eq: L(k)}
	L(k)=3k+1.
\end{equation}

\noindent \textbf{Step 2: N(k)}.\hspace{0.1cm} The value $N(k)$ does not depend on how the branches are arranged. It is manifest that switching a branch from a vertex to another leaves $N(k)$ unchanged. For this reason in what follows we refer to tree-like graphs where all vertices of a layer are saturated before moving to the next one. As an example, we do not consider a graph with the following shape:
\begin{equation*}
	\begin{tikzpicture}[thick,scale=1.2]
		\draw (0,0) -- (0,0.3);
		\draw[red] (0,0.3) -- (0.5,0.8);
		\draw[red] (0.5,0.8) -- (0.2,1.2);
		\filldraw[red] (0.2,1.2)circle (1pt);
		\draw[red] (0.5,0.8) -- (0.5,1.3);
		\filldraw[red] (0.5,1.3)circle (1pt);
		\draw (0.5,0.8) -- (1,1.4);
		
		\draw (1,1.4) -- (0.7,1.8);
		\filldraw (0.7,1.8)circle (1pt);
		\draw (1,1.4) -- (1,1.9);
		\filldraw (1,1.9)circle (1pt);
		\draw[red] (1,1.4) -- (1.3,1.8);
		\filldraw[red] (1.3,1.8)circle (1pt);
		\draw (0,0.3) -- (0,0.7);
		\filldraw (0,0.7)circle (1pt);
		\draw (0,0.3) -- (-0.3,0.6);
		\filldraw (-0.3,0.6)circle (1pt);
	\end{tikzpicture}\,,
\end{equation*}
but rather the simpler
\begin{equation*}
	\begin{tikzpicture}[thick,scale=1.2]
		\draw (0,0) -- (0,0.3);
		\draw[red] (0,0.3) -- (0.5,0.8);
		\draw[red] (0.5,0.8) -- (0.2,1.2);
		\filldraw[red] (0.2,1.2)circle (1pt);
		\draw[red] (0.5,0.8) -- (0.5,1.3);
		\filldraw[red] (0.5,1.3)circle (1pt);
		\draw (0.5,0.8) -- (0.8,1.2);
		\filldraw (0.8,1.2)circle (1pt);
		\draw (0,0.3) -- (0,0.7);
		\filldraw (0,0.7)circle (1pt);
		
		\draw (0,0.3) -- (-0.5,0.8);
		\draw (-0.5,0.8) -- (-0.8,1.2);
		\filldraw (-0.8,1.2)circle (1pt);
		\draw (-0.5,0.8) -- (-0.5,1.3);
		\filldraw (-0.5,1.3)circle (1pt);
		\draw[red] (-0.5,0.8) -- (-0.2,1.2);
		\filldraw[red] (-0.2,1.2)circle (1pt);
	\end{tikzpicture}\,.
\end{equation*}
Thanks to an argument similar to that of the derivation of $L(k)$ and reasoning on these simplified configurations, one can infer that
\begin{equation*}
	N(k)=3k+1.
\end{equation*}
Yet, this holds true for non-contracted graphs, while we are looking for configurations showing the highest number of divergences. Since the number of vertices of valence 1 at fixed perturbative order $k$ is $2k+1$ and, in view of the observation that there remains always a single non-contracted leaf, performing all permissible contractions amounts to removing $\frac{1}{2}(2k+1-1)=k$ vertices, namely
\begin{equation}\label{Eq: N(k)}
	N(k)=3k+1-k=2k+1.
\end{equation}
\noindent \textbf{Step 3}.\hspace{0.1cm} We can express the perturbative order $k$ in terms of $N$ by inverting Equation \eqref{Eq: N(k)}, namely
\begin{equation}
	k=\frac{N-1}{2}\,.
\end{equation}
Since every admissible configuration has an odd number of vertices, $k(N)$ is an integer. Inserting this expression of the perturbative order in Equation \eqref{Eq: L(k)} we obtain
\begin{equation}\label{L(N)}
	L(N)=\frac{3}{2}N-\frac{1}{2}\,.
\end{equation}
We are interested in the scenarios when only a finite number of graphs shows a singular behaviour. In the present setting this condition translates into requiring that, for $N$ large enough, the degree of divergence of admissible graphs becomes negative.
\begin{theorem}
	If \clacomment{$d=1$}, only a finite number of graphs in the perturbative solution of Equation \eqref{Eq: Functional equation} needs to be renormalized.
\end{theorem}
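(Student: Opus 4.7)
The plan is to combine the two ingredients already assembled in Section \ref{Sec: graph}: the explicit counting formula \eqref{L(N)} for the number of propagator edges of an admissible, maximally contracted graph $\mathcal{G}$ with $N$ vertices, and the formula \eqref{Eq: Degree of divergence G} for the weighted scaling degree of the associated distribution $u_{\mathcal{G}}$ with respect to the submanifold $\Lambda_t^N$. Since an extension of $u_{\mathcal{G}}$ which preserves the scaling degree is automatically available once this degree is strictly smaller than the weighted codimension of $\Lambda_t^N$ (so that renormalization is genuinely needed only for graphs with $\rho_{\Lambda_t^N}(u_\mathcal{G})\geq 0$), it suffices to identify for which values of $d$ the quantity $\rho_{\Lambda_t^N}(u_\mathcal{G})$ becomes negative for all but finitely many admissible $N$.

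Concretely, I would substitute $L(N)=\tfrac{3}{2}N-\tfrac{1}{2}$ from Equation \eqref{L(N)} into the expression $\rho_{\Lambda_t^N}(u_\mathcal{G})=L\,d-2(N-1)$ from Equation \eqref{Eq: Degree of divergence G}, obtaining the linear polynomial in $N$
\begin{equation*}
\rho(N)\;=\;d\Bigl(\tfrac{3}{2}N-\tfrac{1}{2}\Bigr)-2(N-1)\;=\;\frac{3d-4}{2}\,N+\frac{4-d}{2}\,.
\end{equation*}
The condition that $\rho(N)\to-\infty$ as $N\to\infty$ therefore reduces to the sign of the slope $(3d-4)/2$, which is strictly negative precisely when $d<4/3$. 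Since $d\in\mathbb{N}$ and $d\geq1$, the only admissible case is $d=1$.

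For $d=1$ I would then exhibit the explicit bound $\rho(N)=-\tfrac{1}{2}N+\tfrac{3}{2}$, so that $\rho(N)<0$ as soon as $N\geq 4$. Recalling that admissible graphs satisfy $N=2k+1$, only the perturbative orders $k=0$ and $k=1$, corresponding to $N\in\{1,3\}$, can produce distributions $u_\mathcal{G}$ requiring a genuine renormalization step; all higher-order graphs yield locally integrable kernels and are extended trivially. Since at each such order only finitely many admissible tree topologies occur, the total number of graphs needing renormalization is finite, which is the claim.

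The only delicate point I foresee is the justification that the restriction to maximally contracted, admissible graphs really captures the worst-case divergence. This is however already taken care of by the remark in Section \ref{Sec: graph} that non-contracted leaves correspond to multiplication by smooth functions $\eta\in\mathcal{E}(\mathbb{R}^{d+1})$ and therefore do not contribute to the singular behaviour, together with the inductive characterisation of the perturbative coefficients as trees whose branches have the two admissible shapes; hence no combinatorial configuration outside the class enumerated by \eqref{L(N)} and \eqref{Eq: N(k)} can increase $\rho(N)$.
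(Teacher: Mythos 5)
Your proposal is correct and follows essentially the same route as the paper: substituting $L(N)=\tfrac{3}{2}N-\tfrac{1}{2}$ into $\rho(u_{\mathcal{G}})=Ld-2(N-1)$, reading off that the coefficient $\tfrac{3}{2}d-2$ of $N$ must be negative, and concluding $d<\tfrac{4}{3}$, hence $d=1$. Your additional explicit bound $\rho(N)=-\tfrac{1}{2}N+\tfrac{3}{2}$ for $d=1$, isolating $N\in\{1,3\}$ as the only orders requiring renormalization, is a welcome sharpening that the paper only alludes to in the subsequent remark.
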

\begin{proof}
	\clacomment{
	In view of Equation \eqref{Eq: Degree of divergence G} and of the relations derived in the preceding steps, it descends
	\begin{equation}\label{Eq: rho}
		\begin{split}
			\rho(u_{\mathcal{G}})&=Ld-2(N-1)=\Bigl(\frac{3}{2}N-\frac{1}{2}\Bigr)d-2(N-1)\\
			&=N\Bigl(\frac{3}{2}d-2\Bigr)-\frac{d}{2}+2\,.
		\end{split}
	\end{equation}
	For the condition $\rho(u_{\mathcal{G}})<0$ to occur, the coefficient in front of $N$ must be negative, so that for sufficiently large $N$ the inequality holds true.} \clacomment{This translates to the inequality
	\begin{equation}\label{Eq: Subcritical regime}
		\frac{3}{2}d-2<0\;\Rightarrow\;d<\frac{4}{3}\,.
	\end{equation}
}
\end{proof}
\begin{remark}
	Once the spatial dimension $d$ has been fixed, Equation \eqref{Eq: rho} allows us to find the minimal number of vertices needing renormalization and for which further divergences do not occur. 
	Subsequently, via Equation \eqref{Eq: N(k)}, we can also infer the maximum number of graphs requiring renormalization.
\end{remark}

\begin{remark}
Even if in our analysis we focused on the stochastic Schr\"odinger equation with a cubic nonlinearity, this approach to the study of the subcritical regime can be extended to a more general polynomial potential of the form $\vert\psi\vert^{2\kappa}\psi$ with $\kappa>1$, as in Equation \eqref{Eq: Stochastic NLS}. At a graphical level the net effect is that admissible graphs present vertices of valence at most $2\kappa+2$. Hence the derivation of Equation \eqref{Eq: Subcritical regime} follows slavishly the preceding steps, yielding that for $d=1$ we are still considering a subcritical regime, no matter the value of $\kappa$.
\end{remark}

\paragraph{Acknowledgements.}
We are thankful to N. Drago and V. Moretti for helpful discussions. A.B. is supported by a PhD fellowship of the University of Pavia, while P.R by a postdoc fellowship of the Institute for Applied Mathematics of the University of Bonn.

\end{document}